\documentclass[acmsmall,nonacm]{acmart}

\settopmatter{printacmref=false}

\usepackage{booktabs}
\usepackage{graphicx}
\usepackage{subcaption}
\usepackage{tikz}
\usepackage{xcolor}
\usetikzlibrary{tikzmark}
\usetikzlibrary{calc,fit,positioning}
\usepackage{mdframed}
\usepackage[normalem]{ulem}

\title{On the Expressiveness of Languages for Querying Property Graphs in Relational Databases}

\author{Hadar Rotschield}
\affiliation{
 \institution{School of Computer Science, Hebrew University}
 \city{Jerusalem}
 \country{Israel}
}
\email{hadar.rotschield@mail.huji.ac.il}

\author{Liat Peterfreund}
\affiliation{
 \institution{School of Computer Science, Hebrew University}
 \city{Jerusalem}
 \country{Israel}
}
\email{liat.peterfreund@mail.huji.ac.il}

\setlength{\marginparwidth}{2cm}

\usepackage[disable]{todonotes}

\definecolor{green}{RGB}{0,120,0}
\definecolor{liateditcolor}{RGB}{244,210,255}
\definecolor{hadareditcolor}{RGB}{244,210,0}

\newcommand{\OMIT}[1]{}

\newcommand{\pat}{\psi}
\newcommand{\pfunc}{\mathrel{\rightharpoonup}}

\usepackage[final]{listings}
\usepackage{enumitem}

\usepackage{color}
\definecolor{gray}{RGB}{26,30,35}
\definecolor{darkblue}{RGB}{68,89,197}
\definecolor{darkred}{rgb}{0.45,0,0}
\definecolor{darkgreen}{RGB}{0, 156, 0}
\definecolor{darkpurple}{RGB}{120, 0, 180}
\definecolor{ForestGreen}{RGB}{34,139,34}

\renewcommand{\phi}{\varphi}

\newcommand{\df}{:=}
\newcommand{\query}{\mathcal Q}

\newcommand{\db}{\mathcal{D}}
\newcommand{\gdb}{G}

\newcommand{\tup}{\mu}

\newcommand{\join}{\bowtie}

\newcommand{\tupunion}{\join}

\newcommand{\arity}[1]{\mathsf{arity}(#1)}

\newcommand{\dom}[1]{\mathsf{dom}(#1)}

\newcommand{\lbl}{\mathsf{lab}}
\newcommand{\src}{\mathsf{src}}
\newcommand{\tgt}{\mathsf{tgt}}

\newcommand{\prop}{\mathsf{prop}}

\usepackage{tikz}
\usetikzlibrary{calc}
\makeatletter
\let\pgf@orig@path\path
\makeatother
\renewcommand{\path}[1]{\mathsf{path}(#1)}

\newcommand{\concat}{\cdot}

\newcommand{\sch}[1]{\mathsf{fv}\left(#1\right)}
\newcommand{\schb}[1]{\mathsf{fv}\Big(#1\Big)}

\newcommand{\Nodeset}{\mathcal{N}}

\newcommand{\Edgeset}{\mathcal{E}}

\newcommand{\labelset}{\mathcal{L}}
\newcommand{\keyset}{\mathcal{K}}
\newcommand{\constset}{\mathcal C}

\newcommand{\Vars}{\mathsf{Vars}}


\newcommand{\sem}[1]{\left\llbracket#1\right\rrbracket}
\newcommand{\semd}[1]{\left\llbracket#1\right\rrbracket_{\db}}


\newcommand{\op}{\circ}

\theoremstyle{remark}
\newtheorem{remark}{Remark}[section]

\newcommand{\ADOM}{\operatorname{adom}}

\theoremstyle{definition}
\newtheorem{definition}{Definition}[section]
\newtheorem{theorem}{Theorem}[section]
\newtheorem{lemma}[theorem]{Lemma}

\newcommand{\pgView}{\mathsf{pgView}}

\newcommand{\patExtended}[1]{\pat^{\mathsf{ext}}_{#1}}
\newcommand{\pgViewExtendedUnion}{\pgView^{\mathsf{ext}}}

\newcommand{\localsch}{\mathcal S}




\newenvironment{repeatresult}[2]{\vskip0.5em\par\textsc{#1 #2.}\em}{\vskip1em}

\newenvironment{reptheorem}[1]{\begin{repeatresult}{Theorem}{#1}}{\end{repeatresult}}


\newcommand{\return}{\Omega}


\usepackage{textcomp}
\newcommand{\ttpcr}{\renewcommand{\ttdefault}{pcr}\ttfamily}

\newcommand{\sqlkw}[1]{\text{\normalfont\small\ttpcr\bfseries\color{darkblue} #1}}

\lstset{%
  language=SQL,%
  basicstyle=\small\ttpcr,%
  keywordstyle=\bfseries\color{blue},%
  morekeywords={REFERENCES,IS,GRAPH_TABLE,COLUMNS,CREATE,PROPERTY,GRAPH,VERTEX,EDGE,TABLES,PROPERTIES,SOURCE,DESTINATION,LABEL,WITH},%
  deletekeywords={YEAR,ADD},%
  tabsize=4,
  keepspaces=true,
  showstringspaces=false,
  upquote=true,
}

\lstnewenvironment{sql}[1][]{%
  \lstset{%
    language=SQL,%
    basicstyle=\small\color{gray}\ttpcr,%
    keywordstyle=\bfseries\color{darkblue},%
    morekeywords={REFERENCES,IS,COLUMNS,GRAPH_TABLE,COLUMNS,CREATE, PROPERTY,GRAPH,VERTEX,EDGE,TABLES,PROPERTIES,SOURCE,DESTINATION,LABEL,WITH},%
    deletekeywords={YEAR,ADD},%
  }%
  \lstset{#1}%
}{}
\lstdefinestyle{SQLInline}{
  language=SQL,
  basicstyle=\small\ttfamily,
  keywordstyle=\bfseries\color{darkblue},
}
\DeclareRobustCommand{\sqlinline}[1]{\lstinline[style=SQLInline]!#1!}

\newcommand{\patExtendedN}[1]{\pat^{(n)}_{#1}}
\newcommand{\pgqextn}{\ensuremath{\textsc{PGQ}^{n}}}
\newcommand{\pgqext}[1]{\ensuremath{\textsc{PGQ}^{#1}}}
\newcommand{\fotcn}{\fotc{n}}
\newcommand{\fotc}[1]{\mathrm{FO[TC^{#1}]}}
\newcommand{\tc}[1]{\mathrm{TC^{#1}}}

\lstnewenvironment{gql}[1][]{%
  \lstset{%
    language=SQL,%
    basicstyle=\small\color{gray}\ttpcr,%
    keywordstyle=\bfseries\color{darkblue},%
    morekeywords={REFERENCES,IS,COLUMNS,GRAPH_TABLE,COLUMNS,CREATE, PROPERTY,GRAPH,VERTEX,EDGE,TABLES,PROPERTIES,SOURCE,DESTINATION,MATCH,FILTER,RETURN,WITH},%
    deletekeywords={YEAR},%
  }%
  \lstset{#1}%
}{}

\newcommand{\nlog}{\ensuremath{\textsc{NL}}}
\newcommand{\ptime}{\ensuremath{\textsc{P}}}

\newcommand{\fotcc}{\ensuremath{\textsc{FO[TC]}}}

\newcommand{\currentsidemargin}{%
  \ifodd\value{page}\oddsidemargin\else\evensidemargin\fi
}
\newlength{\whatsleft}
\newcommand{\measureremainder}[1]{%
\begin{tikzpicture}[overlay,remember picture]
  \let\path\pgf@orig@path
  \path (current page.north west) ++(\hoffset, -\voffset)
    node[anchor=north west, shape=rectangle, inner sep=0, minimum width=\paperwidth, minimum height=\paperheight]
    (pagearea) {};
  \path (pagearea.north west) ++(1in+\currentsidemargin,-1in-\topmargin-\headheight-\headsep)
    node[anchor=north west, shape=rectangle, inner sep=0, minimum width=\textwidth, minimum height=\textheight]
    (textarea) {};
  \path let \p0 = (0,0), \p1 = (textarea.east) in
    [/utils/exec={\pgfmathsetlength#1{\x1-\x0}\global#1=#1}];
\end{tikzpicture}%
}

\theoremstyle{definition}


\newcommand{\rwpgq}{\ensuremath{\textsc{PGQ}^{\textsc{rw}}}}
\newcommand{\rpgq}{\ensuremath{\textsc{PGQ}^{\textsc{ro}}}}
\newcommand{\erwpgq}{\ensuremath{\textsc{PGQ}^{\textsc{ext}}}}


\newcommand{\schema}{\mathcal{S}}
\newcommand{\relations}{\mathcal{R}}
\newcommand{\propset}{\mathcal{P}}
\newcommand{\adom}{\mathsf{adom}}




\newcommand{\lab}{\operatorname{lab}}



\newcommand{\barx}{\bar{x}}

\ccsdesc[500]{Information systems~Graph-based database models}
\ccsdesc[500]{Theory of computation~Logic and databases}
\ccsdesc[500]{Information systems~Query languages}
\ccsdesc[500]{Information systems~Data management systems}
\ccsdesc[500]{Theory of computation~Pattern matching}

\keywords{Graph Query Languages, GQL Standard, SQL/PGQ, Property Graph Model, Expressive Power}

\begin{document}

\begin{abstract}

SQL/PGQ is the emerging ISO standard for querying property graphs defined as views over relational data. We formalize its expressive power across three fragments: the read‐only core, the read‐write extension, and an extended variant with richer view definitions. Our results show that graph creation plays a central role in determining the expressiveness.

The read‐only fragment is strictly weaker than the read‐write fragment, and the latter is still below the complexity class $\nlog$. Extending view definitions with arbitrary arity identifiers closes this gap: the extended fragment captures exactly $\nlog$. This yields a strict hierarchy of SQL/PGQ fragments, whose union covers all $\nlog$ queries. On ordered structures the hierarchy collapses: once arity-2 identifiers are allowed, higher arities add no power, mirroring the classical transitive-closure collapse and underscoring the central role of view construction in property graph querying.
\end{abstract}

\maketitle

\lstdefinelanguage{SQL}{
  morekeywords={
   SELECT, CREATE, NODES, TABLE, EDGES, TARGET, LABELS, FROM, WHERE, GRAPH_TABLE, MATCH, SOURCE, KEY, DESTINATION, REFERENCES, LABEL, NODE, EDGE, VERTEX, TABLES, CREAT, PROPERTY, GRAPH, RETURN, JOIN, PROPERTIES, ON, DISTINCT, WITH, AS, UNION, ALL, EXISTS, ANY, RECURSIVE, ORDER, BY, AND
  },
  sensitive=false,
  morecomment=[l]{--},
  morestring=[b]',
}
\lstset{
  basicstyle=\footnotesize\ttfamily,
  keywordstyle=\color{blue},
  commentstyle=\color{gray},
  stringstyle=\color{red},
  numbers=left,
  numberstyle=\tiny,
  stepnumber=1,
  numbersep=10pt,
  breaklines=true,
  showstringspaces=false,
  frame=none
}

\section{Introduction}
\label{sec:intro}
Property graphs are now widely used in industry, supporting key applications in analytics, fraud detection, and recommendation systems~\cite{srivastava2023fraud,mirza2003studying}.
Property graphs have evolved from simpler graph data models such as edge-labeled graphs and RDF triples, offering a richer and more flexible structure~\cite{LDBC:TR:TR-2021-01}. In a property graph, both nodes and edges can be annotated with labels that can be thought of as types, and can carry arbitrary sets of key-value pairs as properties~\cite{surveyChile}. This design allows property graphs to model heterogeneous and complex domains naturally.
Driven by the wide adoption of property graphs in real-world applications, the International Organization for Standardization (ISO), together with partners from academia and industry, is developing a new standard: GQL~\cite{GQLStandards}. The standard has two components: a standalone graph query language GQL, and SQL/PGQ, which integrates graph querying into relational databases. As these standards take shape, there is a growing need for solid theoretical foundations to guide their development and support the design of robust, graph-aware relational engines.

In a similar spirit to~\cite{WhatGoesAround}, we observe that recent developments in graph data management follow a familiar historical trajectory: instead of replacing the relational model, new paradigms like property graphs are being integrated into it. SQL continues to evolve by incorporating core ideas from alternative models, now through extensions like SQL/PGQ, supporting property graph querying natively within the relational framework.

Motivated by this ongoing convergence of graph and relational models, we take a step towards a deeper understanding of SQL/PGQ by formally analyzing its expressive power and identifying the key factors that influence it.

\paragraph*{SQL/PGQ and Property Graph Views}
SQL/PGQ is an extension of SQL for querying property graphs defined as relational views.
The language operates across three conceptual layers: (i) a pattern matching layer, where graph patterns are matched against the view-defined graph resulting in relations; (ii) a relational algebra layer, which manipulates the results of pattern matching using standard SQL constructs; and (iii) a view creation layer, where the actual graph structure, that is, nodes, edges, labels, and properties, is constructed. 
While layers (i) and (ii) have received attention in previous works, the view creation step (iii) has remained under-explored. 

The role of view creation in the read-write fragment of SQL/PGQ is significant.  Intermediate relational results may themselves be used to generate new graphs, enabling recursive-like compositions that entangle all three layers. Our key observation is that the expressive power of SQL/PGQ is ultimately governed by the flexibility given in layer (iii), that is, its graph view definitions. 
Understanding and formally capturing this mechanism is essential for characterizing the expressiveness.

\begin{example}[Graph View: Transfers Between Bank Accounts]\label{ex:bank-transfers}
Consider a relational schema that consists of the relations
\begin{itemize}
    \item \sqlinline{Account(iban)} that lists all bank accounts by their unique IBAN, and 
    \item 
     \sqlinline{Transfer(t\_id, src\_iban, tgt\_iban, ts, amount)} that records transfers identified uniquely by \sqlinline{t\_id} from \sqlinline{src\_iban} to \sqlinline{tgt\_iban} at time \sqlinline{ts}, with amount \sqlinline{amount}.
\end{itemize}
To analyze suspicious sequences of transfers, we can define the following SQL/PGQ graph view:
\begin{center}   
\begin{minipage}{0.8\linewidth}   
\begin{lstlisting}[language=SQL, numbers=none]
CREATE PROPERTY GRAPH Transfers (
  NODES TABLE 
    Account KEY (iban) LABEL Account,
  EDGES TABLE 
    Transfer KEY (t_id)
        SOURCE KEY src_iban REFERENCES Account  
        TARGET KEY tgt_iban REFERENCES Account  
        LABELS Transfer PROPERTIES (ts, amount) );
\end{lstlisting}
\end{minipage}
\end{center}

This creates a (directed) property graph in which
nodes are accounts (identified by \sqlinline{iban}), and edges represent transfers (identified by \sqlinline{t\_id}) from sender to recipient. Edges are labeled and store \sqlinline{ts} and \sqlinline{amount} as properties. On this property graph SQL/PGQ applies pattern matching. 
\end{example}

\paragraph{Problem Statement}
In this work, we take a step toward understanding the expressive power of SQL/PGQ by focusing on the role of graph view definitions. Prior abstractions of SQL/PGQ were proposed~\cite{pods23,icdt23}, with the most recent effort~\cite{vldb25} extracting the core of the read-only fragment but overlooking the graph view construction layer. Our first goal is to provide a precise formalization of the read-only fragment, $\rpgq$, consistent with the standard~\cite{sqlpgq-standard}, and to fill this gap in layer~(iii) by making explicit how graphs are derived from relational data. Building on this formalization, we compare the expressive power of the read-only $\rpgq$ and the read-write $\rwpgq$ fragments. We show that while the read-write variant $\rwpgq$ allows for graph views to be constructed and queried dynamically, enabling more expressive compositions, it still does not capture all queries in $\nlog$, a natural benchmark for expressive power and complexity for graph query languages.

To explain this gap, we identify key features that remain missing from SQL/PGQ's current design, most notably the lack of tuple-based identifiers and richer view constructs. By addressing these limitations, we define an extended read-write fragment, $\erwpgq$, that supports more powerful graph constructions.
This leads to a refined classification of the expressive capabilities of SQL/PGQ as we formally characterize the boundaries between $\rpgq$, $\rwpgq$, and $\erwpgq$.

\paragraph*{Our Contributions.}
This work provides a precise characterization of the expressive power of SQL/PGQ by isolating the role of graph view definitions. Specifically:

\begin{enumerate}
  \item \textbf{Semantics of SQL/PGQ fragments.}  
  We define the formal syntax and semantics of three fragments: the read-only fragment $\rpgq$, the read-write fragment $\rwpgq$, and the extended fragment $\erwpgq$, which supports graph views with arbitrary $n$-ary node and edge identifiers.

  \item \textbf{Capturing $\nlog$ with view-based recursion.}  
  We show that $\erwpgq$ captures exactly the expressive power of $\fotcc$ (first-order logic with transitive closure), and thus all of $\nlog$ on ordered structures. In contrast, $\rwpgq$ remains strictly below this bound, establishing the separation $\rpgq \subsetneq \rwpgq \subsetneq \erwpgq = \fotcc = \nlog$.

 \item \textbf{Arity-based expressiveness.}  
We identify a hierarchy within $\erwpgq$ based on the maximum arity $n$ of node and edge identifiers. Each level $\pgqextn$ corresponds exactly to $\fotcn$, the fragment of first-order logic with $n$-ary transitive closure. The hierarchy is strict up to arity 2, where it collapses (on ordered structures): higher arities add no expressive power. Thus, arity-2 identifiers suffice to capture the full power of $\erwpgq$.

\item \textbf{Descriptive complexity.}  
We establish tight data complexity bounds for all fragments, situating SQL/PGQ’s expressive power within standard complexity classes and connecting it to descriptive complexity theory.

\end{enumerate}

\paragraph*{Technical Preview.}
We link SQL/PGQ to first-order logic with transitive closure (\fotcc) through constructive, bidirectional translations in the spirit of~\cite{FigueiraLinPeterfreund24RelPersp}. This lets us transfer results both ways and isolate what each fragment can express.  Using these translations, we exhibit concrete separator queries: each lies in the stronger fragment but is provably absent from the weaker, pinpointing the impact of view creation and $n$-ary identifiers on recursion.

\paragraph*{Related Work.}
Early foundational work on graph querying centered on \emph{regular path queries} (RPQs) and their conjunctive and two-way extensions~\cite{RPQ,BarceloLLW-tods12,BarceloLR-jacm14}.  
Nested regular expressions further enrich RPQs with controlled nesting and synchronization~\cite{NRE}, and {regular queries} offer a non-recursive Datalog layer and  capture $\nlog$~\cite{reutter2017rq}.  
These formalisms assume \emph{edge-labeled graphs} without node or edge properties, unlike the property graph model  adopted by, e.g., Cypher, G-CORE, and PGQL~\cite{openCypher,gcore,PGQL}.  

Beyond these classical foundations, recent efforts have sought to better align theoretical formalisms with the richer features of practical languages. In particular, dl-CRPQs extend conjunctive regular path queries with node and edge treatment, list variables, path modes, and data filters, thereby mirroring key design choices in Cypher, SQL/PGQ, and GQL~\cite{LibkinMMPV25}. This work situates modern languages within automata-theoretic foundations and highlights opportunities for optimization. 

Our work closes a complementary key gap by isolating two features missing from earlier formal models, view creation and $n$-ary identifiers, and showing that their addition lifts SQL/PGQ to the full power of $\fotcc$. 
This means~\erwpgq~captures exactly the \nlog~queries, positioning it firmly within a well-studied expressiveness class. Importantly, this is achieved without relying on general fixed-point logic, keeping evaluation complexity strictly below \ptime. In contrast to earlier results focused on edge-labeled graphs, our analysis establishes \nlog~expressiveness in the richer settings of property graphs.

\paragraph*{Paper Organization.}  
Section~\ref{sec:prelims} covers background on property graphs and pattern matching. Section~\ref{sec:read-only-pgq} defines the read-only fragment $\rpgq$, and Section~\ref{sec:rwpgq} introduces the read-write fragment $\rwpgq$ and compares its expressiveness to $\rpgq$. Section~\ref{sec:ext-rwpgq} defines the extended fragment $\erwpgq$, and Section~\ref{sec:expressiveness} characterizes its expressive power. 
Section~\ref{subsec:std-gap} discusses deviations of our formalization from the standard, and 
Section~\ref{sec:conclusion} concludes with open research directions. Full proofs and additional details are in the Appendix of the full version of the paper.

\section{Preliminaries}
\label{sec:prelims}

We review the basic definitions of property graphs and relational structures, and describe the pattern matching layer that extracts relations from property graphs.

\subsection{Property Graphs and Relational Structures}

\paragraph*{Property Graphs}
We use the standard definition of property graphs~\cite{pods23}.
Assume pairwise
disjoint countable sets $\labelset$ of labels, $\keyset$ of keys,
$\propset$ of properties, $\Nodeset$ of node IDs, and $\Edgeset$ of
edge IDs.

\begin{definition}[Property Graphs]
	\label{def:pg}
	A property graph  is a tuple 
	$
	\gdb = \langle N, E, \src, \tgt, \lbl, \prop\rangle
	$
	where
	\begin{itemize}
		\item $N \subset \Nodeset$ is a finite set of node IDs used in $\gdb$;
		\item $E \subset \Edgeset$ is a finite set of directed edge IDs used in $\gdb$;
		\item $\src, \tgt: E \to N$ define source and target of an edge;
        	\item $\lbl:  N  \cup  E \to 2^{\labelset}$  
		is a labeling function that associates with every node or edge ID a (possibly empty) finite set
		of labels from $\labelset$;
		\item $\prop: (N  \cup  E) \times \keyset \rightharpoonup \propset$ is a finite partial
		function that associates a property value
        with a node/edge id and a key. 
	\end{itemize}
\end{definition}

\paragraph*{Relational Structures} 

We assume two infinite sets:   
 the set \(\relations\) of relation names $R$, and the set $\constset$ of domain elements.  We follow the unnamed perspective, and associate 
each relation name \(R \in \relations\) with a positive integer   $\arity{R}$.
A \emph{(database) schema} \(\schema \subseteq \relations\) is a finite set of relation names.  
A \emph{relation} $\mathbf{R}$ over a relation name $R$ is a finite set of \emph{tuples} in $\constset^{\arity{R}}$.
A \emph{database (instance)} 
$\db$ over a schema 
$\schema$
assigns to each relation name 
$R\in \schema$ a 
\emph{relation}
$R^{\db}$ over $R$.
The active domain of $\adom(\db)$ of $\db$ consists of all constants that appear in $\db$.
\begin{remark}
    Throughout the paper, we assume our structures are ordered, reflecting systems where order is implicit and aligning with standard practice in database theory.
\end{remark}

\paragraph{Relations as (partial) functions}
A relation \( R_f \) is said to \emph{encode a function} \( f : X \to Y \) if
$
\arity{R_f} =  \arity{X}+\arity{Y},
$
and for every \(x \in X\), there exists a unique \(y \in Y\) such that \((x, y) \in R_f\).
Similarly, \( R_f \) encodes a \emph{partial function} \( f : X \rightharpoonup Y \) if the same condition on the arities holds, and for every \(x \in X\), there exists at most one \(y \in Y\) such that \((x, y) \in R_f\).

\subsection{Pattern Matching: From Property Graphs to Relations}
Pattern matching is the key component of graph query languages as it extracts relations from property graphs.
We use the refined definitions of core PGQ from~\cite{vldb25}.
We fix an infinite set $\Vars$ of
variables and define \emph{patterns} $\pat$ and \emph{output patterns} $\pat_\return$
 in Figure~\ref{fig:pattern-syntax}.

\begin{figure*}[t]
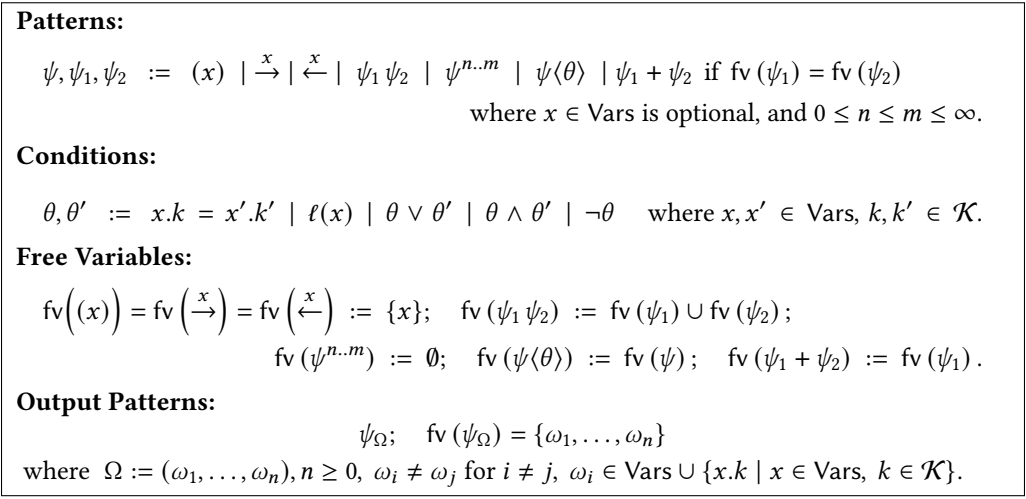

\centering
\fbox{
\begin{minipage}{0.95\textwidth}
\textbf{Patterns:}
\begin{multline*}
    	\pat, \pat_1, \pat_2 \ \ \df \ \  (x) \ \mid \
	\overset{x}{\rightarrow} \ \mid \
	\overset{x}{\leftarrow} \ \mid \
	\pat_1\, \pat_2  \ \mid\  
	\pat^{n..m} \ \mid\ 
	\pat\langle\theta \rangle \ \mid 
	\pat_1 + \pat_2 \,\text{ if }\, \sch{\pat_1}=\sch{\pat_2}\\
\text{where } x \in \Vars \text{ is optional, and }  0 \leq n \leq m \leq \infty.
\end{multline*}
\smallskip
\textbf{Conditions:}
\begin{multline*}
    \theta, \theta' \ \df \ x.k=x'.k' \mid
\ell(x) \mid 
\theta \vee \theta'
\mid
\theta \wedge \theta'
\mid \neg \theta \quad  \text{ where }
 x,x'\in\Vars, \, k,k'\in\keyset.
\end{multline*}

\textbf{Free Variables:}
\begin{multline*}
\schb{(x)} = \sch{\overset{x}{\rightarrow}} =  \sch{\overset{x}{\leftarrow}} \ \df \ \{x\}; \quad 
\sch{ \pat_1\, \pat_2 } \ \df\ \sch{\pat_1}\cup\sch{ \pat_2 }; \\ 
\sch{   \pat^{n..m} } \ \df\ \emptyset; \quad
\sch{\pat\langle\theta \rangle}  \ \df\ \sch{\pat};\quad
 \sch{\pat_1 + \pat_2} \ \df\  \sch{\pat_1}.
\end{multline*}

\textbf{Output Patterns:}
\[
\pat_\return; \quad \sch{\pat_\return}=\{ \omega_1,\ldots, \omega_n\}
\]
\text{ where }
$\return \df (\omega_1, \ldots, \omega_n), 
n \geq 0,\ \omega_i \neq \omega_j\ \text{for } i \neq j,\ \omega_i \in \Vars \cup \{x.k \mid x \in \Vars,\ k \in \keyset\}.
$

\end{minipage}
}
\caption{Syntax of Patterns and Output Patterns.}
\Description{Syntax of patterns and their free variables in SQL/PGQ.}
\label{fig:pattern-syntax}
\end{figure*}

\subsection{Pattern Matching Semantics}
\label{sec:pattern-semantics}
In Figure~\ref{fig:sem-endpoints}, we present  core PGQ pattern matching semantics as originally introduced in~\cite{vldb25}, and later further simplified in~\cite{FigueiraLinPeterfreund24RelPersp}.

\subsubsection{Patterns}
The semantics $\sem{\pat}_G$ of a pattern $\pat$ on $G$ is defined as a set of triples \((s, t, \mu)\), where \(s\) and \(t\) are the source and target nodes of a path matching $\pat$, and \(\mu\) is a \emph{variable mapping} assigning matched graph elements (that is, nodes and edges) to the pattern’s free variables.\footnote{Our simplification is based on the fact that the semantics avoids storing the full path, and instead it preserves the key information needed for composing patterns, that is, its source and target.} 

The semantics uses the following definitions:
\paragraph{Operations on variable mappings.}
We use standard operations on mappings in the semantics.  
Given a mapping \(\mu\) and a subset \(X\subseteq \dom{\mu}\), the restriction \(\mu \restriction X\) is the mapping with domain \(X\) such that \((\mu \restriction X)(x) \df \mu(x)\) for all \(x \in X\).
We write $\mu_{\emptyset}$ to denote the mapping with an empty domain.

Two mappings \(\mu_1\) and \(\mu_2\) are \emph{compatible}, written \(\mu_1 \sim \mu_2\), if they agree on all common variables. Their \emph{union}, written \(\mu_1 \tupunion \mu_2\), is the mapping defined over \(\dom{\mu_1} \cup \dom{\mu_2}\), with values taken from \(\mu_1\) and \(\mu_2\) accordingly.

\paragraph{Condition Satisfaction.}
A mapping \(\mu\) satisfies a condition \(\theta\), written \(\mu \models \theta\), as follows:
\begin{itemize}
  \item \(\mu \models x.k = x'.k'\) if both \(\prop(\mu(x), k)\) and \(\prop(\mu(x'), k')\) are defined and equal;
  \item \(\mu \models \ell(x)\) if \(\ell \in \lbl(\mu(x))\).
\end{itemize}
Satisfaction extends to Boolean combinations of conditions with the standard semantics of \(\wedge\), \(\vee\), \(\neg\).

\subsubsection{Output Patterns}

The semantics $\sem{\pat_{\return}}_G$ of output patterns $\pat_{\return}$ on $G$ is a set of tuples:
\[
(\mu_{\return}(\omega_1), \ldots, \mu_{\return}(\omega_n)),
\]
where each \(\mu_{\return}(\omega_i)\) is either a node identifier, an edge identifier, or a property value. This set is finite because 
\(\mu_{\return}\)
ranges over mappings with finite domain and finite codomain.

To ensure compatibility with the relational model introduced in Section~\ref{sec:prelims}, we assume without loss of generality that
$
\Nodeset \cup \Edgeset \cup \propset \subseteq \constset.
$
That is, all output values belong to the relational domain \(\constset\), so that the result of evaluating an output pattern can be interpreted as a relation. This aligns with the relational layer of PGQ, which allows further use of relational algebra on the output of pattern matching.

\begin{example}[Pattern Matching on Transfers Graph]\label{ex:pattern-transfer-interval}
Consider the property graph defined in Example~\ref{ex:bank-transfers}. Suppose we want to find pairs of account IBANs \((x,y)\) such that there exists a non-empty sequence of transfers from \(x\) to \(y\) where each transfer in this sequence has an amount greater than $100$. 
To this end, we use the following output pattern:
\[
   \left( (x) \overset{t}{\rightarrow}^{1,\infty}(y) \langle \theta \rangle \right)_{x.\texttt{iban},y.\texttt{iban}} \quad  \text{where} \quad 
    \theta \df \texttt{Transfer}(t) \land t.\texttt{amount}>100
\]
which is written in SQL/PGQ as follows:
\begin{center}   
\begin{minipage}{0.8\linewidth}
\begin{lstlisting}[language=SQL, numbers=none]
SELECT * 
FROM GRAPH_TABLE (Transfers MATCH (x) -[t:Transfer]-> +(y) 
                            WHERE t.amount > 100
                            RETURN (x.iban,y.iban) );
\end{lstlisting}
\end{minipage}
\end{center}
\end{example}
\begin{figure*}[t]
\centering
\setlength{\tabcolsep}{0pt}
\fbox{
$
\begin{array}{rl}
\multicolumn{2}{l}{\textbf{Patterns:}}\\[2pt]
\sem{(x)}_G
  &\!:=\!
    \bigl\{\, (n,n,\{x\mapsto n\}) \mid n\in N \bigr\}
\\
\sem{\overset{x}{\rightarrow}}_G
  &\!:=\!
    \bigl\{\, (n_1,n_2,\{x\mapsto e\})
          \mid e\in E,\; \src(e)=n_1,\; \tgt(e)=n_2 \bigr\}
\\
\sem{\overset{x}{\leftarrow}}_G
  &\!:=\!
    \bigl\{\, (n_2,n_1,\{x\mapsto e\})
          \mid e\in E,\; \src(e)=n_1,\; \tgt(e)=n_2 \bigr\}
\\
\sem{\pat_1 + \pat_2}_G
  &\!:=\!
    \sem{\pat_1}_G \cup \sem{\pat_2}_G
\\
\sem{\pat_1\,\pat_2}_G
  &\!:=\!
    \bigl\{\, (s_1,t_2,\mu_1 \join \mu_2)
          \,\big|\,
          (s_1,n,\mu_1)\in\sem{\pat_1}_G,\;
          (n,t_2,\mu_2)\in\sem{\pat_2}_G,\mu_1 \sim \mu_2\bigr\}
\\
\sem{\pat\langle\theta\rangle}_G
  &\!:=\!
    \bigl\{\, (s,t,\mu)\in\sem{\pat}_G \mid \mu\models\theta \bigr\}
\\
\sem{\pat^{n..m}}_G
  &\!:=\!
    \displaystyle\bigcup_{i=n}^{m} \sem{\pat}^{\,i}_G \quad {\text{where}\quad
  \sem{\pat}^{\,0}_G := \bigl\{\, (n,n,\mu_\emptyset) \mid n\in N \bigr\}, \quad \text{and for } n>0}
\\
  \sem{\pat}^{\,n}_G   &\!:=\! \bigl\{
      (s_{1},t_{n},\mu_\emptyset)
      \ \Bigm|\ 
      \exists \mu_{1},\ldots,\mu_{n}:\;
             (s_{i},t_{i},\mu_{i}) \in \sem{\pat}_G
             \;\text{and}\;
             t_{i}=s_{i+1}\;\text{for all } i < n
    \bigr\},
\\[2pt]
\multicolumn{2}{l}{\textbf{Output Patterns:}}\\[2pt]
\sem{\pat_\return}_G
  &\!:=\!
    \bigl\{\,
(\mu_\return(\omega_1),\ldots , \mu_\return(\omega_n))
    \mid \exists s,t:\,(s,t,\mu)\in\sem{\pat}_G, \return \df (\omega_1,\ldots,\omega_n) \bigr\}
\end{array}
$}
\caption{Semantics of Patterns and Output Patterns.} 
\Description{Semantics of graph patterns using source and target node semantics.}
\label{fig:sem-endpoints}
\end{figure*}
\subsection{Expressiveness and Evaluation Complexity}
\label{sec:query-languages-expressivness}

\paragraph*{Query Evaluation Problem}
For a query language~$\mathcal{L}$, the \emph{evaluation problem} asks:  
given a  query $\query \in \mathcal{L}$, a database $\db$, and a candidate tuple~$\bar{a}$,  
decide whether $\bar{a}$ is in the result of evaluating $\query$ on $\db$. 
We focus on the \emph{data complexity} of this problem, that is, the query $\query$ is fixed,  
and the complexity is measured with respect to the size of the input database~$\db$.

\paragraph*{Complexity Class~\nlog}
We use \nlog~to denote the class of decision problems that can be solved
by a nondeterministic Turing machine using $O(\log n)$ work
tape, where $n$ is the size of the input.

\paragraph*{Language Expressiveness}
We say that a query language $\mathcal{L}_2$ is \emph{at least as expressive as} $\mathcal{L}_1$ if for every query $\query \in \mathcal{L}_1$ there exists a query $\query' \in \mathcal{L}_2$ such that $\semd{\query} = \semd{\query'}$ on all databases $\db$.
If, in addition, there exists a query in $\mathcal{L}_2$ not expressible in $\mathcal{L}_1$, then $\mathcal{L}_2$ is \emph{strictly more expressive}, written $\mathcal{L}_1 \subsetneq \mathcal{L}_2$.

\section{Read-Only PGQ}
\label{sec:ropgq}
\label{sec:read-only-pgq}
Building on the formalization of core PGQ presented in~\cite{vldb25}, we further develop the read-only fragment of the language and revisit certain aspects that were only briefly addressed in that work. In particular, we focus on the foundational question: what property graph does the query operate on?
While this issue was understandably not a central focus of GQL, which operates directly on property graphs, a more precise formalization is essential for PGQ, which is defined over relational databases and constructs property graph views dynamically.

\subsection{Property Graph Views}
\label{sec:pgView}
{
We now describe the mechanism for constructing a \emph{property graph view}, or \emph{tabular property graph}, following the Standard’s terminology, from a relational database. This view is only defined when the input database satisfies specific structural constraints, formalized below.

\begin{definition}
    A \emph{property graph view} is a relational database $\db$ over a schema consisting of 
 six relation symbols, unary $R_1,R_2$, binary  $R_3,R_4,R_5$, and a ternary $R_6$, for which the following hold:
 \begin{enumerate}
  \item\label{valid:disjoint}
        \(R_1^{\db}\) and \(R_2^{\db}\) are disjoint relations 
        (representing node and edge identifiers, respectively).
        
  \item\label{valid:src-tgt}
        \(R_3^{\db}, R_4^{\db}\) 
        encode functions  
        \(R^{\db}_2 \rightarrow R^{\db}_1\) 
        (representing the source and target of each edge, respectively).

  \item\label{valid:lab}
        \(R^{\db}_{5} \subseteq (R^{\db}_{1} \cup R^{\db}_{2}) \times \constset\) 
        (representing the labeling of nodes and edges).

  \item\label{valid:prop}
        \(R^{\db}_{6}\) encodes a partial function 
        \((R^{\db}_{1} \cup R^{\db}_{2}) \times \constset \rightharpoonup \constset\) 
        (representing properties).
 \end{enumerate}
\end{definition}
Intuitively, $R_1^{\db}$ and $R_2^{\db}$ identify nodes and edges, 
 $R_3^{\db}$ and $R_4^{\db}$ assign each edge a source and target node, and 
 $R_5^{\db}$ and $R_6^{\db}$ 
capture labels and key-value properties of nodes and edges, respectively.
Note that $R_5^{\db}$ and $R_6^{\db}$ 
may be empty, reflecting a scenario in which there are no labels or properties.
The previous definition specifies when a relational database can be interpreted as a property graph. We now define the partial function 
$\pgView$, which formalizes this interpretation.

\begin{definition}\label{def:pgView}
\sloppy{
We define \(\pgView\) as the partial function that,  
given a sequence of relations $(\mathbf{R}_1, \ldots, \mathbf{R}_6)$ that satisfies the property graph view conditions,  
returns the property graph \(\langle N, E, \src, \tgt, \lbl, \prop \rangle\), denoted \(\pgView(\mathbf{R}_1, \ldots, \mathbf{R}_6)\), where:}
\[
N \df \mathbf{R}_1, \quad
E \df \mathbf{R}_2, \quad
\src \df \mathbf{R}_3, \quad
\tgt \df \mathbf{R}_4, \quad
\lbl \df \mathbf{R}_5, \quad
\prop \df \mathbf{R}_6.
\]
\end{definition}

In PGQ, one applies pattern matching  on such property graph views.

\subsection{Read-Only PGQ: The Language}
\label{sec:read-only-core-pgq}
Output patterns return {relations (finite sets of tuples)}, which in PGQ can be further manipulated using standard relational algebra.
Since pattern matching operates over property graphs, we must interpret certain parts of the input relational database as defining such a graph. To formalize this, we define \emph{read-only PGQ queries} namely $\rpgq$. A query $\query\in \rpgq$ over a relational schema \(\localsch\) is defined in Figure~\ref{fig:grammar-core-pgq}, in the \rpgq~part.

\begin{figure}[ht]
  \centering
  \[
  \fbox{$\displaystyle
    \begin{array}{rl}
      \multicolumn{2}{l}{\textbf{\rpgq:}}\\[2pt]
      \query & \df\;
        \pat_{\return}(\bar R)
        \mid R
        \mid \pi_{\$i_1,\ldots,\$i_k}(\query)
        \mid \sigma_{\theta}(\query)
        \mid \query \times \query'
        \mid \query \cup \query'
        \mid \query - \query' \\[2pt]
      \theta & \df\;
        \$i_1 = \$i_2
        \mid \neg \theta
        \mid \theta \vee \theta
        \mid \theta \wedge \theta\,\,\,\,{\text{where } i_1,\ldots,i_k \in \mathbb{N}}\\[4pt]

      \multicolumn{2}{l}{\textbf{\rwpgq:}}\\[2pt]
      \query & \df\;  \ c \,\mid \pat_{\return}(\bar\query) \\[4pt]

      \multicolumn{2}{l}{\textbf{\erwpgq:}}\\[2pt]
      \query & \df\; \patExtended{\return}(\bar\query)
    \end{array}
  $}
  \]
\caption{\label{fig:grammar-core-pgq} Syntax of $\rpgq$, $\rwpgq$, and $\erwpgq$: 
Each part shows the additions to the previous fragment: $\rwpgq$ extends $\rpgq$ with constants and pattern matching on query results, and $\erwpgq$ further extends $\rwpgq$ with pattern matching on more elaborate query results.
}
\Description{\label{fig:grammar-core-pgq} Syntax of $\rpgq$, $\rwpgq$, and $\erwpgq$}
\end{figure}

Here, \(\bar R = (R_1, \ldots, R_6)\) is a tuple of relation names from \(\localsch\), 
defining the \emph{property graph view} on which the output pattern \(\pat_{\return}\) is evaluated.
The semantics of the relational algebra operators follow their standard definitions; the novel aspect lies in the evaluation of the output pattern on the graph view. We define the semantics of a query \(\query\) over a database \(\db\) as a relation \(\semd{\query}\), given in Figure~\ref{fig:core-pgq-semantics} in the \rpgq~part.

\begin{figure}[ht]
  \centering
  \begin{tikzpicture}[remember picture, baseline=(current bounding box.north)]
    \node[anchor=north west, inner sep=0pt] (mathblock) at (0,0) {
      \begin{minipage}{0.95\textwidth}
        \setlength{\abovedisplayskip}{0pt}
        \setlength{\abovedisplayshortskip}{0pt}
        \setlength{\belowdisplayskip}{0pt}
        \setlength{\belowdisplayshortskip}{0pt}

        \begin{align*}
          \intertext{\textbf{\rpgq semantics:}}
          \semd{\pat_{\return}(\bar R)}
            &\df \sem{\pat_{\return}}_G,
              \quad \text{where } G = \pgView( \semd{R_1},\ldots,\semd{R_6}) \\[2pt]
          \semd{R} &\df R^{\db} \\[2pt]
          \semd{\pi_{\$i_1,\ldots,\$i_k}(\query)}
            &\df \{ (t_{i_1},\ldots,t_{i_k}) \mid (t_1,\ldots,t_n) \in \semd{\query},\ 1\le i_1,\ldots,i_k\le n \} \\[2pt]
          \semd{\sigma_{\theta}(\query)}
            &\df \{ \tup \mid \tup \in \semd{\query},\  \tup \models \theta \} \\[2pt]
          \semd{\query \times \query'}
            &\df \{ (\bar t , \bar t' ) \mid \bar t \in \semd{\query},\ \bar t' \in \semd{\query'} \} \\[2pt]
          \semd{\query \op \query'}
            &\df \semd{\query} \op \semd{\query'}, \quad \text{for } \op \in \{ \cup, \setminus \} \\[4pt]
          \intertext{\textbf{\rwpgq semantics:}}
          \tikzmark{start}
          \semd{c} &\df c \quad \text{where } c \in \adom(\db) \\[2pt]
          \semd{\pat_{\return}(\query_1,\ldots,\query_6)}
            &\df \sem{\pat_{\return}}_{G}
              \quad \text{where }
              G \df \pgView\!\left(\semd{\query_1},\dots,\semd{\query_6}\right)
          \tikzmark{end}
          \intertext{\textbf{\erwpgq semantics:}}
          \tikzmark{start}
          \semd{\patExtended{\return}(\query_1,\ldots,\query_6)}
            &\df \sem{\pat_{\return}}_{G}
              \quad \text{where }
              G \df \pgViewExtendedUnion\!\left(\semd{\query_1},\dots,\semd{\query_6}\right)
          \tikzmark{end}
        \end{align*}
      \end{minipage}
    };
    \begin{scope}[overlay, remember picture]
      \draw[thick]
        ([xshift=-8pt,yshift=-6pt]mathblock.south west) rectangle
        ([xshift=8pt,yshift=6pt]mathblock.north east);
    \end{scope}
  \end{tikzpicture}
  \caption{Semantics of core PGQ queries.}
  \Description{Semantics of core PGQ queries.}
  \label{fig:core-pgq-semantics}
\end{figure}

\noindent
Given a tuple \(\bar{t} = (t_1, \ldots, t_n)\), we write \(\bar{t} \models \theta\) to indicate that \(\bar{t}\) satisfies the condition \(\theta\), with the following standard semantics:
\[
\bar{t} \models \$i = \$i' \quad \text{iff} \quad 1 \le i, i' \le n \ \text{and} \ t_i = t_{i'}.
\]
This is extended to Boolean connectives \(\wedge, \vee, \neg\) in the usual way.
The function 
$\pgView$ is defined as in Definition~\ref{def:pgView}.

The semantics of $\pat_{\return}(R_1,\ldots,R_6)$ on a database $\db$ are 
\[
   \semd{\pat_{\return}(R_1,\ldots,R_6)}
\, \df \,
\sem{\pat_{\return}}_{G}
\quad \text{
where} \quad G \df   \pgView\left(
      \semd{R_{1}},\dots,\semd{R_{6}}
   \right).\]
That is, to evaluate the query $\pat_{\return}(\bar R)$, we first evaluate each of the six relational subqueries $R_1$ through $R_6$ on the current database instance $\db$. These results define the graph $G$ via the $\pgView$ operator. Then we evaluate the pattern $\pat_{\return}$ over $G$ to produce the result.

\section{Read-Write PGQ}
\label{sec:rwpgq}

In the read-only fragment, pattern matching is restricted to relations that are explicitly stored in the database. In contrast, the read-write fragment extends this by allowing pattern matching to be applied to property graph views constructed dynamically from queries. These queries can be defined recursively, enabling the application of pattern matching not only on the original relations, but also on the results of previous pattern matchings, relational algebra operations, or combinations thereof. This recursive composition supports a powerful form of query nesting and reuse, central to the expressive power of the read-write fragment.

The syntax and semantics additions for \emph{read-write PGQ queries} can be found in \rwpgq~sections in Figures~\ref{fig:grammar-core-pgq} and~\ref{fig:core-pgq-semantics}, respectively.

In particular, the semantics of a query ${\pat_{\return}(\query_1,\ldots,\query_6)}$  on a database $\db$ is 
\[
   \semd{\pat_{\return}(\query_1,\ldots,\query_6)}
\, \df \,
\sem{\pat_{\return}}_{G}
\quad \text{
where} \quad G \df   \pgView\left(
      \semd{\query_{1}},\dots,\semd{\query_{6}}
   \right).\]

That is, we first evaluate each of the six relational subqueries \(\query_1\) through \(\query_6\) on the current database instance \(\db\). These results define the graph \(G\) via the \(\pgView\) operator, and the pattern \(\pat_{\return}\) is then evaluated over \(G\).

{This generalizes the simpler case} 
\(\pat_{\return}(\bar R)\) of \rpgq, where each \(R_i\) is a base relation rather than a query.

We now turn to compare the expressive power of the read-only and read-write fragments.

\subsection{\texorpdfstring{Expressiveness of $\rwpgq$~Vs.~$\rpgq$~}{Expressiveness of rwPGQ Vs. rPGQ}}
Core PGQ~\cite{vldb25} is effectively relational algebra over pattern matching outputs. 
Since relational algebra is equivalent to relational calculus, it follows that \(\mathrm{FO}\subseteq\rpgq\).
Quite expectedly, read-write PGQ queries add expressiveness to the read-only fragment.

\newcommand{\thmrwsupro}{$\rwpgq$~is strictly more expressive than $\rpgq$.}
\begin{theorem}\label{thm:rwsupro}
  \thmrwsupro
\end{theorem}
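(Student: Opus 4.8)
The plan is to establish the two inclusions that make up strict containment. First, $\rpgq \subseteq \rwpgq$ holds syntactically: every $\rpgq$ query is already a $\rwpgq$ query (the grammar for $\rwpgq$ extends that of $\rpgq$, and a base relation $R_i$ is in particular a query $\query_i$, so $\pat_{\return}(\bar R)$ is the special case of $\pat_{\return}(\bar\query)$ where each $\query_i = R_i$). The semantics agree by construction, so nothing needs proving beyond this observation.

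The real content is exhibiting a separating query: a $\rwpgq$ query whose result is not expressible in $\rpgq$. The natural candidate is a reachability/transitive-closure query. Concretely, I would take an input schema with a single binary relation $G$ (thought of as a directed graph on its active domain), and consider the $\rwpgq$ query that: (i) builds a property graph view whose node relation is the set of endpoints occurring in $G$ and whose edge relation is (a copy of) $G$ with source/target projections, using relational-algebra subqueries $\query_1,\dots,\query_6$ over $G$; and (ii) evaluates the output pattern $\bigl((x)\overset{t}{\rightarrow}^{1..\infty}(y)\bigr)_{x,y}$ on this view, returning all pairs of nodes connected by a nonempty directed path. This is a $\rwpgq$ query since the six view-defining arguments are genuine relational subqueries of $G$, not base relations — exactly the feature $\rpgq$ lacks. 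Its result is the transitive closure of $G$.

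To show this query is not in $\rpgq$, I would argue that $\rpgq$ is subsumed by $\mathrm{FO}$ (first-order logic) on such inputs, and then invoke the classical fact that transitive closure is not $\mathrm{FO}$-expressible (via Ehrenfeucht–Fraïssé games / locality, or Gaifman locality, on instances like long directed paths). The key lemma is: on an input database whose schema contains no relations already satisfying the property-graph-view conditions in a way that lets a graph be "read off" — or more simply, on our single-binary-relation schema — every $\rpgq$ query is equivalent to an $\mathrm{FO}$ query. The relational-algebra operators of $\rpgq$ are clearly $\mathrm{FO}$; the only non-$\mathrm{FO}$-looking construct is $\pat_{\return}(\bar R)$, but here $\bar R$ must be a tuple of \emph{base} relation names from the fixed schema. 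Since the only base relation is $G$ (unary, not even of the right arities to serve as all six view components), the property-graph-view preconditions force the view to be essentially trivial/fixed-size relative to the pattern's bounded structure — or one simply notes that for \emph{any} fixed schema, $\pat_{\return}$ applied to base relations unfolds into an $\mathrm{FO}$ formula because the pattern $\pat$ is a \emph{fixed} finite expression whose only unbounded construct, $\pat^{n..m}$ with $m=\infty$, still yields only $\mathrm{FO}$-definable reachability \emph{within the bounded-width structure built by the fixed $\bar R$} — and with $\bar R$ ranging over base relations of the fixed schema, the induced "graph" has no edge relation capable of encoding $G$'s transitive closure. The cleanest route: pick the schema so that $G$ cannot be cast as the edge relation of a view at all (e.g. arity mismatch), making every $\rpgq$ query over it literally an $\mathrm{FO}$ query over $G$, hence unable to compute $\mathrm{TC}(G)$.

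The main obstacle is making the "$\rpgq$ is $\mathrm{FO}$ on the relevant inputs" step airtight: one must handle the fact that $\pat^{1..\infty}$ is genuinely recursive, and explain why in $\rpgq$ this recursion is harmless — precisely because the graph it recurses over is fixed by base relations of a fixed schema and thus cannot be the database's own unbounded relation $G$. I would formalize this by a careful but routine induction on $\rpgq$ query structure, using the property-graph-view constraints to bound what the view can be, and isolating the one place (the $\bar R$ being base relations) where $\rwpgq$ strictly gains power. Locality of $\mathrm{FO}$ then finishes the non-expressibility half.
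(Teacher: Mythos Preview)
Your overall strategy is the paper's: design the input schema so that every $\rpgq$ query collapses to relational algebra (hence FO) because no assignment of \emph{base} relation names to $R_1,\dots,R_6$ yields a valid property-graph view, then invoke locality to rule out reachability, while showing that $\rwpgq$ \emph{can} build the view via subqueries. That high-level plan is correct.

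The gap is in your concrete separator. With only a single binary relation $G$, you correctly note that $\rpgq$ cannot form a view (there are no unary base relations to serve as $R_1,R_2$). But $\rwpgq$ is equally stuck here: the view constraints demand that $R_1$ and $R_2$ be \emph{disjoint unary} relations, and the active domain of your database consists solely of endpoints of $G$. Every unary $\rwpgq$ subquery returns a subset of those endpoints, so you cannot manufacture edge identifiers disjoint from your node identifiers. The view you describe (``edge relation is a copy of $G$'') would need arity-2 edge identifiers, which only $\erwpgq$ provides. So your proposed $\rwpgq$ query does not exist as written.

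The paper fixes this by choosing a schema that is ``one union away'' from a valid view: disjoint unary relations \texttt{RedNodes}, \texttt{BlueNodes}, \texttt{Edges}, plus binary \texttt{Source}, \texttt{Target} (mapping edges into the full node set). Neither color relation alone can serve as $R_1$ because \texttt{Source}/\texttt{Target} map into both colors, violating the ``function into $R_1$'' requirement for $R_3,R_4$; hence $\rpgq$ collapses to FO on this schema. But $\rwpgq$ can set $\query_1 := \texttt{RedNodes}\cup\texttt{BlueNodes}$, obtain a valid view, and run the reachability pattern. That is the missing construction. A minor related point: your tentative claim that ``$\pat_\return$ applied to base relations unfolds into an FO formula for any fixed schema'' is false in general --- on a schema that already contains valid view relations, $\rpgq$ genuinely computes transitive closure. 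The argument works only because the chosen schema makes the pattern-matching clause vacuous, not because patterns are inherently first-order.
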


The proof relies on the following  separating example.
Consider a graph whose vertices are partitioned into \texttt{RedNodes} and \texttt{BlueNodes}, and whose \texttt{Edges} always connect nodes of opposite colors.  
Detecting a path of \emph{arbitrary} length whose colors alternate red-blue at every step separates the two fragments.  
In this example, any $\rpgq$ query can be rewritten into one that does not apply pattern matching, and therefore in relational algebra. Using Gaifman locality theorem~\cite{GAIFMAN1982105}, this query can inspect paths only up to a fixed radius determined by its size, so it cannot recognize all unbounded alternating paths.  
In contrast, $\rwpgq$ can first materialize a view whose node set is \texttt{RedNodes}\,$\cup$\,\texttt{BlueNodes} and whose edge set is \texttt{Edges}, and then apply the reachability pattern \[\left(((x) \rightarrow (y) \rightarrow (z) {\langle \texttt{RedNodes}(x) \wedge \texttt{BlueNodes}(y) \wedge \texttt{RedNodes}(z) \rangle)^*}\right)_{\emptyset}.\]
This query returns true exactly when an alternating-color path with at least two edges exists, which concludes the proof.   

Although $\rwpgq$ is more expressive than its read-only counterpart, its Boolean queries still do not capture all the problems in $\nlog$. This complexity class, $\nlog$, is a widely accepted benchmark for graph query languages because it balances expressive power and efficiency. It includes many path-based queries and corresponds to Datalog's capabilities on CRPQs~\cite{C2RPQ}, as well as SQL’s \sqlkw{WITH RECURSIVE}, which supports linear recursion.

\newcommand{\thmrwsubsetnl}{
$\nlog$ is strictly more expressive than  Boolean
$\rwpgq$.}
\begin{theorem}\label{thm:rw_subset_nl}
   \thmrwsubsetnl
\end{theorem}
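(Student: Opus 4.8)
The statement has two halves: an inclusion --- every Boolean $\rwpgq$ query defines an $\nlog$ problem --- and a separation --- some $\nlog$ problem is expressible by no Boolean $\rwpgq$ query.

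For the inclusion, the plan is to translate $\rwpgq$ into $\fotc{1}$, the fragment of $\fotcc$ that applies transitive closure only to binary relations over the universe, by structural induction on queries with a nested induction on patterns. Base relations map to atoms; the operators $\pi,\sigma,\times,\cup,\setminus$ and the comparisons $\$i=\$i'$ have their usual first-order renderings; a constant $c\in\adom(\db)$ is captured by a formula defining the active domain; and $\pgView(\query_1,\dots,\query_6)$ is a renaming of six relations whose well-definedness conditions (disjointness, functionality of source and target, the partial-function condition for properties) are first-order. The substantive case is pattern matching $\sem{\pat_\return}_G$: a triple $(s,t,\mu)$ is encoded by a formula with two endpoint variables plus one variable per free variable of $\pat$; juxtaposition $\pat_1\,\pat_2$ becomes an existential over the shared endpoint together with the compatibility test on shared variables, $\pat_1+\pat_2$ becomes a disjunction, $\pat\langle\theta\rangle$ becomes a conjunction with the first-order rendering of $\theta$ (conditions $x.k=x'.k'$ and $\ell(x)$ consult the property and label relations), bounded iteration $\pat^{n..m}$ with $m<\infty$ is the finite union of the $m-n+1$ self-joins, and unbounded iteration $\pat^{n..\infty}$ is essentially the transitive closure of the endpoint relation $\{(s,t)\mid\exists\mu\colon(s,t,\mu)\in\sem{\pat}_G\}$. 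The crucial point is that in every $\rwpgq$-view the node and edge sets are unary relations, so identifiers are single domain elements; hence the transitive closure in the last case is always over a binary relation on the universe, i.e.\ a $\tc{1}$ operator. The read-write dynamics may materialize intermediate relations and feed them back as views, but can never produce a view with more than $|\adom(\db)|$ nodes, so no larger arity ever arises; and since a query is a fixed finite expression the nesting of $\pat_\return(\bar\query)$ has fixed depth, so the translation terminates inside $\fotc{1}$. The inclusion then follows because $\fotcc$, hence $\fotc{1}$, has data complexity $\nlog$.

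For the separation, the plan is to exhibit a Boolean query $Q^\star\in\nlog$ whose only faithful definition needs transitive closure over \emph{pairs}. A convenient witness is product-graph reachability: over structures carrying a binary relation $E$ and four distinguished constants $s_1,s_2,t_1,t_2$, let $Q^\star$ decide whether $(t_1,t_2)$ is reachable from $(s_1,s_2)$ in the graph on $V\times V$ whose edges are the pairs $\bigl((a,b),(c,d)\bigr)$ with $E(a,c)\wedge E(b,d)$. This lies in $\nlog$ --- guess the walk, each vertex being a single pair, i.e.\ $2\log n$ bits. To show $Q^\star\notin\rwpgq$, combine the inclusion $\rwpgq\subseteq\fotc{1}$ above with the fact that $Q^\star\notin\fotc{1}$; the latter is an instance of the strict inclusion $\fotc{1}\subsetneq\fotc{2}$ over ordered structures proved in Section~\ref{sec:expressiveness}, witnessed by a transitive-closure Ehrenfeucht--Fra\"{\i}ss\'{e} (pebble) game on a family of grid-like structures on which pair-reachability holds yet which no $\fotc{1}$ sentence distinguishes from a companion structure on which it fails.

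The main obstacle is the separation. The translation in the first half is conceptually routine, in the style of~\cite{FigueiraLinPeterfreund24RelPersp}, and the data-complexity bound for $\fotcc$ is standard; the real content is, on one side, checking that the recursive view-construction of $\rwpgq$ genuinely confines it to $\fotc{1}$ (this is what makes the separation possible rather than placing $\rwpgq$ on the nose of $\nlog$), and, on the other, the game argument establishing $Q^\star\notin\fotc{1}$ --- precisely the arity-$1$-versus-arity-$2$ gap of the transitive-closure hierarchy that the extended fragment $\erwpgq$, with its $n$-ary identifiers, is designed to close. In this sense the separation here previews Section~\ref{sec:expressiveness}.
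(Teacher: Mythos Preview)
Your approach is correct but differs from the paper's. The paper's proof of Theorem~\ref{thm:rw_subset_nl} does not go through the $\fotc{1}$ embedding; instead it invokes a direct limitation from~\cite{vldb25}: the set of path lengths detectable by $\rwpgq$ is definable in Presburger arithmetic, hence semilinear, while $\nlog$ contains Boolean queries that test for non-semilinear path lengths. Your route---translate $\rwpgq$ into $\fotc{1}$ and then appeal to the strict inclusion $\fotc{1}\subsetneq\fotc{2}=\nlog$---is exactly the machinery the paper builds later in Section~\ref{sec:expressiveness} (Corollary~\ref{cor:pgqextn-equals-fotcn} with $n=1$ together with Theorem~\ref{thm:one-two-chain}), and indeed the paper uses precisely this chain to prove the closely related Theorem~\ref{thm:erw_vs_rw}. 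So your argument is sound but creates a forward dependency, whereas the paper's proof at this point is self-contained modulo the cited Presburger characterization. One caveat: your specific witness $Q^\star$ (synchronized product-graph reachability over a \emph{binary} $E$) is plausible but is not the separator the paper or its references explicitly certify; the paper simply cites the strictness $\fotc{1}\subsetneq\fotc{2}$ from~\cite{Immerman1999,GRADEL1996169} without exhibiting a concrete query, so you could do the same and avoid having to justify the pebble-game claim for your particular $Q^\star$.
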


This highlights an expressiveness gap between
$\rwpgq$ and the full range of tractable graph queries.  Bridging this
gap motivates the extended fragment $\erwpgq$ studied next.

\section{Extensions of Read-Write PGQ}
\label{sec:extensions}
\label{sec:ext-rwpgq}
{

In the \emph{extended read-write fragment} \erwpgq, whose concrete syntax and formal
semantics are set out in Figures~\ref{fig:grammar-core-pgq} and~\ref{fig:core-pgq-semantics}, we lift the restriction that node and edge identifiers are
single values: they may now be $n$-ary tuples for any fixed $n \ge 1$.
With this single generalization, each clause of Definition~\ref{def:pgView} carries over (1)-(4) with the relation arities scaled accordingly.

\begin{definition}
\label{def:pgView-n}
 An  \emph{$n$-ary property graph view} is a relational database \(\db\) over a schema consisting of six relation
symbols-{$n$-ary $R_1,R_2$, $2n$-ary $R_3,R_4$, an
$(n\!+\!1)$-ary $R_5$, and an $(n\!+\!2)$-ary} $R_6$-for which conditions (1)-(4) from Definition~\ref{def:pgView} hold. That is, 
\begin{enumerate}
  \item\label{valid:disjoint-n}
        $R_1^{\db}$ and $R_2^{\db}$ are disjoint relations
        (representing node and edge identifiers, respectively).

  \item\label{valid:src-tgt-n}
        $R_3^{\db}, R_4^{\db}$ encode functions
        $R_2^{\db} \rightarrow R_1^{\db}$
        (representing the source and target of each edge, respectively).

  \item\label{valid:lab-n}
        $R_5^{\db} \subseteq (R_1^{\db} \cup R_2^{\db}) \times
        \constset$
        (representing the labeling of nodes and edges).

  \item\label{valid:prop-n}
        $R_6^{\db}$ encodes a partial function
        $(R_1^{\db} \cup R_2^{\db}) \times \constset \rightharpoonup
        \constset$
        (representing properties).
\end{enumerate}
\end{definition}
Indeed, this is a generalization of Definition~\ref{def:pgView}, since for $n = 1$ the two definitions coincide.

\begin{remark}\label{rmk:arity-remark}
In our definition of 
$n$-ary property graphs, node and edge identifiers share the same arity to simplify the model: a single pair of relations $R_5,R_6$
 suffices for labels and properties. This choice reduces complications without loss of generality. Allowing different arities for nodes and edges requires duplicating these relations, but all definitions and results extend naturally to that case.
\end{remark}

\begin{definition}
\label{def:pgView-eq-n-func}
For every integer $n\ge 1$, {we define \(\pgView^{=n}\)} as the partial function
that, 
given a sequence of relations $(\mathbf{R}_1, \ldots, \mathbf{R}_6)$ that satisfy the $n$-ary property graph view  (Definition~\ref{def:pgView-n}), returns the corresponding property graph \(\langle N, E, \src, \tgt, \lbl, \prop \rangle\), denoted by \(\pgView^{=n}(\mathbf{R}_1, \ldots, \mathbf{R}_6)\) where 
\[
N \df \mathbf{R}_1, \quad
E \df \mathbf{R}_2, \quad
\src \df \mathbf{R}_3, \quad
\tgt \df \mathbf{R}_4, \quad
\lbl \df \mathbf{R}_5, \quad
\prop \df \mathbf{R}_6.
\]
\end{definition}

\begin{definition}
\label{def:pgView-n-func}
{
For every integer $n \ge 1$, we set $\pgView^{n}$ to be the union of $\pgView^{=i}$ up to $i=n$, that is, 
\[
  \pgView^{n} \;:=\; \bigcup_{i=1}^{n}\pgView^{=i},
\]}
and $\pgViewExtendedUnion$ as the union of all $i$'s:
\[
  \pgViewExtendedUnion \; \df \; \bigcup_{i \ge 1} \pgView^{=i}.
\]
That is, \(\pgViewExtendedUnion\) is the partial function whose domain is the
union of the domains of all \(\pgView^n\).  Concretely, for every sequence of
relations \((\mathbf{R}_1,\ldots,\mathbf{R}_6)\) that satisfies the
\(k\)-ary property graph view conditions for some \(k \ge 1\), we set
\[
  \pgViewExtendedUnion(\mathbf{R}_1,\ldots,\mathbf{R}_6)
  \;\df\;
  \pgView^{k}(\mathbf{R}_1,\ldots,\mathbf{R}_6).
\]
\end{definition}

Notice that $\pgView^{1}=\pgView$, hence $\pgView^k$ extends $\pgView$. 
Note also that \(\pgViewExtendedUnion\) is well defined.

}

We define the extended read-write PGQ, namely $\erwpgq$ by adding $\patExtended{\return}(\bar\query)$ to the syntax grammar. A query $\query \in \erwpgq$ is defined by the additional syntax in the $\erwpgq$ section in Figure~\ref{fig:grammar-core-pgq}:
\[
   \query \ \df \patExtended{\return}(\query_1,\ldots,\query_6).
\]
The semantics is defined using $\pgViewExtendedUnion$ as follows:
\[
  \semd{\patExtended{\return}(\query_1,\ldots,\query_6)}
    := \sem{\pat_\return}_G,
  \quad\text{where }G = \pgViewExtendedUnion
        \bigl(
          \semd{\query_1},\ldots,
          \semd{\query_6}
        \bigr),
\]
and can be found in $\erwpgq$ section in Figure~\ref{fig:core-pgq-semantics}.
\(\erwpgq\) preserves the two-phase evaluation procedure of \(\rwpgq\) but replaces the operator \(\pgView\) with its composite-identifier variant \(\pgViewExtendedUnion\), and substitutes the core pattern \(\pat_\return\) with the extended pattern \(\patExtended{\return}\), whose semantics supports identifiers of arity~\(k\).
In particular, in Figure~\ref{fig:sem-endpoints}, each output triple $(s,t,\mu)$ has $s$ and $t$ as $k$-tuples (rather than arity-1), valuations $\mu$ map variables to $k$-tuples, and all equalities in the $n$-fold repetition are over $k$-tuples.

\paragraph{Expressiveness.}
Since every \(\rwpgq\) query is also an \(\erwpgq\) query, we have the
immediate containment \(\rwpgq \subseteq \erwpgq\).
The strictness of this containment is not evident from the syntax alone,
but will follow from the capture result proved in the next section
and from Theorem~\ref{thm:rw_subset_nl}.

\begin{example}[Extended Graph View with Composite Identifiers]\label{ex:bank-account-ids}
Building on Examples~\ref{ex:bank-transfers} and~\ref{ex:pattern-transfer-interval}, assume now that accounts in our database are no longer identified by a single IBAN column. Instead, an account is uniquely identified by the triple (\sqlinline{bank}, \sqlinline{branch}, \sqlinline{acct}). Thus, the relational schema is adapted, replacing the IBAN field by three separate columns:
\begin{itemize}[leftmargin=*,align=parleft]
\item \sqlinline{Account(bank, branch, acct)} identifies accounts by triples.
\item \sqlinline{Transfer(t\_id, bankSrc, branchSrc, acctSrc, bankTgt,  branchTgt, acctTgt, ts, amount)} 

stores transfers between accounts identified by triples.
\end{itemize}
To capture this structure, we define the following property graph view with composite $n$-ary identifiers for nodes and edges:

$R_1$ is the table \sqlinline{Account(bank,branch,acct)};
$R_2$ is the projection of \sqlinline{Transfer} onto \sqlinline{t\_id};
$R_3$ links every edge in $R_2$ to its source account \sqlinline{(bankSrc, branchSrc,acctSrc)};
$R_4$ links every edge in $R_2$ to its target account \sqlinline{(bankTgt, branchTgt, acctTgt)};
$R_5$ assigns the label \sqlinline{Transfer} to each edge in $R_2$;
finally, $R_6$ is empty in this example, i.e.\ $R_6=\varnothing$.

Using composite identifiers significantly simplifies pattern matching queries. For instance, one can succinctly write:

\[
   \left( (x) \overset{t}{\rightarrow}^{1,\infty}(y) \langle \theta \rangle \right)_{x.\texttt{bank},x.\texttt{branch},y.\texttt{bank},y.\texttt{branch}} \quad  \text{where} \quad 
    \theta \df \texttt{Transfer}(t)
\]

{The output of this query includes the identifiers of banks and branches of source and target accounts, which we can subsequently filter on without additional joins. 
By contrast, in Example~\ref{ex:pattern-transfer-interval} the output would have contained only the long {IBAN} strings for the endpoints, preventing such post filtering.}
\end{example}

\newcommand{\thmerwvsrw}{
  \(\erwpgq\) is strictly more expressive than \(\rwpgq\).
}

\begin{theorem}
\label{thm:erw_vs_rw}
\thmerwvsrw
\end{theorem}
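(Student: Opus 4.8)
The plan is to obtain Theorem~\ref{thm:erw_vs_rw} as a corollary of two ingredients: the trivial syntactic containment \(\rwpgq \subseteq \erwpgq\), and the capture theorem for \(\erwpgq\) established in Section~\ref{sec:expressiveness} together with Theorem~\ref{thm:rw_subset_nl}. In one line: once \(\erwpgq\) is known to express exactly the \(\nlog\) queries while some Boolean \(\nlog\) query provably escapes \(\rwpgq\), that query witnesses the strict separation \(\rwpgq \subsetneq \erwpgq\).

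\emph{Containment.} Every \(\rwpgq\) query is literally an \(\erwpgq\) query: the production \(\pat_{\return}(\bar\query)\) is the special case of \(\patExtended{\return}(\bar\query)\) in which all node and edge identifiers have arity~\(1\). For the semantics, by Definition~\ref{def:pgView-n-func} we have \(\pgView^{1} = \pgView\), and \(\pgViewExtendedUnion\) applied to any six-tuple of relations meeting the arity-\(1\) property graph view conditions returns precisely \(\pgView\) of that tuple; likewise the extended pattern semantics, restricted to arity-\(1\) endpoints, valuations, and equalities, collapses to the core semantics of Figure~\ref{fig:sem-endpoints}. Hence \(\semd{\query}\) is unchanged when a \(\rwpgq\) query \(\query\) is reparsed as an \(\erwpgq\) query, which gives \(\rwpgq \subseteq \erwpgq\).

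\emph{Strictness.} Here I would invoke the capture result of the next section, namely that \(\erwpgq = \fotcc = \nlog\) on ordered structures, and Theorem~\ref{thm:rw_subset_nl}. By Theorem~\ref{thm:rw_subset_nl} there is a Boolean property \(P\) decidable in \(\nlog\) that is not the result of any Boolean \(\rwpgq\) query. By the capture result, \(P\) is definable by some \(\query' \in \erwpgq\). Since \(\query'\) lies in \(\erwpgq\) but not in \(\rwpgq\), we conclude \(\rwpgq \subsetneq \erwpgq\).

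The genuine work is thus entirely in the two imported statements, and the real obstacle is the inclusion \(\nlog \subseteq \erwpgq\) of the capture theorem: one must simulate an \(\nlog\) machine inside \(\erwpgq\) by materializing its configuration graph as a property graph view with higher-arity identifiers — arity~\(2\) already suffices — and applying a transitive-closure (Kleene-star) reachability pattern on it. As a self-contained alternative to invoking the full capture theorem, I would instead exhibit a concrete separating query: fixing a binary input relation, use arity-\(2\) identifiers to build the property graph whose nodes are pairs of active-domain elements and whose edges follow a first-order transition rule, and then ask reachability between two designated pairs — a \(\fotc{2}\)-style query. This query belongs to \(\erwpgq\) by construction, yet it cannot be expressed in \(\rwpgq\): every graph view constructed by a \(\rwpgq\) query has arity-\(1\) node identifiers drawn from the active domain, so its node set has size at most \(|\adom(\db)|\), and a locality/pumping argument in the spirit of the proof of Theorem~\ref{thm:rw_subset_nl} shows such queries cannot decide reachability in a quadratically larger configuration space. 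Either route closes the argument.
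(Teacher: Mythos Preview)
Your proposal is correct and your main route matches the paper's body-text sketch exactly: the paper states just before the theorem that strictness ``will follow from the capture result proved in the next section and from Theorem~\ref{thm:rw_subset_nl},'' which is precisely your argument.

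The paper's appendix proof, however, takes a more structural route that is worth contrasting with your ``self-contained alternative.'' Rather than invoking the full $\nlog$ capture, it uses the finer equivalence $\rwpgq=\pgqext{1}=\fotc{1}$ (Corollary~\ref{cor:pgqextn-equals-fotcn}) together with Immerman's separation $\fotc{1}\subsetneq\fotc{2}$ on ordered structures, and then exhibits the concrete pair-reachability formula $\phi^{(2)}=\mathrm{TC}_{(u_1,u_2),(v_1,v_2)}[E(u_1,u_2,v_1,v_2)]$ as the separator. This is close in spirit to your alternative, but the inexpressibility side is cleaner: instead of a bespoke locality/pumping argument bounding node-set size by $|\adom(\db)|$, the paper simply translates any putative $\rwpgq$ expression of $\phi^{(2)}$ into $\fotc{1}$ and invokes the known logical separation. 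Your pumping sketch is plausible but would need real work to make rigorous; routing through $\fotc{1}$ sidesteps that entirely.
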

The \erwpgq~fragment closes a gap in the expressiveness identified in~\cite{vldb25}. We discuss it in the following example:

\begin{example}[Increasing Values on Edges]\label{ex:ex-incr-edges}
Consider again the graph view defined in Example~\ref{ex:bank-account-ids}. 
Suppose we want to find all pairs of accounts connected by paths of transfers such that the transferred amounts strictly increase along the path edges. This was shown to be inexpressible by output patterns in~\cite{vldb25}.
We construct a new property graph in the $\erwpgq$ fragment as follows: For each  account node
\sqlinline{(bank,branch,acct)}, we create multiple  copies, for each incoming {amount} \sqlinline{l} transferred to it.
So now the identifier is \sqlinline{(bank,branch,acct,l)}. 

Edges in the new graph connect node copies in a strictly increasing manner: an edge 
connects node 
\sqlinline{(bank,branch,acct,l)}
to node 
\sqlinline{(bank,branch,acct,j)} 
only if \sqlinline{j}$ > $\sqlinline{l}. 
See Figure~\ref{fig:increasing-edges} for illustration.

Now, querying such paths is straightforward using standard path pattern matching. 
 We do not need to add a filter enforcing
 increasing amounts, the composite identifiers
already encode this information at every node copy, so any path produced by the
query is guaranteed to follow strictly increasing transfer amounts.

This demonstrates the expressive capability of the $\erwpgq$ fragment:
by enriching the natural  identifier 
we can materialize multiple copies of the same account, each copy
representing the node in a different state, determined by the incoming transfer amount. 
\end{example}

\begin{figure}[ht]
  \centering
  \begin{subfigure}[b]{0.49\linewidth}
    \centering
    \includegraphics[width=\linewidth]{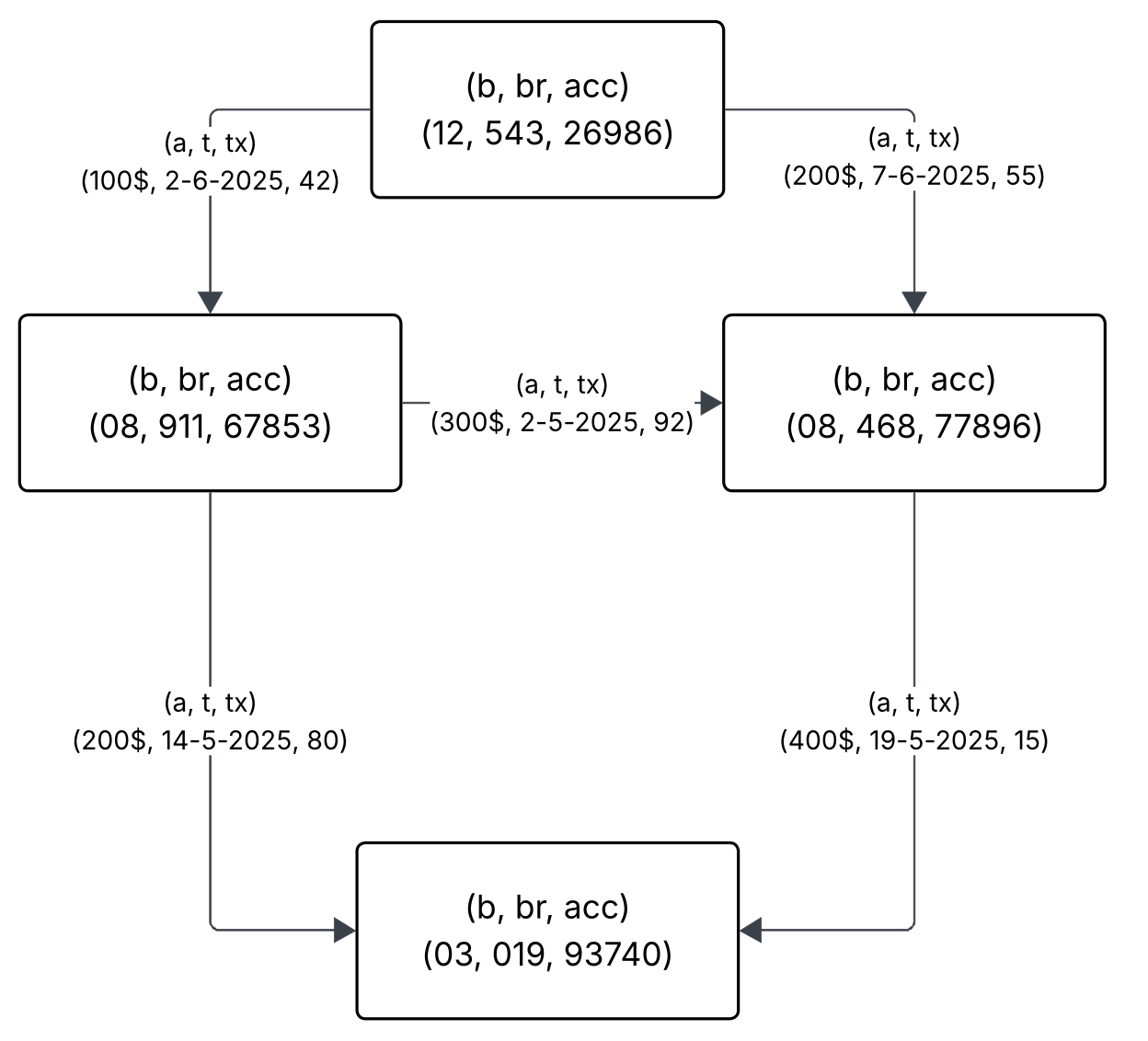}
    \caption{Input property graph $G$}
    \label{fig:inc-edges-input}
  \end{subfigure}
  \hfill
  \begin{subfigure}[b]{0.49\linewidth}
    \centering
    \includegraphics[width=\linewidth]{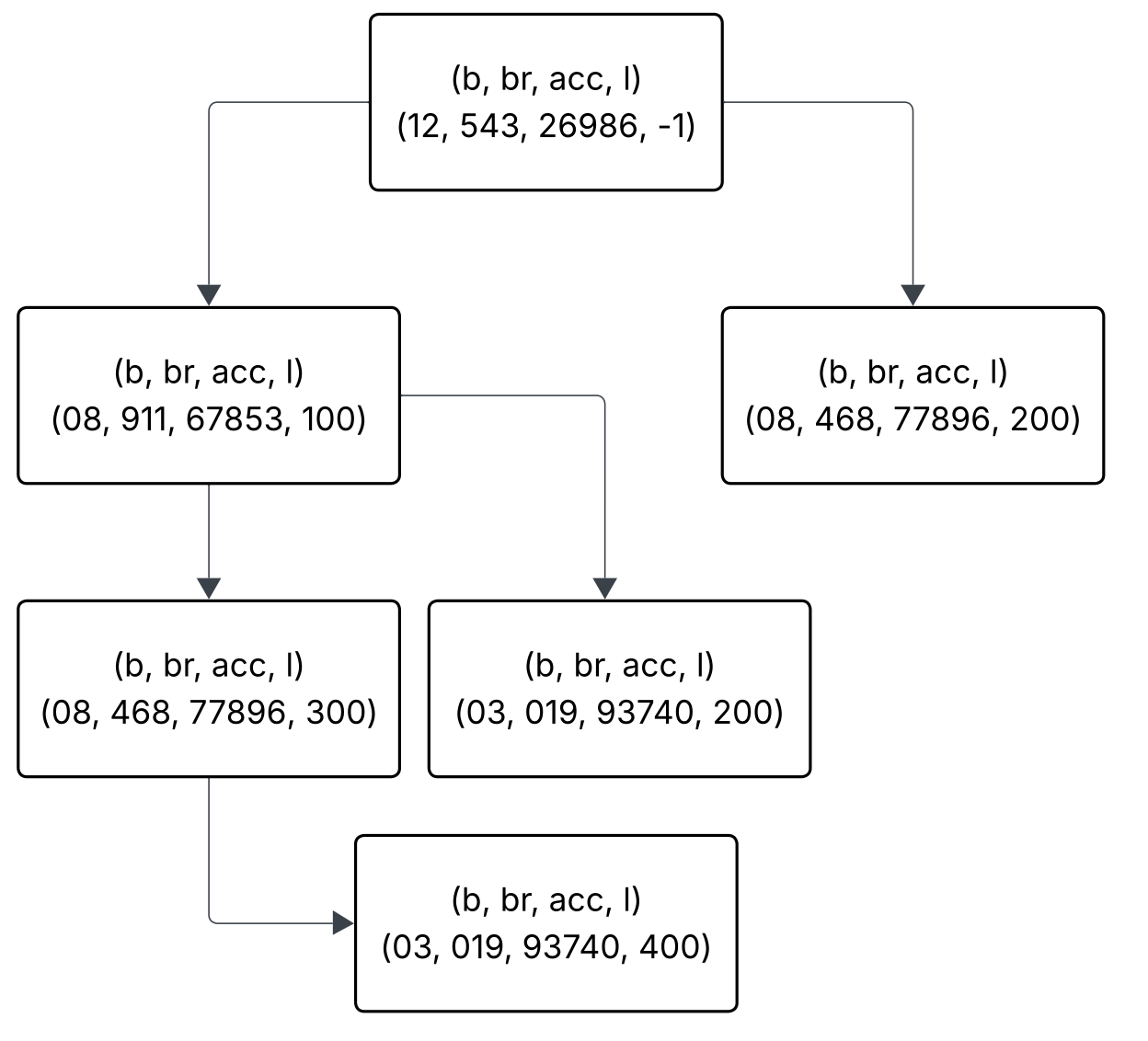}
    \caption{Constructed property graph $G'$}
    \label{fig:inc-edges-output}
  \end{subfigure}

  \caption{Illustration of the view-construction step in
    Example~\ref{ex:ex-incr-edges}.}
    \Description{Illustration of the view construction}
  \label{fig:increasing-edges}
\end{figure}

We now turn to a more fine-grained analysis of the language $\erwpgq$.

\section{Expressiveness Results}
\label{sec:expressiveness}

To characterize the expressive power of $\erwpgq$, we relate it to $\fotcc$. This connection establishes $\erwpgq$ as a robust graph query language and positions it precisely within the classical complexity class $\nlog$.

\subsection{\texorpdfstring{Equivalence of $\erwpgq$ and $\fotcc$}{Equivalence of ERWPGQ and FOTCC}}
\label{sec:capture}

Our goal now is to establish a precise correspondence between the expressive power of the extended read-write fragment~$\erwpgq$ and first-order logic with transitive closure, $\fotcc$.  
We proceed by formally recalling the semantics of $\fotcc$, and then prove the equivalence.

\paragraph{FO[TC] overview}

First-order (FO) formulas over a relational schema~$\schema$ are constructed from atomic formulas of the form $R(x_1,\ldots,x_n)$ where $R \in \schema$ has arity $n$, and from equalities $x = y$. Composite formulas are then built using the standard Boolean connectives $\lnot$, $\land$, $\lor$, along with the quantifiers $\exists$ and $\forall$.

Given a database instance~$\db$ over~$\schema$, and a tuple of values~$\bar a$ corresponding to a tuple of variables~$\bar x$, we write $\db \models \varphi[\bar a/\bar x]$ to indicate that the formula $\varphi(\bar x)$ is satisfied in~$\db$ when each free variable in~$\bar x$ is replaced by the corresponding value in~$\bar a$.

A tuple $\bar a$ satisfies
$\varphi(\bar x)$ in a database instance $\db$
(written $\db\models\varphi[\bar a/\bar x]$) when replacing the free
variables $\bar x$ by the concrete values $\bar a$ makes the sentence
true.

FO[TC] extends FO with the transitive-closure
operator
\[
  \mathrm{TC}_{(\bar u,\bar v)}\!\bigl[\psi(\bar u,\bar v,\bar p)\bigr]
  (\bar x,\bar y),
  \qquad|\bar u|=|\bar v|=|\bar x|=|\bar y|.
\]
The formula holds in a database~$\db$ under assignment \([\bar a, \bar b, \bar c / \bar u, \bar v, \bar p]\)
if there exists a finite sequence of tuples
\(
  \bar a = \bar a_0, \bar a_1, \ldots, \bar a_n = \bar b
\)
such that for each \(i < n\),
\(
  \db \models \psi[\bar a_i, \bar a_{i+1}, \bar c/\bar u, \bar v, \bar p].
\)
That is, \(\psi\) relates each consecutive pair in the sequence, using the same parameter tuple~\(\bar c\),
and the overall formula expresses reachability under the binary relation defined by~\(\psi\).

For any FO[TC] formula $\phi(\bar x)$ we write
\[
  \semd{\phi(\bar x)}
  \;:=\;
  \{\;\bar a \mid \db\models\phi[\bar a/\bar x]\;\},
\]
that is, the \emph{result relation} consisting of all tuples that make
$\phi$ true in~$\db$.

\newcommand{\thmerwpgqinfotc}{
For every schema $~\schema$ and for every query $Q \in \erwpgq$ over $\schema$ with $\arity{Q}=n$, there exists an $\mathrm{FO[TC]}$ formula
\( 
  \varphi_{Q}(x_1,\ldots, x_n)
\)
over the same schema $\schema$ such that
\[
  \semd{Q} = \semd{\varphi_{Q}(x_1,\ldots, x_n)}
\]
for every relational database $\db$ over~$\schema$.
}

With these definitions, we can now show the equivalence $\erwpgq = \mathrm{FO[TC]}$ by proving that each language subsumes the other. 

\begin{theorem}[$\erwpgq \subseteq \mathrm{FO[TC]}$]\label{thm:erwpgq-in-fotc}
\thmerwpgqinfotc
\end{theorem}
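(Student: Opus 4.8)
The plan is to prove the containment by structural induction on the query $Q \in \erwpgq$, building an $\fotcc$ formula $\varphi_Q$ that defines the same relation on every database $\db$. The base cases and the relational-algebra operators of the $\rpgq$ fragment are immediate: base relations $R$ are handled by the atom $R(\bar x)$, constants $c$ by a suitable equality, and projection, selection, product, union, and difference all lie within plain $\mathrm{FO}$ (and a fortiori within $\fotcc$) using the inductive hypotheses for the subqueries. So the entire content of the argument is the pattern-matching constructor $\patExtended{\return}(\query_1,\ldots,\query_6)$, where the $\query_i$ already have $\fotcc$-equivalents $\varphi_{\query_i}$ by induction.

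For that case, I would first observe that $\pgViewExtendedUnion$ picks out a single arity $k$ for which the six subquery outputs satisfy the $k$-ary property graph view conditions; since $Q$ is fixed, the arities of the $\varphi_{\query_i}$ are fixed, so $k$ is determined syntactically and there is no need to quantify over $k$ inside the formula. (If the view conditions fail on $\db$, the result is empty — I would note this can be detected by an $\mathrm{FO}$ sentence checking disjointness, functionality of $R_3,R_4$, and the partial-function condition on $R_6$, and guard $\varphi_Q$ by it.) Then I would encode the pattern semantics of Figure~\ref{fig:sem-endpoints} compositionally. The key point is that each pattern $\pat$ with free variables $\sch{\pat} = \{z_1,\ldots,z_r\}$ denotes, on the graph $G = \pgViewExtendedUnion(\ldots)$, a set of triples $(s,t,\mu)$ where $s,t$ are $k$-tuples and $\mu$ assigns $k$-tuples to the $z_j$; I would build by induction on $\pat$ an $\mathrm{FO}$ (not yet $\mathrm{TC}$) formula $\chi_\pat(\bar s, \bar t, \bar z_1,\ldots,\bar z_r)$ over $\schema$, with all the graph predicates ($N$, $E$, $\src$, $\tgt$, labels, property equalities) expanded via the $\varphi_{\query_i}$. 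Node and edge atoms become membership in $\varphi_{\query_1}$ resp.\ $\varphi_{\query_2}$; the condition language $\theta$ (label tests $\ell(x)$, property equalities $x.k = x'.k'$) translates into $\mathrm{FO}$ over $\varphi_{\query_5}, \varphi_{\query_6}$; concatenation $\pat_1\pat_2$ becomes an existential join on the shared midpoint with a compatibility conjunction $\mu_1 \sim \mu_2$ expressed as equalities on common variables; union $\pat_1 + \pat_2$ becomes disjunction. The only constructor that is not first-order is the unbounded repetition $\pat^{n..m}$: here $m$ may be $\infty$, so I would use the $\mathrm{TC}$ operator. Writing $\rho_\pat(\bar u, \bar v) := \exists \bar z_1 \cdots \bar z_r\, \chi_\pat(\bar u, \bar v, \bar z_1, \ldots, \bar z_r)$ for the one-step "forget the valuation" relation on endpoint $k$-tuples (of appropriate width), the $n..m$ case with $m=\infty$ is $\bigl(\bar s = \bar t \vee \rho_\pat^{\ge 1}\bigr)$, where $\rho_\pat^{\ge 1}(\bar s,\bar t) := \exists \bar w\,(\rho_\pat(\bar s,\bar w) \wedge \mathrm{TC}_{(\bar u,\bar v)}[\rho_\pat(\bar u,\bar v)](\bar w,\bar t) \vee \rho_\pat(\bar s,\bar t))$ — i.e.\ at least one step followed by the reflexive-transitive closure; the finite-bound cases $n..m$ with $m < \infty$ stay in $\mathrm{FO}$ by explicit finite composition. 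Finally the output pattern $\pat_\return$ with $\return = (\omega_1,\ldots,\omega_n)$ is captured by existentially quantifying $\bar s, \bar t$ and all free-variable tuples of $\pat$, and equating the output components $x_i$ with the appropriate coordinates of $\mu(\omega_i)$ — when $\omega_i = x.k$ this means pulling the value through $\varphi_{\query_6}$.

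The main obstacle I expect is bookkeeping the arities correctly throughout: because identifiers are $k$-tuples, every endpoint "$s$" and every valuation entry is a block of $k$ first-order variables, and the translation of $\mathrm{TC}$ must operate on tuples of width $2k$ (so it is genuinely $\tc{2k}$, matching the arity-hierarchy discussion). I need to be careful that the property-value coordinate ($\constset$) is a single element while identifier coordinates come in blocks of $k$, so the translations of $R_5$ and $R_6$ and of property equalities in $\theta$ mix a $k$-block with a single element correctly. A secondary subtlety is that $\pat^{n..m}$ binds no free variables ($\sch{\pat^{n..m}} = \emptyset$), so the existential quantification over the valuation must happen \emph{inside} $\rho_\pat$ before closing under $\mathrm{TC}$ — one cannot expose per-iteration valuations, which is exactly why the semantics in Figure~\ref{fig:sem-endpoints} uses $\mu_\emptyset$ there, and the formula must mirror that. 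Once these arity and scoping issues are handled consistently, guarding by the $\mathrm{FO}$ validity check of the view conditions and substituting the $\varphi_{\query_i}$ for the graph predicates completes $\varphi_Q$, and a routine induction on $\pat$ establishes $\semd{Q} = \semd{\varphi_Q}$.
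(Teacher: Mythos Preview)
Your proposal is correct and follows essentially the same route as the paper: structural induction on $Q$, with the relational-algebra constructors handled in plain $\mathrm{FO}$ and the pattern-matching constructor handled by a nested structural induction on $\pat$ (which the paper factors out as Lemma~\ref{lem:pattern-translation}), invoking $\mathrm{TC}$ only for unbounded repetition. One small slip: the transitive closure ranges over endpoint $k$-tuples, so it is $\tc{k}$ rather than $\tc{2k}$ --- irrelevant for the present theorem, but it would matter for the arity-refined Theorem~\ref{thm:pgqextn-in-fotcn}.
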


In the proof, we decompose the translation from $\erwpgq$ to $\fotcc$ into two steps.  
(i) Algebraic core: union, difference, Cartesian product, projection, and selection are mapped naturally to first-order connectives, laying the base of the induction.  
(ii) Pattern matching: we take a path pattern~$\pat$ and produce an \(\mathrm{FO[TC]}\) formula
\(
  \varphi_{\pat}(\bar{y},\bar{x}_{\src},\bar{x}_{\tgt})
\)
that incorporates two variables $\bar{x}_{\src},\bar{x}_{\tgt}$ for the source and target, respectively, of the matched paths, and
whose other free variables $\bar y$ explicitly track every binding in the semantics of~$\pat$. This translation coincides with the one in~\cite{FigueiraLinPeterfreund24RelPersp} up to minor notational differences.

The induction stitches these two parts (i) and (ii) together to prove the semantics equivalence.
As an illustration of step~(ii), consider the Kleene-star pattern
\(\pat^{*}\).
Assume we have the translation  $\tau(\pat)\bigl(\bar x,\bar u,\bar v\bigr)$
  of $\pat$ with free variables are $\bar x,\bar u,\bar v$. 
  Due to the semantics of path patterns (Figure~\ref{fig:sem-endpoints}), the translation $\tau\!\bigl(\pat^{*}\bigr)$ has free variables $\bar x_{\src},\bar x_{\tgt}$ and is defined recursively as follows:
\[
\tau\!\bigl(\pat^{*}\bigr)\!\bigl(\bar x_{\src}, \bar x_{\tgt}\bigr)
  \;=\;
     \exists \bar x:\, \Bigl(\mathrm{TC}_{\bar u,\bar v}\;
       \tau(\pat)\bigl(\bar x,\bar u,\bar v\bigr)
  \Bigr)\!\bigl(\bar x_{\src},\bar x_{\tgt}\bigr),
\]
This concrete instance demonstrates how a recursive graph pattern is
captured in \(\mathrm{FO[TC]}\) by wrapping the translation of the
base pattern \(\pat\) inside an appropriate transitive closure.

The other direction also holds as the next theorem shows.

\newcommand{\thmfotcinrerwpgq}{
For every schema $\,\schema$, and every $\mathrm{FO[TC]}$ formula
\(
  \varphi(x_1,\ldots, x_n)
\)
over $\,\schema$,
there exists a query
\(
  Q_{\varphi}\in\erwpgq
\)
over $\,\schema$ with $\arity{Q_{\varphi}}=n$ such that
\[
  \semd{\varphi(x_1,\ldots, x_n)}
  \;=\;
  \semd{Q_{\varphi}}
\]
for every relational database $\db$ over~$\,\schema$.
}

\newcommand{\lemtctranslation}{
For every schema $\schema$ and $\mathrm{FO[TC]}$~formula~\(\varphi(\bar u,\bar v,\bar p)\)~over~$\schema$
whose free-variable tuples are $\bar u=(u_{1},\dots,u_{k})$,
$\bar v=(v_{1},\dots,v_{k})$
and (possibly empty) parameters
$\bar p=(p_{1},\dots,p_{\ell})$.
Define the $\mathrm{FO[TC]}$ formula
\[
  \varphi^{\mathrm{TC}}(\bar x,\bar y,\bar p)
  \;:=\;
  \mathrm{TC}_{(\bar u,\bar v)}
     \bigl[\varphi(\bar u,\bar v,\bar p)\bigr]
     (\bar x,\bar y),
  \qquad
  |\bar x|=|\bar y|=|\bar u|=|\bar v|=k.
\]
 
There exists an \erwpgq~query~over~$\schema$
\[
  Q_{\varphi^{\mathrm{TC}}},
  \qquad
  \sch{Q_{\varphi^{\mathrm{TC}}}}
      \;=\;
      \bar x\cup\bar y\cup\bar p,
\]
such that for every relational database $\db$ over~$\schema$,
\[
  \semd{Q_{\varphi^{\mathrm{TC}}}}
  \;=\;
  \semd{\varphi^{\mathrm{TC}}(\bar x,\bar y,\bar p)}.
\]
}

\begin{theorem}[$\mathrm{FO[TC]} \subseteq \erwpgq$]
\label{thm:fotc-contained-in-erwpgq}
\thmfotcinrerwpgq
\end{theorem}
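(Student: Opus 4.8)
Proof proposal for Theorem~\ref{thm:fotc-contained-in-erwpgq} ($\mathrm{FO[TC]} \subseteq \erwpgq$).

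The plan is to proceed by structural induction on the $\mathrm{FO[TC]}$ formula $\varphi(x_1,\dots,x_n)$, building for each subformula an $\erwpgq$ query whose free variables (i.e.\ output columns) track exactly the free variables of the subformula, with the relational domain $\constset$ containing the active domain plus any constants needed for the order. Since the structures are ordered (Remark after the relational structures definition), we have access to a linear order, a minimum and a maximum, and a successor relation definable in $\mathrm{FO}$; these are useful for encoding tuples as node identifiers of the appropriate arity in the extended views.

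First I would handle the first-order core. Atomic formulas $R(\bar x)$ and equalities $x=y$ are directly expressible in $\rpgq\subseteq\erwpgq$ using base relations and selection; the Boolean connectives $\lor,\land,\lnot$ and the quantifier $\exists$ translate to union, join (via Cartesian product plus selection), difference (on the active-domain complement, which is $\mathrm{FO}$-definable and hence in $\rpgq$), and projection, respectively. This is essentially the observation $\mathrm{FO}\subseteq\rpgq$ already noted before Theorem~\ref{thm:rwsupro}; the only care needed is bookkeeping of which output column corresponds to which variable so that the inductive invariant ``output columns $=$ free variables of the subformula'' is maintained, and so that parameters ($\bar p$) are threaded through unchanged.

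The heart of the argument is the transitive-closure case, which is isolated as Lemma~\ref{lemtctranslation} (the \texttt{lemtctranslation} statement): given an $\erwpgq$ query $Q_\varphi$ computing the $2k{+}\ell$-ary relation $\semd{\varphi(\bar u,\bar v,\bar p)}$, we must build an $\erwpgq$ query for $\mathrm{TC}_{(\bar u,\bar v)}[\varphi](\bar x,\bar y)$ with parameters $\bar p$. The idea is to use the view-construction layer with $k$-ary identifiers: for each fixed value $\bar c$ of the parameters, materialize a property graph view whose node set is the projection of $\semd{Q_\varphi}$ (restricted to $\bar p=\bar c$) onto its endpoints — i.e.\ $k$-tuples — and whose edge relation connects $\bar a$ to $\bar a'$ exactly when $(\bar a,\bar a',\bar c)\in\semd{Q_\varphi}$. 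Concretely, $R_1$ is the set of relevant $k$-tuples, $R_2$ is a fresh set of edge identifiers (one per triple $(\bar a,\bar a',\bar c)$ — here we may need identifier arity $2k$ or $2k+\ell$ for edges, which is exactly what the $n$-ary generalization in Definition~\ref{def:pgView-n} and Remark~\ref{rmk:arity-remark} permit), with $R_3,R_4$ the source/target projections, and $R_5=R_6=\varnothing$. Then the reachability pattern $\bigl((x)\rightarrow(y)\bigr)^{*}$ evaluated on this view, read off via an output pattern that returns $(\bar x,\bar y)$, computes exactly the reflexive–transitive closure; intersecting appropriately (or using $\bigl((x)\rightarrow(y)\bigr)^{+}$ composed suitably, together with the $n{=}0$ handling in the pattern semantics of Figure~\ref{fig:sem-endpoints}) yields the relation required by the $\mathrm{TC}$ semantics. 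Threading the parameters $\bar p$ through means the view must be parameterized; since $\erwpgq$ queries are composed from relational subqueries, we keep $\bar p$ as extra columns and Cartesian-product them back after the pattern-matching step, relying on the fact that $\semd{Q_\varphi}$ already segregates the data by parameter value.

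The main obstacle I anticipate is the parameter handling combined with the arity bookkeeping: the transitive closure must be taken \emph{for each fixed parameter tuple}, but the pattern-matching / view-construction step of $\erwpgq$ builds a single global graph. The resolution is to encode the parameters into the node and edge identifiers themselves — use identifiers of arity $k{+}\ell$ for nodes (pairs (endpoint-tuple, parameter-tuple)) so that node copies belonging to different parameter values are automatically disjoint and no path can cross between them, mirroring the ``multiple copies in different states'' technique of Example~\ref{ex:ex-incr-edges}. This is exactly where the arbitrary-arity generalization of $\pgViewExtendedUnion$ is essential and where a purely $\rwpgq$ (arity-1) construction would fail. After verifying this case, the induction closes: the top-level formula $\varphi(x_1,\dots,x_n)$ gets an $\erwpgq$ query with $\arity{Q_\varphi}=n$, and a final projection restricts to the free variables $x_1,\dots,x_n$ (dropping any auxiliary parameter columns that have become bound), establishing $\semd{\varphi(x_1,\dots,x_n)}=\semd{Q_\varphi}$ for all $\db$.
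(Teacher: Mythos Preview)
Your proposal is correct and follows essentially the same route as the paper: structural induction on the $\mathrm{FO[TC]}$ formula, with the first-order connectives and quantifiers handled by the standard relational-algebra translations (using the active-domain query for complement and universal quantification), and the $\mathrm{TC}$ operator handled by materializing an $n$-ary property-graph view and applying the reachability pattern $(\bar x)\rightarrow^{*}(\bar y)$---the paper isolates this step as Lemma~\ref{lem:tc-translation}. The one difference worth noting is the treatment of the parameter tuple $\bar p$: the paper builds a separate graph $G_{\bar c}$ for each concrete parameter value $\bar c$ and realizes the resulting data-dependent union as a join, whereas you propose to fold $\bar p$ into the node identifiers (arity $k{+}\ell$) so that a single global graph handles all parameter values simultaneously; both devices work, and yours makes the fit with the $\erwpgq$ syntax somewhat more explicit.
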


The proof is based on a full translation \(T\) from \(\mathrm{FO[TC]}\) to \(\erwpgq\).  
First-order connectives and quantifiers are translated inductively in a straightforward manner. The key step is the translation of transitive closure subformulas:
\[
  \psi
  \;=\;
  \mathrm{TC}_{(\bar u,\bar v)}
     \bigl[\varphi(\bar u,\bar v,\bar p)\bigr]
     (\bar x,\bar y),
  \qquad |\bar x|=|\bar y|=|\bar u|=|\bar v|.
\]

Note that the parameters \(\bar p\) remain fixed across all iterations of the transitive closure.  
By the inductive hypothesis, we obtain the translation \(T(\varphi)\) of \(\varphi\), and define
\(
  \mathit{C} := \pi_{\bar p}(T(\varphi)),
\)
the projection onto the parameter values.

Then, for each concrete parameter tuple \(\bar c \in \mathit{C}\),  
we isolate the corresponding \(\bar u, \bar v\) pairs and use the extended graph constructor from Section~\ref{sec:extensions}, namely $\pgViewExtendedUnion$, to define the new graph \(G_{\bar c}\).

We then apply the reachability pattern:
\(
  \pat_{\mathit{reach}}
  := \left((\bar x)\xrightarrow{}^{\!\!*}(\bar y)\right)_{\bar x, \bar y}
\)
to \(G_{\bar c}\). This returns all \((\bar x,\bar y)\) pairs connected by a path in \(G_{\bar c}\). We define the overall query:
\[
  Q_{\psi}
  :=\;
  \bigcup_{\bar c\in\mathit{C}}
    \Bigl(
      \pat_{\mathit{reach}}\!\bigl(G_{\bar c}\bigr)
      \times
      \{\bar c\}
    \Bigr),
  \qquad \sch{Q_{\psi}}=(\bar x,\bar y).
\]
 Note that this union is realized by an ordinary join
\(
  \pat_{\mathit{reach}}(G_{\bar c})
  \Join
  \sigma_{\bar p=\bar c}(\mathit{C})
\).
By construction, \(\semd{Q_{\psi}} = \semd{\psi}\) for every database \(\db\), so \(Q_{\psi}\) can be treated as an atomic subquery by outer Boolean connectives and quantifiers. 

Combining 
Theorems~\ref{thm:erwpgq-in-fotc} and ~\ref{thm:fotc-contained-in-erwpgq}, we obtain
\begin{corollary}[Expressive equivalence]
\label{cor:erwpgq-equals-fotc}
In terms of expressiveness, 
$
  \erwpgq
  \;=\;
  \mathrm{FO[TC]}.
$
\end{corollary}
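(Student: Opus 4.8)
The plan is to read the equality \(\erwpgq = \mathrm{FO[TC]}\) as a pair of containments and obtain each half directly from a theorem already proved above. By the conventions of Section~\ref{sec:query-languages-expressivness}, ``\(\mathcal{L}_1 = \mathcal{L}_2\)'' in terms of expressiveness means that each language is at least as expressive as the other: every query of one is, on all databases, equivalent to some query of the other. Theorem~\ref{thm:erwpgq-in-fotc} gives one direction --- for each \(Q \in \erwpgq\) of arity \(n\) there is an \(\mathrm{FO[TC]}\) formula \(\varphi_Q(x_1,\dots,x_n)\) with \(\semd{Q} = \semd{\varphi_Q}\) on every \(\db\) --- which is exactly \(\erwpgq \subseteq \mathrm{FO[TC]}\). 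Theorem~\ref{thm:fotc-contained-in-erwpgq} gives the other --- for each \(\mathrm{FO[TC]}\) formula \(\varphi(x_1,\dots,x_n)\) there is \(Q_\varphi \in \erwpgq\) of arity \(n\) with \(\semd{\varphi} = \semd{Q_\varphi}\) on every \(\db\) --- which is exactly \(\mathrm{FO[TC]} \subseteq \erwpgq\). Concatenating the two containments yields the asserted equivalence, so the corollary needs nothing beyond citing the two theorems.

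The one routine check when gluing the halves is that both translations are \emph{arity-faithful} and \emph{schema-preserving}: fixing a database \(\db\) over a schema \(\schema\), both an \(\erwpgq\) query and an \(\mathrm{FO[TC]}\) formula denote a finite relation over \(\constset\); each of the two translations keeps the arity \(n\) and stays over the same \(\schema\); and the two notions of \(\semd{\cdot}\) agree on such outputs. Hence equivalence ``on all \(\db\)'' composes, and the round trips \(\erwpgq \to \mathrm{FO[TC]} \to \erwpgq\) and \(\mathrm{FO[TC]} \to \erwpgq \to \mathrm{FO[TC]}\) stay within one signature, which is all the expressiveness equality demands.

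Consequently the corollary itself presents no real obstacle --- the work was discharged inside the two theorems, and I expect the bottleneck of the whole program to be the \(\mathrm{FO[TC]} \subseteq \erwpgq\) direction. There, the Boolean and quantifier cases of the induction translate to union, difference, complement relative to a computed active-domain relation, Cartesian product and projection in the obvious way, but a subformula \(\mathrm{TC}_{(\bar u,\bar v)}[\psi(\bar u,\bar v,\bar p)](\bar x,\bar y)\) with \(|\bar u| = |\bar v| = k\) is precisely what forces the \(k\)-ary identifiers separating \(\erwpgq\) from \(\rwpgq\): one recursively translates \(\psi\), projects out the finite parameter set \(C\), and for each \(\bar c \in C\) materialises via \(\pgViewExtendedUnion\) a graph \(G_{\bar c}\) whose nodes are the relevant \(k\)-tuples and whose edges are the \(\psi\)-step pairs under \(\bar c\), then runs the arity-\(k\) reachability pattern \(\bigl((\bar x)\rightarrow^{*}(\bar y)\bigr)_{\bar x,\bar y}\) on \(G_{\bar c}\) and joins back against \(C\); correctness is checked by matching the path semantics of Figure~\ref{fig:sem-endpoints} against the sequence-of-tuples definition of \(\mathrm{TC}\), the delicate part being that the relations fed to \(\pgViewExtendedUnion\) must genuinely satisfy the \(k\)-ary property-graph-view conditions (disjoint node and edge relations, functional source and target), which one arranges by disjointly tagging node- and edge-copies.
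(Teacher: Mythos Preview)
Your proposal is correct and matches the paper's approach exactly: the corollary is stated immediately after Theorems~\ref{thm:erwpgq-in-fotc} and~\ref{thm:fotc-contained-in-erwpgq} with the one-line justification ``Combining Theorems~\ref{thm:erwpgq-in-fotc} and~\ref{thm:fotc-contained-in-erwpgq}, we obtain,'' which is precisely your argument of pairing the two containments. Your additional remarks on arity- and schema-preservation and on the TC-clause construction are accurate but go beyond what the paper supplies for the corollary itself (that content lives in the proofs of the two cited theorems and Lemma~\ref{lem:tc-translation}).
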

This leads to the following straightforward observation. 
    \begin{corollary}[\nlog~capture]\label{cor:erwpgq-equals-nl}
The evaluation problem for $\erwpgq$ is \nlog~complete.
\end{corollary}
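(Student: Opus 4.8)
The plan is to deduce the corollary directly from Corollary~\ref{cor:erwpgq-equals-fotc} together with the classical descriptive-complexity characterization of $\nlog$ by first-order logic with transitive closure. Recall the Immerman--Vardi--style result that on ordered structures $\fotcc$ captures exactly $\nlog$: every query expressible in $\fotcc$ is decidable in nondeterministic logarithmic space, and conversely every $\nlog$ decision problem over ordered structures is definable by an $\fotcc$ formula. Since our structures are assumed ordered (see the remark in Section~\ref{sec:prelims}), this characterization applies verbatim.

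First I would establish the upper bound. Take a fixed query $Q\in\erwpgq$ and a candidate tuple $\bar a$. By Corollary~\ref{cor:erwpgq-equals-fotc} there is an $\fotcc$ formula $\varphi_Q$ with $\semd{Q}=\semd{\varphi_Q}$ on every database $\db$, so deciding $\bar a\in\semd{Q}$ reduces to deciding $\db\models\varphi_Q[\bar a/\bar x]$. The latter is in $\nlog$ by the standard evaluation argument for $\fotcc$: the Boolean-connective and first-order-quantifier structure of $\varphi_Q$ contributes only a constant number of pointers into $\adom(\db)$, each storable in $O(\log|\db|)$ bits, and each occurrence of a $\mathrm{TC}$ operator is evaluated by the usual nondeterministic reachability procedure that guesses the intermediate tuples one at a time, reusing space. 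Composing a fixed formula's constantly many such subroutines keeps the whole computation within $O(\log|\db|)$ work space, so the evaluation problem for $\erwpgq$ is in $\nlog$.

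For the matching lower bound I would invoke $\nlog$-hardness of $\fotcc$ evaluation, or more concretely exhibit the witnessing query. Graph reachability ($s$--$t$ connectivity in a directed graph) is $\nlog$-complete under first-order (indeed logspace) reductions, and it is expressible in $\erwpgq$: it is the Boolean pattern-matching query $\bigl((x)\xrightarrow{}^{*}(y)\bigr)_{\emptyset}$ evaluated on the graph view whose node relation and edge relation are read off the input; this is already an $\rwpgq$ query, hence an $\erwpgq$ query. Thus a $\nlog$-complete problem lies in the evaluation problem for $\erwpgq$, which yields $\nlog$-hardness. Combining the two bounds gives $\nlog$-completeness.

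I do not anticipate a genuine obstacle here, since the corollary is a packaging of Corollary~\ref{cor:erwpgq-equals-fotc} with textbook descriptive complexity. The only point requiring a little care is bookkeeping the logarithmic space bound when several $\mathrm{TC}$ operators are nested or sequenced inside the fixed formula $\varphi_Q$; but because $\varphi_Q$ is fixed (data complexity), its nesting depth is a constant, each $\mathrm{TC}$ layer can reuse the same $O(\log|\db|)$ scratch region, and closure of $\nlog$ under the relevant compositions (in particular $\nlog=\mathrm{co}\nlog$ by Immerman--Szelepcsényi, which handles negated $\mathrm{TC}$ subformulas) ensures the bound survives. I would state these closure facts explicitly and cite them rather than reprove them.
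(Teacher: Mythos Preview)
Your proposal is correct and follows exactly the intended route: the paper itself offers no proof beyond calling the corollary a ``straightforward observation'' following Corollary~\ref{cor:erwpgq-equals-fotc}, and what you have written is precisely the unpacking of that observation via Immerman's $\fotcc=\nlog$ theorem on ordered structures. Your explicit upper-bound argument and the reachability witness for hardness are more detailed than anything the paper provides, but they are the standard justification the authors are implicitly invoking.
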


We now dive deeper into the language 
$\erwpgq{}$ to better understand what it is that enables it to capture every query definable in 
$\fotcc$.

\subsection{\texorpdfstring{Inside $\erwpgq$}{Inside ERWPGQ}}
\label{subsec:inside-erwpgq}
What makes $\erwpgq$ more expressive than $\rwpgq$?  
To answer this, we adopt a more fine-grained analysis of the language's capabilities.  
Recall from Definition~\ref{def:pgView-n} the function
\[
  \pgViewExtendedUnion \;=\; \bigcup_{n\ge 1} \pgView^{=n},
\]
which allows node and edge identifiers of \emph{any} arity.  
Higher arities enable richer view definitions, which in turn unlock additional expressive power not available in the unary-identifier setting of $\rwpgq$.

For every fixed integer \(n\ge 1\), we define $\pgqextn$~as the language obtained from $\erwpgq$~by replacing 
$\patExtended{\return}(\bar\query)$ with 
$\patExtendedN{\return}(\bar\query)$, and set the semantics to be 
\[
  \semd{\patExtendedN{\return}(\bar\query)}
    := \sem{\pat_\return}_G,
    \quad\text{where }G = \pgView^{n}
        \bigl(
          \semd{\query_1},\ldots,
          \semd{\query_6}
        \bigr)
\]
and \(\bar\query \;:=\; (\query_1, \query_2, \query_3, \query_4, \query_5, \query_6).\)

On the $\fotcc$ side, we write $\fotcn$ to denote the fragment of first-order logic with transitive closure in which all transitive closure operators 
\[
  \mathrm{TC}_{\bar{u},\bar{v}}[\psi(\bar{u},\bar{v},\bar{p})](\bar{x},\bar{y})
\]
use tuples $\bar{u}, \bar{v}, \bar{x}, \bar{y}$ of fixed arity $n$.

With this notation, we can show that $\pgqextn$ translates to $\fotcn$.

\newcommand{\thmpgqextninfotcn}{
Let \(n\ge 1\).  
For every schema~\(\schema\) and every query  
\(Q \in \pgqextn\) over~\(\schema\) with \(\arity{Q} = k\),  
there exists an \(\fotcn\) formula  
\(
  \varphi_{Q}(x_{1},\dots,x_{k})
\)  
such that  
\[
  \semd{Q}  
  \;=\;  
  \semd{\varphi_{Q}(x_{1},\dots,x_{k})}  
\]  
for every relational database~\(\db\) over~\(\schema\).
}

\begin{theorem}[\(\pgqextn \subseteq \fotcn\)]
\label{thm:pgqextn-in-fotcn}
\thmpgqextninfotcn
\end{theorem}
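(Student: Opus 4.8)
\textbf{Proof plan for Theorem~\ref{thm:pgqextn-in-fotcn} ($\pgqextn \subseteq \fotcn$).}

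The plan is to adapt the proof of Theorem~\ref{thm:erwpgq-in-fotc} and carefully track arities, verifying that the bound $n$ on identifier arity propagates to a bound $n$ on the arity of every transitive-closure operator used in the resulting formula. I would argue by structural induction on the query $Q \in \pgqextn$, following the two-part decomposition already used for Theorem~\ref{thm:erwpgq-in-fotc}: (i) the relational-algebra core, and (ii) the pattern-matching layer. For part (i), union, difference, Cartesian product, projection, and selection are translated using only first-order connectives and quantifiers, introducing \emph{no new} transitive-closure operators, so any TC operator appearing in $\varphi_Q$ comes from a subquery of the form $\patExtendedN{\return}(\bar\query)$; these cases are routine and identical to the $\erwpgq$ argument.

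The substance is in part (ii). Here I would invoke the translation $\tau$ of path patterns to $\fotcc$ from the proof of Theorem~\ref{thm:erwpgq-in-fotc} (which coincides with the one in~\cite{FigueiraLinPeterfreund24RelPersp}), but instantiated on the graph $G = \pgView^{n}(\semd{\query_1},\ldots,\semd{\query_6})$. The key observation is that, by Definition~\ref{def:pgView-n}, the node- and edge-identifier relations $R_1,R_2$ have arity at most $n$, the source/target relations $R_3,R_4$ arity at most $2n$, and the property/label relations are $(n{+}1)$- and $(n{+}2)$-ary. Consequently, in the semantics of $\patExtendedN{\return}$ (as spelled out in the paragraph following Definition~\ref{def:pgView-n-func}), every source endpoint $s$, target endpoint $t$, and every value $\mu(z)$ in a variable mapping is a $k$-tuple with $k \le n$. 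Tracking this through $\tau$: the atomic building blocks — $\exists e\, \bigl(R_2(e) \wedge \src(e)=\bar x_{\src} \wedge \tgt(e)=\bar x_{\tgt}\bigr)$ for a single directed edge, node atoms, condition atoms — are plain FO formulas with no TC. The only TC operators are introduced by the repetition pattern $\pat^{n..m}$ (in particular $\pat^{*}$), where $\tau(\pat^{*})(\bar x_{\src},\bar x_{\tgt}) = \exists \bar x\,\bigl(\mathrm{TC}_{\bar u,\bar v}\,\tau(\pat)(\bar x,\bar u,\bar v)\bigr)(\bar x_{\src},\bar x_{\tgt})$, and there the tuples $\bar u,\bar v,\bar x_{\src},\bar x_{\tgt}$ range over endpoints of $G$, hence have arity exactly the identifier arity $k \le n$. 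To land strictly in $\fotcn$ (where \emph{every} TC operator uses tuples of arity exactly $n$), I would pad: if $k < n$, replace each $\mathrm{TC}_{\bar u,\bar v}$ of arity $k$ by an arity-$n$ operator $\mathrm{TC}_{(\bar u,\bar u'),(\bar v,\bar v')}[\,\tau(\pat)(\bar x,\bar u,\bar v) \wedge \bar u' = \bar v' = \bar 0\,]$ using $n-k$ dummy coordinates frozen to a fixed domain element, and project them away afterward — a standard arity-padding trick that does not change the defined relation. (If finite arities $k$ with $k>n$ could arise this would be a genuine obstacle, but they cannot, since $\pgView^{n}$ is by Definition~\ref{def:pgView-n-func} the union of $\pgView^{=i}$ only for $i \le n$.)

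The main obstacle I anticipate is bookkeeping rather than conceptual: making sure that the \emph{bounded} repetitions $\pat^{n..m}$ with finite $m$ — which in principle could be unrolled into an FO formula without any TC — and the composition of sub-patterns of \emph{different} identifier arities within one $\pgqextn$ query (arities may vary per subquery but all are $\le n$) are handled uniformly, so that the arity-padding is applied consistently and the final formula genuinely lies in $\fotcn$. I would also double-check that no TC operator is smuggled in through the relational layer when a subquery result feeds into one of the six components $\query_i$ of an outer pattern-matching application: by the inductive hypothesis each $\query_i$ already translates into $\fotcn$, and composing it with the pattern-matching translation only adds TC operators of arity equal to that pattern's identifier arity, still $\le n$, hence still within $\fotcn$ after padding. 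With these checks in place, the induction closes and $\semd{Q} = \semd{\varphi_Q(x_1,\dots,x_k)}$ follows exactly as in Theorem~\ref{thm:erwpgq-in-fotc}.
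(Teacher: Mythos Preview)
Your proposal is correct and follows essentially the same route as the paper: both adapt the translation $\tau$ from Theorem~\ref{thm:erwpgq-in-fotc}, leave the relational-algebra clauses unchanged, and in the pattern clause observe that the only $\mathrm{TC}$ operators arise from Kleene star and have arity equal to the identifier arity of the view, hence $\le n$. Your explicit arity-padding step (to lift $k<n$ to exactly $n$) is a point the paper glosses over, so if anything your write-up is slightly more careful here.
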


The converse translation also exists. 
\newcommand{\thmfotcninpgqextn}{
Let \(n \ge 1\).  
For every schema~\(\schema\) and every  
\(\fotcn\) formula \(\varphi(x_1,\ldots, x_k)\) over~\(\schema\),  
there exists a query  
\(Q_{\varphi} \in \pgqextn\) over~\(\schema\)  
with \(\arity{Q_{\varphi}} = k\) such that  
\[
  \semd{\varphi(x_1,\ldots, x_k)}  
  \;=\;  
  \semd{Q_{\varphi}}  
\]  
for every relational database~\(\db\) over~\(\schema\).
}
\begin{theorem}[\(\fotcn \subseteq \pgqextn\)]
\label{thm:fotcn-in-pgqextn}
\thmfotcninpgqextn
\end{theorem}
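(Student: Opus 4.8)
The plan is to obtain this as the fixed-arity refinement of Theorem~\ref{thm:fotc-contained-in-erwpgq}: re-run the translation from $\fotcc$ to $\erwpgq$ used there, but now carry the arity of the transitive-closure variables through the induction so that, on an input drawn from $\fotcn$, every property graph view constructed by the translation has node identifiers of arity at most $n$ --- equivalently, the output is a $\pgqextn$ query. Concretely, I would induct on the structure of an $\fotcn$ formula $\varphi(x_1,\dots,x_k)$, maintaining the invariant that the produced query $Q_\varphi$ has arity $k$, that $\semd{Q_\varphi}=\semd{\varphi}$ on every database $\db$, and that $Q_\varphi$ performs pattern matching only over views whose node identifiers have arity at most $n$.

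For atoms $R(\bar x)$ and equalities $x=y$, and for $\lnot,\land,\lor,\exists,\forall$, the construction is the relational-algebra translation already used in Theorem~\ref{thm:fotc-contained-in-erwpgq}: no pattern matching is invoked, the output is an $\rpgq$ query, and it therefore lies in $\pgqextn$ with no arity condition to verify. The only case touching the view-construction layer is a transitive-closure subformula $\psi=\mathrm{TC}_{(\bar u,\bar v)}[\varphi(\bar u,\bar v,\bar p)](\bar x,\bar y)$ with $|\bar u|=|\bar v|=|\bar x|=|\bar y|=n$. By the inductive hypothesis $\varphi$ is captured by a $\pgqextn$ query $Q_\varphi$ with $\sch{Q_\varphi}=\bar u\cup\bar v\cup\bar p$. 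Following the $\erwpgq$ argument, one uses projections of $Q_\varphi$ to assemble a property graph view whose node identifiers are the $n$-tuples occurring as $\bar u$ or $\bar v$, whose edges realize the one-step relation $\{(\bar u,\bar v)\mid(\bar u,\bar v,\bar c)\in\semd{Q_\varphi}\}$ for the relevant parameter values $\bar c$, and with empty labels and properties; one then evaluates the reachability output pattern $((\bar x)\rightarrow^{*}(\bar y))_{\bar x,\bar y}$ over this view and reconstructs the $\bar p$-component by an ordinary join against $\pi_{\bar p}(Q_\varphi)$, exactly as in Theorem~\ref{thm:fotc-contained-in-erwpgq}. By the extended pattern semantics for $\pgqextn$ (endpoints and variable bindings ranging over $n$-tuples, cf.\ Section~\ref{sec:ext-rwpgq}), Kleene iteration over a node-arity-$n$ view computes exactly the transitive closure of a binary relation on $n$-tuples, so this query realizes $\semd{\psi}$; and since every pattern match is over a view of node-arity at most $n$, i.e.\ via $\pgView^{n}$, the query stays in $\pgqextn$. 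Composing the cases of the induction yields the claimed $Q_\varphi$.

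The main obstacle is precisely this arity bookkeeping in the transitive-closure case, which is vacuous in the unbounded-arity setting of Theorem~\ref{thm:fotc-contained-in-erwpgq} and comes in two parts. First, the parameter tuple $\bar p$ is \emph{not} arity-restricted in $\fotcn$, so one must not fold the concrete parameters $\bar c$ into the graph identifiers --- that would inflate the node arity to $|\bar u|+|\bar p|$; instead $\bar p$ is threaded through the relational layer via the single-view-plus-post-join reassembly above, or, alternatively, $\psi$ is first rewritten as an equivalent parameter-free transitive closure over the same $n$-ary variables. Second, the one-step relation installed as the edge set is a relation on \emph{pairs} of $n$-tuples and so may be larger than the set of available $n$-ary identifiers; here one invokes Remark~\ref{rmk:arity-remark}, which permits edge identifiers with an arity of their own: take the edge-identifier relation to be that one-step relation itself, with source and target obtained by projecting each $2n$-ary edge identifier onto its first, respectively last, $n$ coordinates. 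What the level $n$ (and the match with $\fotcn$) tracks is the \emph{node}-identifier arity, and the assumed linear order makes all the required component relations definable from $Q_\varphi$. Once these two points are discharged --- and checked not to disturb the relational-algebra cases --- the induction goes through, giving $Q_\varphi\in\pgqextn$ of arity $k$ with $\semd{\varphi(x_1,\dots,x_k)}=\semd{Q_\varphi}$.
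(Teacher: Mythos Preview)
Your proposal is correct and takes essentially the same approach as the paper: structural induction on the formula, with the first-order connectives handled by relational algebra and the transitive-closure case handled by the per-parameter graph construction plus post-join of Lemma~\ref{lem:tc-translation} (which is what your reference to Theorem~\ref{thm:fotc-contained-in-erwpgq} unpacks to). You are in fact more explicit than the paper about the two arity pitfalls---keeping $\bar p$ out of the node identifiers and invoking Remark~\ref{rmk:arity-remark} to accommodate $2n$-ary edge identifiers while nodes stay $n$-ary---which is precisely the bookkeeping that distinguishes the fixed-$n$ argument from the unbounded-arity one.
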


Both Theorems~\ref{thm:pgqextn-in-fotcn} and~\ref{thm:fotcn-in-pgqextn} are obtained by inductive syntax-directed translations.

{These are refined versions of the unary translations already given in
Theorems~\ref{thm:erwpgq-in-fotc} and~\ref{thm:fotc-contained-in-erwpgq} to the \(n\)-ary case.
For \(\pgqextn \subseteq \fotcn\), we  generalize the translation
from
Theorem~\ref{thm:erwpgq-in-fotc} 
     to ensure the resulting formula is in 
     \(\tc{n}\).
For \(\fotcn \subseteq \pgqextn\), we {generalize} the translation
from
        Theorem~\ref{thm:fotc-contained-in-erwpgq} to 
        ensure the resulting query is in \pgqextn.
This gives desired equivalence \(\pgqextn = \fotcn\).}

Combining
Theorems~\ref{thm:pgqextn-in-fotcn}
and~\ref{thm:fotcn-in-pgqextn},
we obtain the following.
\begin{corollary}[Expressive equivalence for fixed arity]
\label{cor:pgqextn-equals-fotcn}
In terms of expressiveness, 
\[
  \pgqextn
  \;=\;
  \fotcn
  \qquad
  \text{for every } n\ge 1.
\]
\end{corollary}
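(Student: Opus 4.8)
The corollary is a direct consequence of the two preceding theorems: two query languages are expressively equivalent precisely when each is at least as expressive as the other, so Theorem~\ref{thm:pgqextn-in-fotcn} ($\pgqextn \subseteq \fotcn$) together with Theorem~\ref{thm:fotcn-in-pgqextn} ($\fotcn \subseteq \pgqextn$) yields $\pgqextn = \fotcn$ for every fixed $n \ge 1$. So the plan is simply to invoke both inclusions. The real content lives in those two theorems, which I would prove as arity-tracking refinements of the unary translations of Theorems~\ref{thm:erwpgq-in-fotc} and~\ref{thm:fotc-contained-in-erwpgq}, and the remarks below sketch how I would carry those out.

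For $\pgqextn \subseteq \fotcn$, I would induct on the structure of a query $Q \in \pgqextn$. The relational-algebra cases ($\cup$, $\setminus$, $\times$, $\pi$, $\sigma$) go through verbatim from Theorem~\ref{thm:erwpgq-in-fotc}: they introduce no transitive closure and do not affect the arity bound. The only case that touches $\mathrm{TC}$ is $\patExtendedN{\return}(\bar\query)$, where the view $\pgView^{n}(\semd{\query_1},\dots,\semd{\query_6})$ has identifiers that are $k$-tuples for some $k \le n$, and path composition in Figure~\ref{fig:sem-endpoints} equates such $k$-tuples. Translating the pattern as in the $\pat^{*}$ illustration turns each Kleene star into a $\mathrm{TC}_{\bar u, \bar v}$ with $|\bar u| = |\bar v| = k \le n$; padding to arity exactly $n$ (which is harmless) keeps the resulting formula in $\fotcn$.

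For $\fotcn \subseteq \pgqextn$, I would induct on an $\fotcn$ formula. FO connectives and quantifiers translate into the relational-algebra fragment of $\pgqextn$ exactly as in Theorem~\ref{thm:fotc-contained-in-erwpgq}. For a subformula $\psi = \mathrm{TC}_{(\bar u,\bar v)}[\varphi(\bar u,\bar v,\bar p)](\bar x,\bar y)$ with $|\bar u| = |\bar v| = n$: using the inductive translation $T(\varphi)$, for each parameter tuple $\bar c$ I isolate the $(\bar u,\bar v)$ pairs, feed them as $n$-ary node identifiers into $\pgView^{n}$ to build $G_{\bar c}$, apply the reachability pattern $((\bar x)\rightarrow^{*}(\bar y))_{\bar x,\bar y}$ on $G_{\bar c}$, and join with $\{\bar c\}$. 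Since $\bar u,\bar v$ already have arity $n$, the view $G_{\bar c}$ uses exactly $n$-ary identifiers, so the query never escapes $\pgqextn$.

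The main obstacle is purely a bookkeeping one: keeping the arity invariant \emph{tight} on both sides. In the forward direction, one must check that no pattern composition ever forces a $\mathrm{TC}$ of argument arity exceeding $n$ — this uses that $\pgView^{n}$ caps identifier arity at $n$ and that patterns only ever equate entire identifier tuples, never mixing arities. In the backward direction, one must check that the view built for an arity-$n$ transitive closure never needs identifiers of arity greater than $n$; in particular, nonempty parameters $\bar p$ are absorbed by a product/join rather than inflating the identifier arity. With these invariants verified, the corollary follows immediately by the definition of expressive equivalence.
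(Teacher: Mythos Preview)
Your proposal is correct and matches the paper's approach exactly: the corollary is stated as an immediate consequence of Theorems~\ref{thm:pgqextn-in-fotcn} and~\ref{thm:fotcn-in-pgqextn}, with no further argument. Your sketches of how to prove those two theorems as arity-tracking refinements of Theorems~\ref{thm:erwpgq-in-fotc} and~\ref{thm:fotc-contained-in-erwpgq} also align with the paper's own treatment in the appendix.
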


We can derive the following expressiveness chain on fragments within $\erwpgq$.

\begin{theorem}[Expressiveness chain]\label{thm:one-two-chain}
\[
  \rwpgq
  \;=\;
  \pgqext{1}
  \;=\;
  \fotc{1}
  \;\subsetneq\;
  \fotc{2}
  \;=\;
  \fotcn
  \;=\;
  \erwpgq.
\]
\end{theorem}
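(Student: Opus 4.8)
The plan is to assemble the chain from results already proved in the paper, together with one genuine separation that must be argued independently. First, the identity $\rwpgq = \pgqext{1}$ is essentially definitional: when $n=1$, the extended view constructor $\pgView^{1}$ coincides with $\pgView$ (as noted right after Definition~\ref{def:pgView-n-func}), and the extended pattern $\patExtendedN{\return}$ with arity-$1$ identifiers is exactly the core output pattern $\pat_{\return}$; so the two languages have the same syntax and semantics. Next, $\pgqext{1} = \fotc{1}$ is the $n=1$ instance of Corollary~\ref{cor:pgqextn-equals-fotcn} (equivalently, the combination of Theorems~\ref{thm:pgqextn-in-fotcn} and~\ref{thm:fotcn-in-pgqextn} at arity $1$). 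Symmetrically, $\fotcn = \pgqextn = \erwpgq$: the first equality is Corollary~\ref{cor:pgqextn-equals-fotcn}, and $\pgqextn$ meets $\erwpgq$ because an $\erwpgq$ query uses only finitely many distinct identifier arities, each bounded by some $k$, so it can be simulated in $\pgqext{k}$, while on ordered structures $\fotc{k}$ collapses to $\fotc{2}$; conversely $\pgqextn \subseteq \erwpgq$ is immediate from $\pgView^{2} \subseteq \pgViewExtendedUnion$. Finally $\fotc{2} = \fotcn$ is precisely the classical transitive-closure collapse on ordered structures, which the paper invokes: over ordered (equivalently, over structures with a built-in linear order and its induced pairing) every $k$-ary transitive closure can be simulated by a binary one, since a $k$-tuple is coded by a single element and TC over the coded relation has arity $2$; I would cite the standard descriptive-complexity fact that $\fotcc = \fotc{2}$ on ordered structures (e.g.\ Immerman), noting that $\fotc{1}$ is genuinely weaker because arity-$1$ TC only expresses reachability in the "element graph" and cannot, for instance, express pairing-based reachability needed for all of $\nlog$.

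The one part that is not a bookkeeping exercise is the strict inclusion $\fotc{1} \subsetneq \fotc{2}$. The plan is to transfer it through the equalities already in hand: $\fotc{1} = \rwpgq$ and $\fotc{2} = \erwpgq$, so $\fotc{1} \subsetneq \fotc{2}$ is equivalent to $\rwpgq \subsetneq \erwpgq$, which in turn follows by combining Theorem~\ref{thm:rw_subset_nl} (Boolean $\rwpgq$ does not capture all of $\nlog$) with Corollary~\ref{cor:erwpgq-equals-nl} ($\erwpgq$ is $\nlog$-complete, hence its Boolean queries capture all Boolean $\nlog$ problems on ordered structures). Since there is a Boolean $\nlog$ property — e.g.\ directed $s$--$t$ reachability phrased appropriately, or the "increasing-amounts path" query of Example~\ref{ex:ex-incr-edges} — that lies in $\erwpgq$ but provably not in $\rwpgq$, the inclusion is strict. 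Equivalently, and more directly on the logic side, one exhibits a single $\fotc{2}$ sentence (binary transitive closure of a definable relation coding pairs) that is not $\fotc{1}$-expressible, appealing to the fact that $\fotc{1}$ queries are Gaifman-local-like / have bounded "locality radius" in the relevant sense whereas $2$-ary TC is $\nlog$-hard.

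I expect the main obstacle to be making the separation $\fotc{1}\subsetneq\fotc{2}$ airtight rather than merely plausible: one must be careful that the witnessing query is genuinely in $\erwpgq$ (the paper's Example~\ref{ex:ex-incr-edges} and Theorem~\ref{thm:erw_vs_rw} do this) and genuinely outside $\rwpgq$. The cleanest route is to lean entirely on Theorem~\ref{thm:rw_subset_nl} and Corollary~\ref{cor:erwpgq-equals-nl}, so that no new inexpressibility argument is needed — the strictness is inherited for free from the already-established fact that $\rwpgq$ sits strictly below $\nlog$ while $\erwpgq$ captures it. The remaining equalities are then purely mechanical: unwind definitions for $\rwpgq=\pgqext{1}$, cite Corollary~\ref{cor:pgqextn-equals-fotcn} for the two "$\pgqext{}=\fotc{}$" links, cite the classical arity-collapse of transitive closure on ordered structures for $\fotc{2}=\fotcn$, and observe $\erwpgq=\pgqextn$ via the finite-arity-bound argument. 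Stitching these five facts in sequence yields the displayed chain.
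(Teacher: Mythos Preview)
Your assembly matches the paper's proof almost exactly: both take $\rwpgq=\pgqext{1}$ as definitional, invoke Corollary~\ref{cor:pgqextn-equals-fotcn} for the $\pgqextn=\fotcn$ identifications, and cite the classical collapse $\fotc{2}=\fotcn$ on ordered structures (Immerman). The only deviation is how you justify the strict inclusion $\fotc{1}\subsetneq\fotc{2}$: the paper simply cites Immerman directly for it, whereas you route it through Theorem~\ref{thm:rw_subset_nl} and Corollary~\ref{cor:erwpgq-equals-nl}. That detour works in spirit but contains one slip you should fix: Corollary~\ref{cor:erwpgq-equals-nl} asserts that the \emph{evaluation problem} for $\erwpgq$ is $\nlog$-complete, which does not by itself imply that $\erwpgq$ \emph{captures} $\nlog$ (a language with a single $\nlog$-hard query would have an $\nlog$-complete evaluation problem yet express almost nothing). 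What you actually need is Corollary~\ref{cor:erwpgq-equals-fotc} ($\erwpgq=\fotcc$) together with the descriptive-complexity fact that $\fotcc$ captures $\nlog$ on ordered structures. Once patched, your route still leans on Immerman anyway---just on the capture theorem rather than the arity-hierarchy separation---so the paper's one-line appeal to Immerman for both the strictness and the collapse is the cleaner path.
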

\begin{proof}
The chain begins with the observation that the read-write fragment is, by definition, the same as the arity-1 extension; hence $\rwpgq=\pgqext{1}$.  Corollary~\ref{cor:pgqextn-equals-fotcn} then equates this fragment with unary transitive-closure logic, giving $\pgqext{1}=\fotc{1}$. By Immerman’s collapse theorem for ordered structures, unary TC is strictly weaker than binary TC, and every higher arity collapses to the binary case, formally, $\fotc{1}\subsetneq\fotc{2}=\fotcn$ for all $n\ge 2$~\cite{Immerman1999}.  
The same corollary yields $\fotc{2}=\pgqext{2}$, so every construct definable in any higher-arity fragment $\pgqext{n}$ can already be written in $\pgqext{2}$. Because $\erwpgq$ allows identifiers of arity n, we have $\pgqext{n}=\erwpgq$.  Combining these equalities gives the desired chain.
\end{proof}

\section{Comparison with the SQL/PGQ Standard}
\label{subsec:std-gap}
We now clarify how our theoretical framework relates to the SQL/PGQ standard.  

\paragraph{Property Graph Views}
The ISO draft~\cite{sqlpgq-standard} distinguishes a \emph{pure property graph (PPG)} from its
underlying \emph{tabular property graph (TPG)}.  
In our framework, PPG coincides with the property graph of
Definition~\ref{def:pg}, whereas TPG with property graph view of Definition~\ref{def:pgView}.
We represent property graph views by the
canonical relations \((R_1,\ldots,R_6)\) that form the inputs to the
\textsf{pgView} family (Definitions~\ref{def:pgView}-\ref{def:pgView-n-func}).  
This choice differs from the Standard’s
\emph{vertex-table/edge-table} notation:
\begin{enumerate}
  \item \textit{Normalization:} The six-relation encoding we use is {highly normalized}: each elementary fact, that is, 
vertex or edge identifier, source, target, label, or property, occurs only once.
By comparison, a single vertex table can repeat the same label and property values
across many rows, and an edge table can replicate source or target keys. {Strictly adopting the vertex-edge-table layout of the ISO draft may affect our results.}
  \item  \textit{Identifiers:}
        The standard requires each vertex and edge row to carry a primary
        key. 
        This is reflected by $R_1$ (vertex identifiers) and
        $R_2$ (edge identifiers) in our model, ensuring that query
        results can always return the graph elements, that is, its node and edge IDs.
  \item  \textit{Arity of Identifiers:}
       Our core read-write fragment $\rwpgq$ assumes
        {unary identifiers}, matching the simplest TPGs whose keys
        are single columns.  Our extended fragment $\erwpgq$
        generalizes this to \(n\)-ary identifiers, covering $n$-ary
        keys allowed by the standard but keeping them outside the core
        language. This separation reflects the usual treatment of relational algebra: formal models of core SQL deal only with single, atomic attributes, whereas the full SQL standard goes further and also allows composite (multi-attribute) types.
        \item 
        \textit{Labels:}
        Unlike the standard, our model does not require every edge to have a label. Under our definitions, an edge may be unlabeled, that is, it can appear in the graph without any associated label entry in $R_5$. To align fully with the ISO draft, one may simply insert a distinguished default label  for such edges.
\end{enumerate}

\paragraph{Updates.}
We omit update operations on TPGs to keep a clean theoretical core. This causes no loss of generality: any change can be simulated by rebuilding the six base relations and reapplying~$\pgView$ (or its extensions), so all results in this paper still hold.

\section{Conclusion and Future Work}
\label{sec:conclusion}
We have characterized the expressive power of SQL/PGQ, showing how its graph view mechanism fundamentally shapes the language’s capabilities. Our analysis identifies three fragments, $\rpgq$, $\rwpgq$, and $\erwpgq$, and places them in a strict expressiveness hierarchy culminating in $\nlog$, captured precisely by the extended fragment $\erwpgq$. While $\erwpgq$ matches this foundational benchmark, it is not yet clear how naturally its expressiveness translates into usable query syntax. For example, the ``increasing values on edges'' query
provably inexpressible in the pattern matching layer~\cite{vldb25}, can be expressed in $\erwpgq$ using composite identifiers and dynamic graph construction as described in Example~\ref{ex:ex-incr-edges}.  However, the required encoding is nontrivial.
To express that query, we simulate value tracking by creating node copies, each annotated with the amount of the incoming edge. Edges connect only copies where the target amount is greater than the amount of the source. This is expressible in $\erwpgq$ and demonstrates its ability to capture complex value-based recursion but also highlights that the encoding is intricate and unlikely to be intuitive for users.
Whether such queries can be formulated more naturally within $\erwpgq$, or whether new syntax is needed to express them cleanly, remains an open question. Finally, our formalization opens the door to compositional graph-query languages: $\pgView$ constructs full property graphs that can be queried or outputted. In addition to compositionality, our model offers a natural foundation for unified languages that query both relations and graph databases, and can move naturally and seamlessly from one model to the other.

\begin{acks}
The authors were supported by ISF grant 2355/24.
\end{acks}

\newpage
\bibliographystyle{ACM-Reference-Format}
\bibliography{Arxiv/references}

\newpage
\section{Appendix}
\subsection{Appendix for Section~\ref{sec:prelims}}

\begin{figure*}[b]
\setlength{\tabcolsep}{0pt}
\fbox{\makebox[\dimexpr\linewidth-8pt\relax][l]{%
$
\begin{array}{@{}rl@{}}
\multicolumn{2}{@{}l}{\textbf{Patterns (path semantics):}}\\[2pt]
\sem{(x)}^{\mathrm{path}}_G
  &\!:=\!
    \bigl\{\, (p,\{x\mapsto n\}) \mid n\in N,\; \src(p)=\tgt(p)=n \bigr\}
\\
\sem{\overset{x}{\rightarrow}}^{\mathrm{path}}_G
  &\!:=\!
    \bigl\{\, (p,\{x\mapsto e\})
          \mid e\in E,\; \src(e)=\src(p),\; \tgt(e)=\tgt(p)\bigr\}
\\
\sem{\overset{x}{\leftarrow}}^{\mathrm{path}}_G
  &\!:=\!
    \bigl\{\, (p,\{x\mapsto e\})
          \mid e\in E,\; \src(e)=\src(p),\; \tgt(e)=\tgt(p)\bigr\}
\\
\sem{\pat_1 + \pat_2}^{\mathrm{path}}_G
  &\!:=\!
    \sem{\pat_1}^{\mathrm{path}}_G \cup \sem{\pat_2}^{\mathrm{path}}_G
\\
\sem{\pat_1\,\pat_2}^{\mathrm{path}}_G
  &\!:=\!
    \bigl\{\, (p_1\!\cdot\! p_2,\;\mu_1 \join \mu_2)
          \,\big|\,
          (p_1,\mu_1)\in\sem{\pat_1}^{\mathrm{path}}_G,\;
          (p_2,\mu_2)\in\sem{\pat_2}^{\mathrm{path}}_G,
\\[-2pt]
  &\hphantom{\!:=\!\bigl\{\, (p_1\!\cdot\! p_2,\;\mu_1 \join \mu_2)\,\big|\,}
          \tgt(p_1)=\src(p_2),\;
          \mu_1 \sim \mu_2
    \bigr\}
\\
\sem{\pat\langle\theta\rangle}^{\mathrm{path}}_G
  &\!:=\!
    \bigl\{\, (p,\mu)\in\sem{\pat}^{\mathrm{path}}_G \mid \mu\models\theta \bigr\}
\\
\sem{\pat^{n..m}}^{\mathrm{path}}_G
  &\!:=\!
    \displaystyle\bigcup_{i=n}^{m} \sem{\pat}^{\,i,\mathrm{path}}_G \quad {\text{where}\quad
  \sem{\pat}^{\,0,\mathrm{path}}_G := \bigl\{\, (p,\mu_\emptyset) \mid \src(p)=\tgt(p) \bigr\}, \text{and for } n>0}
\\
\sem{\pat}^{\,n,\mathrm{path}}_G   &\!:=\! \bigl\{
  (p,\mu_\emptyset)
  \ \Bigm|\
  \exists \mu_{1},\ldots,\mu_{n}, p_{1},\ldots,p_{n}:\;
  (p_{i},\mu_{i}) \in \sem{\pat}^{\mathrm{path}}_G,\ 
\\[-2pt]
  &\hphantom{\!:=\! \bigl\{(p,\mu_\emptyset)\ \Bigm|\ } 
  \tgt(p_{i})=\src(p_{i+1})\ \forall i<n,\ 
  p=p_{1}\cdot\ldots\cdot p_{n}
  \bigr\}
\\[2pt]
\multicolumn{2}{@{}l}{\textbf{Output Patterns:}}\\[2pt]
\sem{\pat_\return}^{\mathrm{path}}_G
  &\!:=\!
    \bigl\{\,
(\mu_\return(\omega_1),\ldots , \mu_\return(\omega_n))
    \mid \exists p:\,(p,\mu)\in\sem{\pat}^{\mathrm{path}}_G,\ \return \df (\omega_1,\ldots,\omega_n) \bigr\}
\end{array}
$
}}
\caption{Semantics of Patterns and Output Patterns (path semantics).} 
\Description{Semantics of graph patterns using path semantics.}
\label{fig:sem-path}
\end{figure*}
Let \(G\) be a property graph and \(\pat\) a pattern. Following~\cite{vldb25}, we denote the semantics of \(\pat\) on \(G\) by \(\sem{\pat}^{\mathrm{path}}_{G}\), as illustrated in Figure~\ref{fig:sem-path}.

The following equivalence has been used in prior work on path query semantics~\cite{FigueiraLinPeterfreund24RelPersp}. Here we state it and prove it formally.
\begin{proposition}[Endpoint-path equivalence]
\label{prop:endpoint-equiv}
Let $G$ be a property graph and $\pat$ a pattern.
Define the projection
\[
\pi_{\mathrm{end}}\!\bigl((p,\mu)\bigr)\;=\;
\bigl(\src(p),\tgt(p),\mu\bigr).
\]
Then
\[
\pi_{\mathrm{end}}\!\bigl(\sem{\pat}^{\mathrm{path}}_{G}\bigr)
\;=\;
\sem{\pat}_{G}.
\]
In particular, $\sem{\pat}^{\mathrm{path}}_{G}$ and
$\sem{\pat}_{G}$ return identical result relations when they
are plugged into the relational-algebra layer of core PGQ.
\end{proposition}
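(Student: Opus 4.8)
The plan is to prove the set equality $\pi_{\mathrm{end}}(\sem{\pat}^{\mathrm{path}}_{G}) = \sem{\pat}_{G}$ by structural induction on $\pat$, establishing both inclusions in one go. I would first record the handful of elementary facts about the path model of~\cite{vldb25} that the argument needs: (a) every node $n$ carries a trivial path $p$ with $\src(p)=\tgt(p)=n$; (b) every edge $e$ gives rise to a length-one path whose source and target match those of $e$ (with the roles swapped for the reversed atom $\overset{x}{\leftarrow}$); and (c) the concatenation $p_1\cdot p_2$ is defined exactly when $\tgt(p_1)=\src(p_2)$, and then $\src(p_1\cdot p_2)=\src(p_1)$ and $\tgt(p_1\cdot p_2)=\tgt(p_2)$. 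The key simplification is that $\pi_{\mathrm{end}}$ never touches the variable mapping $\mu$, so all bookkeeping on mappings — compatibility $\sim$, union $\join$, satisfaction of a condition $\mu\models\theta$ — is literally identical on both sides; only the treatment of the path versus endpoint data differs.

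\textbf{Base cases and the easy connectives.} For the three atoms I would just apply $\pi_{\mathrm{end}}$ to the definitions in Figure~\ref{fig:sem-path} and compare them termwise with Figure~\ref{fig:sem-endpoints}, using (a) and (b) to see that no endpoint triple is lost and none is spurious. For $\pat_1+\pat_2$ both semantics are unions and $\pi_{\mathrm{end}}$ commutes with $\cup$, so the induction hypothesis finishes it immediately. For $\pat\langle\theta\rangle$, restricting to the triples with $\mu\models\theta$ commutes with $\pi_{\mathrm{end}}$ since $\theta$ speaks only about $\mu$; again apply the hypothesis.

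\textbf{Concatenation and bounded repetition.} The concatenation case $\pat_1\,\pat_2$ is the first with real content. Going left to right, I would take $(p_1\cdot p_2,\mu_1\join\mu_2)$ in the path semantics, use (c) to rewrite its $\pi_{\mathrm{end}}$-image as $(\src(p_1),\tgt(p_2),\mu_1\join\mu_2)$, and then invoke the hypothesis on $\pat_1$ and $\pat_2$ together with the gluing condition $\tgt(p_1)=\src(p_2)$ to obtain precisely the shape demanded by $\sem{\pat_1\,\pat_2}_G$. Going right to left, an endpoint triple produced via a shared middle node $n$ is lifted by the hypothesis to two paths meeting at $n$, whose concatenation is well defined by (c) and witnesses membership in the projected path semantics. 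For $\pat^{n..m}=\bigcup_{i=n}^{m}\sem{\pat}^{\,i}_G$ I would argue one exponent $i$ at a time: $i=0$ is handled by (a), and for $i>0$ a short secondary induction on $i$ chains (c), observing that the path-chaining condition $\tgt(p_j)=\src(p_{j+1})$ is exactly the endpoint-chaining condition $t_j=s_{j+1}$, and that both repetition semantics discard the intermediate mappings ($\mu_\emptyset$), so only endpoints need to be tracked through the iteration.

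\textbf{Output patterns, and the expected obstacle.} The ``in particular'' statement would then follow by projecting the proved equality onto its $\mu$-coordinate: the set of mappings $\mu$ such that $(p,\mu)\in\sem{\pat}^{\mathrm{path}}_G$ for some $p$ equals the set such that $(s,t,\mu)\in\sem{\pat}_G$ for some $s,t$, and $\sem{\pat_\return}$ depends on exactly this set; hence the two output relations coincide and, a fortiori, agree when fed into the relational-algebra layer of core PGQ. The only delicate point I anticipate is the surjectivity direction in the concatenation and repetition cases — ensuring that every endpoint pair realized by $\sem{\pat}_G$ is genuinely witnessed by a concrete path — which is precisely why properties (a)--(c) must be made explicit; the opposite phenomenon, many distinct paths collapsing onto one endpoint triple, is harmless because everything is set-valued.
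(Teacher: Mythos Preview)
Your proposal is correct and follows essentially the same approach as the paper: a structural induction on $\pat$ that handles the atoms by inspection, the union and filter cases by commutation of $\pi_{\mathrm{end}}$ with $\cup$ and the $\mu$-only condition, and the concatenation and repetition cases via the gluing property (c) of path concatenation. The only cosmetic difference is that the paper organizes the argument by direction (first all cases for $\subseteq$, then all cases for $\supseteq$) rather than case-by-case as you do, and it leaves your properties (a)--(c) implicit; your explicit isolation of these is arguably cleaner, but the substance is identical.
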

\begin{proof}
We prove the equality by showing both inclusions.
\paragraph{\normalfont($\subseteq$)\;}
Let $(p,\mu)\in\sem{\pat}^{\mathrm{path}}_{G}$.  We show by
structural induction on~$\pat$ that
$\bigl(\src(p),\tgt(p),\mu\bigr)\in\sem{\pat}_{G}$.
Base Cases:
\begin{description}
  \item[\textsc{(T1) Node}] $\pat=(x)$.
    The path semantics yields the single vertex path
    $v$ with $\mu(x)=v$.  The endpoint rule produces
    $(v,v,\mu)$, which is exactly $\pi_{\mathrm{end}}((v,\mu))$.
  \item[\textsc{(T2) Forward Edge}] $\pat=(x\!\rightarrow y)$.
    The path semantics yields the one edge path $p=e$
    with $\src(e)=v$, $\tgt(e)=w$
    and $\mu(x)=v,\;\mu(y)=w$.
    The endpoint rule admits $(v,w,\mu)$, again matching
    $\pi_{\mathrm{end}}((e,\mu))$.
  \item[\textsc{(T3) Backward Edge}] $\pat=(x\!\leftarrow y)$. Symmetric.
\end{description}
Induction Steps:
\begin{description}
 \item[\textsc{(T4) Concatenation}] $\pat=\pat_{1}\,\pat_{2}$.
    From path semantics we have
    $(p_{1},\mu_{1})\in\sem{\pat_{1}}^{\mathrm{path}}_{G}$,
    $(p_{2},\mu_{2})\in\sem{\pat_{2}}^{\mathrm{path}}_{G}$,
    $\tgt(p_{1})=\src(p_{2})$, and $\mu_{1}\sim\mu_{2}$,
    with $p=p_{1}p_{2}$ and $\mu=\mu_{1}\cup\mu_{2}$.
    By the induction hypothesis,
    \[
      (s,m,\mu_{1})\in\sem{\pat_{1}}_{G},
      \qquad
      (m,t,\mu_{2})\in\sem{\pat_{2}}_{G}
    \]
    where $s=\src(p_{1})$ and $t=\tgt(p_{2})$.
    Since $m$ coincides and $\mu_{1}\sim\mu_{2}$, the
    definition of the endpoint semantics for concatenation
    adds $(s,t,\mu)$, which equals $\pi_{\mathrm{end}}((p,\mu))$.
    \item[\textsc{(T5) Disjunction}] $\pat=\pat_{1}+\pat_{2}$.  
    Path semantics reaches $(p,\mu)$ by choosing either
    $\pat_{1}$ or~$\pat_{2}$:
    \begin{enumerate}
      \item If $(p,\mu)\in\sem{\pat_{1}}^{\mathrm{path}}_{G}$,
            the induction hypothesis gives
            $\pi_{\mathrm{end}}((p,\mu))\in
            \sem{\pat_{1}}_{G}$.
      \item If $(p,\mu)\in\sem{\pat_{2}}^{\mathrm{path}}_{G}$,
            the induction hypothesis gives the same inclusion with
            $\pat_{2}$.
    \end{enumerate}
    In either sub-case, the definition of the endpoint semantics for
    $\pat_{1}+\pat_{2}$ simply takes the union of the two
    component semantics, so the triple
    $\pi_{\mathrm{end}}((p,\mu))$ is included.
      \item[\textsc{(T6) Bounded repetition}]$\pat=\pat_{0}^{n..m}$.  
        In path semantics there exist integers
        $k$ with $n\le k\le m$ and derivations
        \[
           (p_{1},\mu_{1}),\dots,(p_{k},\mu_{k})
           \in\sem{\pat_{0}}^{\mathrm{path}}_{G}
        \]
        \[
           \quad\text{such that}\quad
           p=p_{1}\cdots p_{k},\;
           \mu=\bigcup_{i=1}^{k}\mu_{i},
           \; \mu_{1}\sim\cdots\sim\mu_{k},
           \;\tgt(p_{i})=\src(p_{i+1}).
        \]
        By the induction hypothesis each
        $\pi_{\mathrm{end}}((p_{i},\mu_{i}))=
        (s_{i},t_{i},\mu_{i})$ belongs to
        $\sem{\pat_{0}}_{G}$.
        The equalities above give
        $t_{i}=s_{i+1}$ for $1\le i<k$,
        so the definition of the endpoint semantics for
        $\pat_{0}^{n..m}$ contributes the triple
        \(
          (s_{1},t_{k},\mu)
        \),
        which is precisely
        $\pi_{\mathrm{end}}((p,\mu))$.

  \item[\textsc{(T7) Filtering}]$\pat=\pat_{0}\langle\theta\rangle$.  
        From path semantics we have
        $(p,\mu)\in\sem{\pat_{0}}^{\mathrm{path}}_{G}$
        and $G,\mu\models\theta$.
        The inductive step places
        $(s,t,\mu)=\pi_{\mathrm{end}}((p,\mu))$
        in $\sem{\pat_{0}}_{G}$,
        and because the same satisfaction test
        $G,\mu\models\theta$ appears unchanged in the endpoint
        definition, $(s,t,\mu)$ is also in
        $\sem{\langle\theta\rangle\pat_{0}}_{G}$.

    \end{description}

    \paragraph{\normalfont($\supseteq$)\;}
    Now assume $(s,t,\mu)\in\sem{\pat}_{G}$.
    We construct $(p,\mu)\in\sem{\pat}^{\mathrm{path}}_{G}$ with
    $\src(p)=s$ and $\tgt(p)=t$ by
    structural induction on~$\pat$.
    
    Base Cases:
    \begin{description}
      \item[\textsc{(T1) Node}] $\pat=(x)$.\;
        Endpoint semantics enforces $s=t=\mu(x)$.
        Let $p$ be the single vertex path $s$.
        Then $(p,\mu)\in\sem{(x)}^{\mathrm{path}}_{G}$ and
        $\pi_{\mathrm{end}}((p,\mu))=(s,t,\mu)$.

      \item[\textsc{(T2) Forward Edge}] $\pat=(x\!\rightarrow y)$.\;
        By definition of the endpoint rule there exists an edge
        $e$ with $\src(e)=s$, $\tgt(e)=t$,
        $\mu(x)=s$ and $\mu(y)=t$.
        Choosing $p=e$ yields the desired outcome.

      \item[\textsc{(T3) Backward Edge}] $\pat=(x\!\leftarrow y)$.\;
        Symmetric: take the edge $e$ with
        $\src(e)=t$ and $\tgt(e)=s$, set $p=e$.
    \end{description}
    Induction Steps:
    \begin{description}
      \item[\textsc{(T4) Concatenation}] $\pat=\pat_{1}\,\pat_{2}$.\;
        Endpoint semantics provides
        \[
  (s,m,\mu_{1})\in\sem{\pat_{1}}_{G},
  \qquad
  (m,t,\mu_{2})\in\sem{\pat_{2}}_{G},
  \qquad
  \mu_{1}\sim\mu_{2},
  \qquad
  \mu=\mu_{1}\cup\mu_{2}.
\]
        By the induction hypothesis we obtain paths
        $(p_{1},\mu_{1})$ and $(p_{2},\mu_{2})$
        whose concatenation $p=p_{1}p_{2}$ lies in
        $\sem{\pat}^{\mathrm{path}}_{G}$ and has the
        required endpoints.

      \item[\textsc{(T5) Disjunction}] $\pat=\pat_{1}+\pat_{2}$.\;
        Because endpoint semantics is a union,
        $(s,t,\mu)$ originates from either $\pat_{1}$ or $\pat_{2}$.
        Apply the induction hypothesis to that branch to
        obtain the corresponding path witness.

      \item[\textsc{(T6) Bounded repetition}] $\pat=\pat_{0}^{n..m}$.\;
        We have integers $k$ with $n\le k\le m$ and triples
        \[
          (s_{1},s_{2},\mu_{1}),\dots,(s_{k},s_{k+1},\mu_{k})
          \in\sem{\pat_{0}}_{G}
        \]
        such that $s_{1}=s$, $s_{k+1}=t$,
        $\mu=\bigcup_{i=1}^{k}\mu_{i}$ and
        $\mu_{1}\sim\cdots\sim\mu_{k}$.
        The induction hypothesis gives paths
        $(p_{i},\mu_{i})\in\sem{\pat_{0}}^{\mathrm{path}}_{G}$
        with endpoints $s_{i}\to s_{i+1}$.
        Concatenating $p=p_{1}\cdots p_{k}$ yields
        $(p,\mu)\in\sem{\pat}^{\mathrm{path}}_{G}$ and
        $\pi_{\mathrm{end}}((p,\mu))=(s,t,\mu)$.

      \item[\textsc{(T7) Filtering}] $\pat=\pat_{0}\langle\theta\rangle$.\;
        Endpoint semantics gives
        $(s,t,\mu)\in\sem{\pat_{0}}_{G}$
        and $G,\mu\models\theta$.
        The induction hypothesis supplies a path
        $(p,\mu)\in\sem{\pat_{0}}^{\mathrm{path}}_{G}$,
        we have $G,\mu\models\theta$ in the
        path rule, so $(p,\mu)\in\sem{\pat}^{\mathrm{path}}_{G}$
        and $\pi_{\mathrm{end}}((p,\mu))=(s,t,\mu)$.
    \end{description}

    Hence every triple produced by endpoint semantics has
    a matching path witness, establishing
    \[
      \sem{\pat}_{G}
      \;\subseteq\;
      \pi_{\mathrm{end}}\!\bigl(\sem{\pat}^{\mathrm{path}}_{G}\bigr).
    \]

    Combining the inclusion $\sem{\pat}_{G}
          \subseteq
          \pi_{\mathrm{end}}\!\bigl(\sem{\pat}^{\mathrm{path}}_{G}\bigr)$
    established above with the reverse inclusion
    $\pi_{\mathrm{end}}\!\bigl(\sem{\pat}^{\mathrm{path}}_{G}\bigr)
          \subseteq
          \sem{\pat}_{G}$ proven at the beginning of the proof,
    we conclude that
    \[
      \pi_{\mathrm{end}}\!\bigl(\sem{\pat}^{\mathrm{path}}_{G}\bigr)
      \;=\;
      \sem{\pat}_{G}.
    \]
\end{proof}

\subsection{Appendix for Section~\ref{sec:rwpgq}}

Consider a relational database $\db_G$ representing a graph whose node identifiers are partitioned into two disjoint relations, one for red nodes and one for blue nodes. That is, the database contains two disjoint relations \texttt{RedNodes} and  \texttt{BlueNodes}, each specifying a set of red and blue node IDs, respectively. 
In addition, the database contains a single relation \texttt{Edges} storing edge identifiers, and two relations \texttt{Source} and \texttt{Target} that associate each edge with its source and target node IDs, respectively, such that
\begin{itemize}
    \item $\texttt{Source}(e) \in \texttt{BlueNodes}$ implies $\texttt{Target}(e) \in \texttt{RedNodes}$, and
    \item 
    $\texttt{Target}(e) \in \texttt{BlueNodes}$ implies $\texttt{Source}(e) \in \texttt{RedNodes}$.
\end{itemize}
This means that edges are from red nodes to blue ones, or from blue to red. In particular there are no edges whose both ends are of the same color.

We use \(\mathsf{RA}\) to denote relational algebra queries, that is, the fragment of  \(\rpgq\)  obtained by omitting the pattern matching construct.
\begin{proposition}
   For every query $\query \in \rpgq$ there is an equivalent $\query' \in \mathsf{RA}$ 
   on $\db_G$. 
   That is, 
    \[
    \sem{\query}_{\db_G} = \sem{\query'}_{\db_G}.
    \]
\end{proposition}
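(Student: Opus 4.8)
The plan is to show that on the specific database $\db_G$, every pattern-matching subexpression $\pat_{\return}(\bar R)$ appearing in a $\rpgq$ query can be replaced by an equivalent $\mathsf{RA}$ expression, after which a straightforward structural induction lifts this to arbitrary $\rpgq$ queries (since the remaining $\rpgq$ constructs are exactly the $\mathsf{RA}$ operators). So the crux is a single claim: for every output pattern $\pat_{\return}$, the relation $\sem{\pat_{\return}}_{G}$, where $G = \pgView(\semd{R_1},\ldots,\semd{R_6})$ on $\db_G$, is $\mathsf{RA}$-definable from $\db_G$.

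First I would analyze the structure of $G$. By the bipartite-by-color hypothesis, every edge of $G$ goes between a red node and a blue node; hence any path in $G$ has strictly alternating colors, and in particular the only cycles are of even length. The key observation is that the pattern language has bounded repetition $\pat^{n..m}$ only with \emph{explicit} finite bounds unless $m=\infty$; the unbounded case $\pat^{n..\infty}$ is the sole source of genuine recursion. I would argue that on $\db_G$, any unbounded repetition $\pat^{n..\infty}$ can be capped at a bounded length without changing its contribution to the output pattern's result relation. The reason: the free variables of $\pat^{n..m}$ are empty (by the $\sch{\cdot}$ rules in Figure~\ref{fig:pattern-syntax}), so such a subpattern only contributes a source-target reachability constraint, not any variable bindings exposed in $\return$. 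Because of the alternating-color structure, the reachability relation between nodes realized by paths matching $\pat^{n..\infty}$ stabilizes: if a node $s$ reaches a node $t$ via some matching path, it does so via a path of length bounded by a function of $\pat$ alone (one can extract a short path since repeated color-patterns can be short-circuited — formally, the transition structure on the finitely many ``color $\times$ matched-label states'' has a bounded reachability diameter that does not depend on $|\db_G|$). Replacing $\pat^{n..\infty}$ by $\pat^{n..N}$ for that bound $N$ yields a pattern with only bounded repetitions, which unfolds into a finite union of concatenations of atomic patterns.

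Once all repetitions are bounded, $\sem{\pat_{\return}}_{G}$ is computed by a finite sequence of $\mathsf{RA}$ operations: each atomic pattern $(x)$, $\overset{x}{\rightarrow}$, $\overset{x}{\leftarrow}$ corresponds to a base relation or a projection/renaming thereof; concatenation $\pat_1\pat_2$ is an equijoin on the shared endpoint plus a compatibility join on common variables; disjunction is union; the condition filter $\pat\langle\theta\rangle$ is a selection, using that label tests $\ell(x)$ and property equalities $x.k=x'.k'$ translate to selections over $R_5$ and $R_6$; and the output projection to $\return$ is a projection composed with joins against $R_6$ for the $x.k$ components. All of this is manifestly $\mathsf{RA}$ over $\semd{R_1},\ldots,\semd{R_6}$, which are themselves $\mathsf{RA}$ over $\db_G$, hence composable into one $\mathsf{RA}$ query.

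The main obstacle is making the ``cap the unbounded repetition'' step precise — i.e., proving that on $\db_G$ the set of $(s,t)$ pairs connected by a $\pat$-path of \emph{arbitrary} length coincides with those connected by one of length at most $N(\pat)$. The clean way is to build, from $\pat$, a finite ``value-blind'' automaton whose states track (i) the position within $\pat$'s unfolding and (ii) the color of the current node, observe that on $\db_G$ the color strictly alternates so the reachable state set along any run is determined by a bounded-length prefix, and then invoke a pumping-style argument: any accepting run longer than the number of states contains a repeated state and can be contracted, preserving endpoints. Crucially the contraction is valid \emph{only} because the repeated subpattern contributes no exposed variable bindings, so deleting it does not alter the tuple output by $\pat_{\return}$. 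I would carry this out as a lemma about bounded-diameter reachability for patterns over color-bipartite graphs, then plug it into the induction sketched above. (This is the same phenomenon underlying the Gaifman-locality remark in the discussion following Theorem~\ref{thm:rwsupro}, but here we want the stronger, fully syntactic statement that a single $\mathsf{RA}$ query suffices.)
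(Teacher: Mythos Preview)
Your approach has a genuine gap, and in fact the central step is incorrect. You claim that on $\db_G$ the source--target reachability relation induced by $\pat^{n..\infty}$ stabilizes at a bound $N(\pat)$ depending only on $\pat$, because ``repeated color-patterns can be short-circuited'' via a pumping argument on the finite set of ``color $\times$ state'' configurations. This is false: the pumping argument shortens a \emph{run of the automaton}, but the two occurrences of a repeated color-state are at \emph{different nodes} of $G$, so contracting the segment between them does not yield a path in $G$ with the same endpoints. Concretely, take $\db_G$ encoding a simple directed path $v_0\to v_1\to\cdots\to v_m$ with strictly alternating colors; then $(v_0,v_m)$ is in the reachability relation of the pattern $()\to^{*}()$ only via a path of length $m$, and no bound independent of $|\db_G|$ exists. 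If your argument worked, it would show that reachability in bipartite graphs is first-order definable, contradicting Gaifman locality --- which is exactly what Theorem~\ref{thm:rwsupro} invokes \emph{after} this proposition to establish the separation.

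The paper's proof is far simpler and rests on a point you overlooked: in $\rpgq$ the six arguments of $\pat_{\return}(\bar R)$ must be \emph{base relation names} from the schema, not arbitrary queries. On the schema of $\db_G$ the only unary relations are \texttt{RedNodes}, \texttt{BlueNodes}, \texttt{Edges} and the only binary ones are \texttt{Source}, \texttt{Target}. Any choice of $R_1\in\{\texttt{RedNodes},\texttt{BlueNodes}\}$ forces $R_3,R_4$ to encode functions into that single color class, but \texttt{Source} and \texttt{Target} map edges to nodes of \emph{both} colors (each edge has one red and one blue endpoint), so the property-graph-view conditions fail and $\pgView$ is undefined. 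Hence the pattern-matching construct contributes nothing, and the induction collapses to the $\mathsf{RA}$ operators immediately.
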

\begin{proof}
This is proved  by induction on $\query$. The only non-trivial case is $\query \df \pat_{\return}(\bar R)$.
    (All other cases are in $\mathsf{RA}$.)
    Notice that by definition $R_1$ is either \texttt{BlueNodes} or \texttt{RedNodes} (and these are symmetric cases), and $R_2$ can only be \texttt{Edges}.
However, due to the definition of $\pgView$ there are not any valid options for $R_3$ and $R_4$. Hence the claim holds trivially.    
\end{proof}

\begin{reptheorem}{\ref{thm:rwsupro}}
    \thmrwsupro
\end{reptheorem}

\begin{proof}
    The inclusion \( \rpgq~ \subseteq \rwpgq  \) follows directly. To establish that this containment is {strict}, we present a concrete example.
Suppose we want to detect \emph{alternating-color paths}, i.e.\ paths of any length whose nodes alternate red-blue at every step.  
Assume, for contradiction, that some query \(Q\in\rpgq\) returns \textsf{true} exactly when such a path exists.  
Because \rpgq~is non-recursive, \(Q\) is an FO expression over the six canonical relations produced by \textsf{pgView}.  
By Gaifman-Vardi locality~\cite{GaifmanVardi85}, every FO formula can inspect the input only within a finite radius; equivalently, \(Q\) can check paths of length at most \(f(|Q|)\) for some function \(f\).  
Take a database that contains an alternating path longer than this bound.  
The property holds, but \(Q\) cannot see it, contradicting the assumption.  
Hence alternating-color paths are \emph{not} expressible in \rpgq.

In contrast, $\rwpgq$~allows the dynamic construction of graph views. In particular, the first argument of $\bar \query$ in a query of the form $\pat_{\return}(\bar \query)$ may define the union of the red and blue node relations, enabling the construction of a graph that includes all nodes and edges. { Below we show how to construct the union graph view and issue the query:  
  Let
  \[
     \bar Q
       \;=\;
       \bigl(
        \texttt{RedNodes} \cup
   \texttt{BlueNodes},\;
        \texttt{Edges},\;
         \texttt{Source},\;
        \texttt{Target},\;
         \emptyset,\;
         \emptyset
       \bigr).
  \]
  With this tuple of six relational subqueries, the alternating color path
  query is expressed as
        $\pat_{\emptyset}(\bar Q)$
        with $\pat$ being the reachability pattern $()\rightarrow^*()$
        }
\end{proof}

\begin{reptheorem}{\ref{thm:rw_subset_nl}}
   \thmrwsubsetnl
\end{reptheorem}
\begin{proof}
We demonstrate this limitation using the same example from \cite{vldb25}. Specifically, consider the Boolean query that determines whether there exists a path whose length does \emph{not} form a semilinear set. Such a query is known to be in \nlog, yet we show that it is not expressible in \rwpgq.

The key observation is that the set of path lengths detectable by \rwpgq~can be characterized in Presburger Arithmetic (the first-order theory of $(\mathbb{N}, +, <)$).
Since Presburger-definable sets are precisely the semilinear sets, \rwpgq~is inherently limited to expressing queries over semilinear path lengths. Therefore, any query requiring detection of non-semilinear lengths lies beyond the expressive power of \rwpgq.
\end{proof}
\subsection{Appendix for Section~\ref{sec:extensions}}
\begin{reptheorem}{\ref{thm:erw_vs_rw}}
\thmerwvsrw
\end{reptheorem}

\begin{proof}
By definition, \(\rwpgq\subseteq\erwpgq\).
To show that this is a strict containment, 
we show that there exists a query definable in \erwpgq\ but not in
\rwpgq.  The argument relies on the following notation and  results whose complete proofs appear in
Section~\ref{sec:expressiveness}:

\begin{enumerate}
  \item Corollary~\ref{cor:pgqextn-equals-fotcn} proves that the unary fragment
        \(\pgqext{1}\) (which coincides with \rwpgq) captures exactly the logic
        \(\fotc{1}\) of first-order formulas with unary-arity transitive-closure
        operators.
  \item {According to~\cite{Immerman1999, GRADEL1996169}, we have the following inclusion: 
\(
  \fotc{1} \subsetneq \fotc{2} = \fotcn \) for every \(n\ge 2
\).\footnote{Notice that on unordered structures this hierarchy does not collapse in $\tc{2}$, that is, \(\fotc{1} \subsetneq \fotc{2} \subsetneq \fotc{3} \subsetneq \cdots.\)).}}
    \item  Corollary~\ref{cor:erwpgq-equals-fotc} establishes
        \(\erwpgq = \mathrm{FO[TC]}\), so every \(\fotc{2}\)-formula is
        expressible by some \(\pgqext{2}\subseteq\erwpgq\) query.
\end{enumerate}

We combine these pieces together as follows.  
Choose any formula
\[
  \phi^{(2)}(x_1,x_2,y_1,y_2)\;:=\;
  TC_{(u_1,u_2),(v_1,v_2)}\bigl[E(u_1,u_2,v_1,v_2)\bigr]
  \bigl((x_1,x_2),(y_1,y_2)\bigr)
\]
that defines reachability over pairs of nodes.  
Item~(ii) says \(\phi^{(2)}\notin\fotc{1}\).
By item~(iii) there exists a \(\pgqext{2}\)-query \(Q_{\mathsf{pairs}}\in\erwpgq\)
whose semantics coincides with \(\phi^{(2)}\).

Assume towards contradiction that \(Q_{\mathsf{pairs}}\) were also in
\rwpgq.  
Item (i) would then translate it into a formula of \(\fotc{1}\),
contradicting \(\phi^{(2)}\notin\fotc{1}\).
Hence \(Q_{\mathsf{pairs}}\notin\rwpgq\), and the inclusion
\(\rwpgq\subseteq\erwpgq\) is strict, and
thus, \(\erwpgq\) is strictly more expressive than \(\rwpgq\).
\end{proof}
\subsection{Appendix for Section~\ref{sec:expressiveness}}\label{apx:appendix_section_6}

\newcommand{
\lempatterntranslation}{
For every schema$~\schema$, for every path pattern $\pat$ with $\sch{\pat}=\bar{x}$, and for every $(Q_{1},\dots,Q_{6})$ with $Q_i \in \erwpgq$ ($1 \le i \le 6$), there exists an $\mathrm{FO[TC]}$ formula
\[
  \varphi_{\pat}(\bar{x},\bar{x}_{\src},\bar{x}_{\tgt}),
  \qquad
  \sch{\varphi_{\pat}}=
  (\bar{x}, \bar{x}_{\src},\bar{x}_{\tgt}),
\]
such that for every relational database $\db$ over $\schema$, and
\[
  G \;=\;
  \pgViewExtendedUnion\!\bigl(
      \semd{Q_{1}},\dots,\semd{Q_{6}}
  \bigr),
\]
the following hold:
\begin{enumerate}\renewcommand{\labelenumi}{(\arabic{enumi})}
  \item\textbf{Soundness:}\;
        $\displaystyle
        \forall(\bar s,\bar t,\mu)\in\sem{\pat}_{G}\;
        \exists\mu'\in\semd{\varphi_{\pat}}:\;
          \mu\sim_{\bar{x}}\mu',\;
          \bar s=\mu'(\bar{x}_{\src}),\;
          \bar t=\mu'(\bar{x}_{\tgt}).$
  \item\textbf{Completeness:}\;
        $\displaystyle
        \forall\mu'\in\semd{\varphi_{\pat}}\;
        \exists\mu\;:\;
          \bigl(\mu'(\bar{x}_{\src}),\mu'(\bar{x}_{\tgt}),\mu\bigr)
          \in\sem{\pat}_{G},\;
          \mu\sim_{\bar{x}}\mu'.$
\end{enumerate}
Throughout, an assignment
\(\mu\) is assumed to satisfy
\(\dom{\mu}=\bar{x}\), while an assignment
\(\mu'\) satisfies
\(\dom{\mu'}=(\bar{x}, \bar{x}_{\src},\bar{x}_{\tgt})\).
The tuples \(\bar{x}_{\src}\) and \(\bar{x}_{\tgt}\) have the same
arity as the identifier in
\(\semd{Q_{1}}=\semd{Q_{2}}\).
}

\begin{lemma}[Pattern matching translation]\label{lem:pattern-translation}
\lempatterntranslation
\end{lemma}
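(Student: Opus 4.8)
The plan is to prove both statements (soundness and completeness) simultaneously by structural induction on the path pattern $\pat$, following the inductive structure of the pattern grammar in Figure~\ref{fig:pattern-syntax}. For each syntactic construct I build the formula $\varphi_{\pat}$ from the formulas obtained for the subpatterns, and verify that the two conditions carry over. The free variables $\bar x$ of $\varphi_\pat$ are exactly $\sch\pat$, and the two extra tuples $\bar x_\src,\bar x_\tgt$ have the common identifier arity $k$ used by $\pgViewExtendedUnion$. Throughout I may use the relations $R_1,\dots,R_6$ as shorthand for the defining queries $Q_1,\dots,Q_6$; since the lemma only asserts existence of an $\mathrm{FO[TC]}$ formula over $\schema$, at the very end I eliminate these by substituting the (already available, by the outer induction of Theorem~\ref{thm:erwpgq-in-fotc}) $\mathrm{FO[TC]}$ translations of each $Q_i$ into every occurrence; this substitution commutes with all the constructions below, so I carry it out only once.

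First I handle the base cases. For a node pattern $(x)$, set $\varphi_{(x)}(\bar x,\bar x_\src,\bar x_\tgt) := R_1(\bar x)\wedge \bar x = \bar x_\src \wedge \bar x = \bar x_\tgt$ (equalities between $k$-tuples abbreviate conjunctions of coordinatewise equalities). For the forward edge $\overset{x}{\rightarrow}$, set $\varphi := R_2(\bar x)\wedge R_3(\bar x,\bar x_\src)\wedge R_4(\bar x,\bar x_\tgt)$, reading off $\src$ and $\tgt$ from the function relations; the backward edge swaps $\bar x_\src$ and $\bar x_\tgt$. Matching these against Figure~\ref{fig:sem-endpoints} gives soundness and completeness immediately. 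For concatenation $\pat_1\pat_2$ I take $\varphi_{\pat_1\pat_2} := \exists \bar z\, (\varphi_{\pat_1}(\bar x_1,\bar x_\src,\bar z)\wedge \varphi_{\pat_2}(\bar x_2,\bar z,\bar x_\tgt))$ where $\bar x_1=\sch{\pat_1}$, $\bar x_2=\sch{\pat_2}$, and $\bar z$ is a fresh $k$-tuple witnessing the shared midpoint; compatibility $\mu_1\sim\mu_2$ of the variable mappings is automatic since shared variables in $\bar x_1\cap \bar x_2$ are forced equal by the fact that both conjuncts share those free variables. For union $\pat_1+\pat_2$ (allowed only when $\sch{\pat_1}=\sch{\pat_2}$), take $\varphi := \varphi_{\pat_1}\vee\varphi_{\pat_2}$. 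For a condition $\pat\langle\theta\rangle$, translate $\theta$ into an $\mathrm{FO[TC]}$ formula $\widehat\theta(\bar x)$: a label atom $\ell(x)$ becomes $R_5(\mu(x),\ell)$, and a property equality $x.k = x'.k'$ becomes $\exists w\,(R_6(\mu(x),k,w)\wedge R_6(\mu(x'),k',w))$ — here $k,k',\ell$ are constants and the definedness requirement in the semantics of $\models$ is exactly captured by the existential; Boolean connectives translate homomorphically. Then $\varphi_{\pat\langle\theta\rangle}:=\varphi_\pat\wedge\widehat\theta$.

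The main obstacle, as expected, is the bounded repetition $\pat^{n..m}$, since its semantics $\bigcup_{i=n}^m\sem{\pat}^i_G$ discards all variable bindings ($\sch{\pat^{n..m}}=\emptyset$) and may have $m=\infty$. I first build an $\mathrm{FO[TC]}$ formula for the single-step relation obtained by projecting away $\bar x$: let $\mathrm{step}(\bar u,\bar v) := \exists \bar x\,\varphi_\pat(\bar x,\bar u,\bar v)$, a binary relation on $k$-tuples. Its reflexive–transitive closure is $\mathrm{reach}(\bar x_\src,\bar x_\tgt) := (\bar x_\src=\bar x_\tgt)\vee \mathrm{TC}_{(\bar u,\bar v)}[\mathrm{step}(\bar u,\bar v)](\bar x_\src,\bar x_\tgt)$ — note the explicit disjunct handling the $0$-length case, matching $\sem{\pat}^0_G=\{(n,n,\mu_\emptyset)\mid n\in N\}$ when $n=0$. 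For the generic unbounded case $n\in\{0,1\}$, $m=\infty$ this is already the answer (for $n=1$ drop the reflexive disjunct but add it back composed once, i.e. use $\exists\bar w(\mathrm{step}(\bar x_\src,\bar w)\wedge \mathrm{reach}(\bar w,\bar x_\tgt))$ to force at least one step). For a fixed finite bound $m<\infty$ (and finite $n$) there is no unbounded recursion at all: I simply spell out $\bigvee_{i=n}^{m}\psi_i$ where $\psi_i$ is the $i$-fold relational composition of $\mathrm{step}$ with itself (and $\psi_0$ is equality), each of which is a plain $\mathrm{FO}$ formula with $i-1$ existentially quantified intermediate $k$-tuples. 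The only slightly delicate point is that the endpoint semantics of $\pat^{n..m}$ uses $\mu_\emptyset$, so $\varphi_{\pat^{n..m}}$ must have free variables exactly $\bar x_\src,\bar x_\tgt$ (the empty $\bar x$), which is precisely what $\mathrm{reach}$ and the finite disjunctions provide. Verifying soundness and completeness here reduces to an easy sub-induction on the length $i$ of the witnessing path, using the inductive hypothesis for $\pat$ at each step; the TC semantics recalled in Section~\ref{sec:capture} matches the "$\exists$ finite sequence $\bar a_0,\dots,\bar a_i$" phrasing of $\sem{\pat}^i_G$ verbatim. Finally, assembling the cases and performing the one-time substitution of the $Q_i$-translations completes the proof.
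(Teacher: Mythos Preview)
Your proposal is correct and follows essentially the same structural-induction approach as the paper's proof: identical base cases for nodes and edges, the same midpoint-existential for concatenation, disjunction for union, a first-order rendering of $\theta$ for filtering, and the split into a finite disjunction for bounded repetition versus a $\mathrm{TC}$ over the projected single-step relation for the unbounded case. Two cosmetic points: write $R_5(x,\ell)$ and $R_6(x,k,w)$ rather than $R_5(\mu(x),\ell)$ etc.\ (the formula is syntactic), and your handling of $\pat^{n..\infty}$ spells out only $n\in\{0,1\}$---extend it to arbitrary $n$ by composing $n$ copies of $\mathrm{step}$ with $\mathrm{reach}$, exactly as your $n=1$ case already does once.
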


\begin{proof}
We construct by structural recursion on the pattern
$\pat$ a translation
\[
  \tau\colon\text{path patterns}\pfunc\mathrm{FO[TC]},
  \qquad
  \tau(\pat)=
    \phi_{\pat}\bigl(\barx,\bar{x}_{\src},\bar{x}_{\tgt}\bigr),
\]
and then verify the required \emph{soundness} and \emph{completeness}
conditions.

For every sub-pattern we display the defining
$\mathrm{FO[TC]}$-formula of~$\tau$.
Free variables are always
$(\bar{x}, \bar{x}_{\src},\bar{x}_{\tgt})$.
{
By definition, the six relations used throughout the proof are the
components of this graph view:
\[
\begin{aligned}
  N      \df \semd{Q_{1}},   E \df \semd{Q_{2}},
  \src \df \semd{Q_{3}}, 
  \tgt \df \semd{Q_{4}},
  \lab \df \semd{Q_{5}},
  \prop  \df \semd{Q_{6}}.
\end{aligned}
\]
Unless otherwise stated the tuple of free variables
\(\sch{\pat}\)
is kept implicit.}

\begin{tabular}{@{}ll@{}}
\!\!\textit{(T1) Node} $(\bar{v})$\!: &
  $\tau\bigl((\bar{v})\bigr)\;{}\df{}\;
      N(v)\;\land\;
      \bar{v}=\bar{x}_{\src}\;\land\;
      \bar{x}_{\src}=\bar{x}_{\tgt}$ \\[4pt]

\!\!\textit{(T2) Forward-edge} $\overset{e}{\rightarrow}$\!: &
  $\tau\bigl(\overset{e}{\rightarrow})\;\df\;
      E(e)\;\land\;
      \src(e,\bar{x}_{\src})\;\land\;
      \tgt(e,\bar{x}_{\tgt})$ \\[4pt]

\!\!\textit{(T3) Backward-edge} $\overset{e}{\leftarrow}$\!: &
  $\tau\bigl(\overset{e}{\leftarrow})\;\df\;
      E(e)\;\land\;
      \src(e,\bar{x}_{\tgt})\;\land\;
      \tgt(e,\bar{x}_{\src})$ \\[4pt]
\!\!\textit{(T4) Concatenation} $\pat_{1}\concat\pat_{2}$\!: &
   $\tau(\pat_{1}\concat\pat_{2})\;\df\;
    \exists \bar y\;
      \bigl(
        \tau(\pat_{1})(\barx_{1},\bar{x}_{\src},\bar y)\;\land\;
        \tau(\pat_{2})(\barx_{2}, \bar y,\bar{x}_{\tgt})
     \bigr)$ \\[4pt]

\!\!\textit{(T5) Disjunction} $\pat_{1}+\pat_{2}$\!: &
  $\tau(\pat_{1}+\pat_{2})\;\df\;
     \tau(\pat_{1})\;\lor\;\tau(\pat_{2})$ \\[4pt]


\!\!\textit{(T6) Bounded repetition} $\pat^{n..m}$\\[-2pt]
\hphantom{\textit{(T6) Bounded repetition}}\!$(0\!\le n\!\le m)$\!: &
$\displaystyle
\begin{aligned}[t]
  \tau(\pat^{n..m}) &\df \bigvee_{r=n}^{m}\tau(\pat^{r}),\\
  \tau(\pat^{0})    &\df (\bar{x}_{\src}=\bar{x}_{\tgt}),\;
  \tau(\pat^{r+1})  &\df \tau(\pat\concat\pat^{r})
\end{aligned}$ \\[8pt]

\!\!\textit{(T7) Filtering} $\pat\langle\theta\rangle$\!: &
  $\tau\!\bigl(\pat\langle\theta\rangle\bigr)\;\df\;
     \tau(\pat)\;\land\;
     \theta^{\mathrm{FO}}(\barx,\bar{x}_{\src},\bar{x}_{\tgt})$ \\[4pt]

\!\!\textit{(T8) Kleene star} $\pat^{*}$\!: &
  $\displaystyle
  \tau(\pat^{*})(\bar{x}_{\src},\bar{x}_{\tgt})\;\df\;
 \exists \bar x:  \bigl(
     \mathrm{TC}_{\bar{u},\bar{v}}\;
       \tau(\pat)(\barx,\bar{u},\bar{v})
   \bigr)\!(\bar{x}_{\src},\bar{x}_{\tgt})$
\end{tabular}

In (T7) the formula $\theta^{\mathrm{FO}}$ is the standard first-order
translation of the filter condition, using the base relations
$N,E,\lab,\src,\tgt,\prop$.
{
Let $\theta$ be a filter drawn from the grammar of
Figure \ref{fig:grammar-core-pgq}.  
We recursively build an $\mathrm{FO}[\mathrm{TC}]$ formula
\(
  \theta^{\mathrm{FO}}\bigl(\bar{x},\bar{x}_{\mathsf{src}},\bar{x}_{\mathsf{tgt}}\bigr)
\)
whose free variables are precisely the pattern variables $\bar{x}$ together
with the distinguished endpoints $\bar{x}_{\mathsf{src}},\bar{x}_{\mathsf{tgt}}$.

We have:
\[
\begin{array}{rcl}
  (x.k = x'.k')^{\mathrm{FO}}
     &:=& \displaystyle
     \exists v\,v'.\;
       \prop(x,k,v)\;\land\;\prop(x',k',v')\;\land\;v=v' \\[6pt]
  \bigl(\ell(x)\bigr)^{\mathrm{FO}}
     &:=& \lab(x,\ell) \\[6pt]
  (\neg\theta)^{\mathrm{FO}}
     &:=& \lnot\,\theta^{\mathrm{FO}} \\[6pt]
  (\theta_1 \wedge \theta_2)^{\mathrm{FO}}
     &:=& \theta_1^{\mathrm{FO}} \;\land\; \theta_2^{\mathrm{FO}} \\[6pt]
  (\theta_1 \vee \theta_2)^{\mathrm{FO}}
     &:=& \theta_1^{\mathrm{FO}} \;\lor\; \theta_2^{\mathrm{FO}}
\end{array}
\]
And these are all the options for $\theta^{\mathrm{FO}}$.}

{Note that in (T8), here, and in what follows, we treat $\pat^{*}$ as the unbounded repetition, i.e., $\pat^{*}\equiv\pat^{0..\infty}$. In our FO[TC] translation, (T6) covers only the bounded case $0\le n\le m<\infty$ via a finite disjunction, whereas the unbounded case requires transitive closure, hence (T8) handles $\pat^{*}$ separately.}

Let
\[
  \gdb\; \df\;
  \pgViewExtendedUnion\!\bigl(
      \semd{Q_{1}},\dots,\semd{Q_{6}}
  \bigr).
\]
We prove Soundness and Completeness by induction on the structure of~$\pat$:

\emph{Base cases} (T1)-(T3) follow by immediate inspection of the
semantics of nodes and edges.

\emph{Inductive cases.}
\begin{description}
\item[\textit{(T4) Concatenation} $\pat_{1}\concat\pat_{2}$.]
\mbox{}\par\addvspace{2pt}

\begin{enumerate}
\item \emph{Soundness.}\;
  {
  Suppose $(\bar s, \bar t,\mu)\in\sem{\pat_{1}\concat\pat_{2}}_{G}$.
  By the semantic clause for concatenation there exist
  $u,\mu_{1},\mu_{2}$ such that
  \(
      (\bar s, \bar u,\mu_{1})\in\sem{\pat_{1}}_{G},
      (\bar u, \bar t,\mu_{2})\in\sem{\pat_{2}}_{G},
      \mu=\mu_{1}\cup\mu_{2}.
  \)
  The inductive hypothesis applied to \(\pat_{1}\) and \(\pat_{2}\)
  yields assignments
  \(
      \mu'_{1}\in\semd{\phi_{\pat_{1}}},
      \mu'_{2}\in\semd{\phi_{\pat_{2}}},
  \)
  such that
  \(
      \mu_{1}\sim_{\bar{x}_{1}}\mu'_{1},\;
      \mu_{2}\sim_{\bar{x}_{2}}\mu'_{2},\;
      \mu'_{1}(\bar{x}_{\src})=s,\;
      \mu'_{1}(y)=u=\mu'_{2}(y),\;
      \mu'_{2}(\bar{x}_{\tgt})=t.
  \)
  Because \(\mu'_{1}\) and \(\mu'_{2}\) agree on the shared
  variable~\(y\), their union
  \(
    \mu' := \mu'_{1}\cup\mu'_{2}
  \)
  is well defined and satisfies
  \(
       \mu' \in \semd{\,\exists y\,
           \bigl(\phi_{\pat_{1}}(\bar{x}_{1},\bar{x}_{\src},y)\land
                 \phi_{\pat_{2}}(\bar{x}_{2},y,\bar{x}_{\tgt})\bigr)}
       \;=\;
       \semd{\phi_{\pat_{1}\concat\pat_{2}}}.
  \)
  Moreover \(\mu\sim_{\bar{x}}\mu'\), and
  \(s=\mu'(\bar{x}_{\src})\), \(t=\mu'(\bar{x}_{\tgt})\),
  as required.}

\item \emph{Completeness.}\;
  Let $\tup'$ satisfy $\phi_{\pat_{1}\concat\pat_{2}}$ in~$\db$, {
  i.e.  \(
      \mu' \in \semd{\phi_{\pat_{1}\concat\pat_{2}}}
  \).}
Then there exists $\bar u$ (the witness for~$\bar y$) such that
  \(
    \tup'_{1}\!=\!\tup'{\restriction_{\barx_{1}\cup\{\bar{x}_{\src},y\}}}
  \)
  satisfies $\phi_{\pat_{1}}$ and
  \(
    \tup'_{2}\!=\!\tup'{\restriction_{\barx_{2}\cup\{y,\bar{x}_{\tgt}\}}}
  \)
  satisfies $\phi_{\pat_{2}}$.
  By inductive hypothesis we obtain
  \(
   (\bar s,\bar u,\tup_{1})\in\sem{\pat_{1}}_{\gdb},\;
   (\bar u,\bar t,\tup_{2})\in\sem{\pat_{2}}_{\gdb}
  \)
  with $\tup_{i}\sim_{\barx_{i}}\tup'$.
  Since they agree on the common variables, their union
  $\tup:=\tup_{1}\cup\tup_{2}$ is well-defined and
  $(\bar s,\bar t,\tup)\in\sem{\pat_{1}\concat\pat_{2}}_{\gdb}$
  while $\tup\sim_{\barx}\tup'$, {where \(\bar{x}=\bar{x}_{1}\cup
  \bar{x}_{2}\)}, proving completeness.
\end{enumerate}

\item[\textit{(T5) Disjunction} $\pat_{1}+\pat_{2}$.]
\mbox{}\par\addvspace{2pt}
\begin{enumerate}
  \item \emph{Soundness.}\;
  If $(\bar s,\bar t,\tup)\in\sem{\pat_{1}+\pat_{2}}_{\gdb}$ then
  $(\bar s,\bar t,\tup)\in\sem{\pat_{i}}_{\gdb}$ for some $i\!\in\!\{1,2\}$.
  Apply inductive hypothesis to $\pat_{i}$ to get
  $\tup'\models\phi_{\pat_{i}}$.
  Because $\phi_{\pat_{i}}$ is disjunctively embedded in
  $\phi_{\pat_{1}+\pat_{2}}$, the same $\tup'$ satisfies
  $\phi_{\pat_{1}+\pat_{2}}$.

\item \emph{Completeness.}\;
  Let $\tup'\models\phi_{\pat_{1}+\pat_{2}}$.
  Then $\tup'$ satisfies either $\phi_{\pat_{1}}$ or
  $\phi_{\pat_{2}}$, assume the former (the other case is symmetric).
  Denote the free variables of the sub patterns by
  \(\bar{x}_{1}\) and \(\bar{x}_{2}\), and set                
  \(
      \bar{x}\;:=\;\bar{x}_{1}\cup\bar{x}_{2} 
  \)
  Inductive hypothesis for $\pat_{1}$ gives
  $(\bar s, \bar t,\tup)\in\sem{\pat_{1}}_{\gdb}\subseteq
                  \sem{\pat_{1}+\pat_{2}}_{\gdb}$
  with $\tup\sim_{\barx}\tup'$, yielding completeness.
\end{enumerate}

\item[\textit{(T6) Bounded repetition} $\pat^{n..m}$
      \,$(0\!\le n\!\le m{<}\infty)$.]
\mbox{}\par\addvspace{2pt}
\begin{enumerate}
\item \emph{Soundness.}\;

  Assume $(\bar s, \bar t,\tup)\in\sem{\pat^{n..m}}_{\gdb}$.
  Then there exists $r\in[n,m]$ such that
  $(\bar s, \bar t,\tup)\in\sem{\pat^{r}}_{\gdb}$.
  We proceed by induction on~$r$.

  \begin{itemize}
  \item $r=0$:\;
    The assignment $\tup'$ with $\bar{x}_{\src}=\bar{x}_{\tgt}=s$
    satisfies $\bar{x}_{\src}=\bar{x}_{\tgt}=\tup'(\bar{x}_{\src})$,
    which is exactly $\phi_{\pat^{0}}$.
  \item $r>0$:\;
    Write $r=k+1$.
    Decompose the path as
    $(\bar s, \bar u,\tup_{1})\in\sem{\pat}_{\gdb}$ and
    $(\bar u, \bar t,\tup_{2})\in\sem{\pat^{k}}_{\gdb}$ with
    $\tup=\tup_{1}\cup\tup_{2}$.
    By inductive hypothesis for $\pat$ and for $\pat^{k}$ we
    obtain $\tup'_{1}\models\phi_{\pat}$ and
    $\tup'_{2}\models\phi_{\pat^{k}}$
    that agree on their common variables.
    Their union satisfies
    $\exists y(\phi_{\pat}\land\phi_{\pat^{k}})$,
    i.e.\ $\phi_{\pat^{k+1}}$, which appears in the outer
    disjunction of $\phi_{\pat^{n..m}}$, hence $\tup'$ is a witness.
  \end{itemize}

\item \emph{Completeness.}\;
  Let $\tup'$ satisfy
  $\displaystyle
    \phi_{\pat^{n..m}}\;=\;
      \bigvee_{r=n}^{m}\phi_{\pat^{r}}$.
  Fix the least $r$ for which $\tup'\models\phi_{\pat^{r}}$
  and prove, by induction on~$r$, the existence of
  $(\bar s, \bar t,\tup)\in\sem{\pat^{r}}_{\gdb}$ with
  $\tup\sim_{\barx}\tup'$.

  The argument mirrors soundness:
  for $r=k+1$ extract $\bar u=\tup'(\bar y)$ from the witness of the existential,
  invoke inductive hypothesis on $\pat$ and on $\pat^{k}$,
  and combine the resulting compatible mappings.
  Since $r\!\in\![n,m]$, the obtained triple lies in
  $\sem{\pat^{n..m}}_{\gdb}$ as required.
\end{enumerate}

\item[\textit{(T7) Filtering} $\pat\langle\theta\rangle$.]
\mbox{}\par\addvspace{2pt}
\begin{enumerate}
\item \emph{Soundness.}\;
  If $(\bar s, \bar t,\tup)\in\sem{\pat\langle\theta\rangle}_{\gdb}$ then
  $(\bar s, \bar t,\tup)\in\sem{\pat}_{\gdb}$ and
  the valuation $\tup$ satisfies the filter
  $\theta$ inside $\gdb$.
  Inductive hypothesis gives
  $\tup'\models\phi_{\pat}$, and the truth of $\theta$ carries
  over verbatim to the FO translation $\theta^{\mathrm{FO}}$.
  Therefore $\tup'\models
          \phi_{\pat}\land\theta^{\mathrm{FO}}
          =\phi_{\pat\langle\theta\rangle}$.

\item \emph{Completeness.}\;
  Conversely, let
  $\tup'\models\phi_{\pat\langle\theta\rangle}
              =\phi_{\pat}\land\theta^{\mathrm{FO}}$.
  Then $\tup'$ satisfies $\phi_{\pat}$, so by inductive hypothesis there is $(\bar s, \bar t,\tup)\in\sem{\pat}_{\gdb}$ with
  $\tup\sim_{\barx}\tup'$.
  Because $\tup'$ makes $\theta^{\mathrm{FO}}$ true,
  the original mapping $\tup$ fulfils the semantic filter,
  hence $(\bar s, \bar t,\tup)\in\sem{\pat\langle\theta\rangle}_{\gdb}$,
  completing the argument.
\end{enumerate}

\item[\textit{(T8) Kleene star} $\pat^{*}$.]
\mbox{}\par\addvspace{2pt}
\begin{enumerate}
\item \emph{Soundness.}\;
  Assume
  \(
    (\bar s, \bar t,\tup)\in\sem{\pat^{*}}_{\gdb}.
  \)
  By the definition of $\sem{\cdot}_{\gdb}$ there exists
  an integer $n\!\ge\!0$, a sequence of nodes
$\bar u_{0}=\bar s,\dots,\bar u_{n}=\bar t$
  and mappings
  \(
    \tup_{1},\dots,\tup_{n}
  \)
  such that, for every
  $0\le i<n$,
$(\bar u_{i},\bar u_{i+1},\tup_{i+1})\in\sem{\pat}_{\gdb}$
  and
  \(
    \tup=\tup_{1}\cup\dots\cup\tup_{n}
  \)

  For each segment apply the inductive hypothesis~(a) to obtain
  assignments
  \[
\tup'_{i+1}\models\phi_{\pat}(\barx,\bar u_{i},\bar u_{i+1})
    \qquad(0\le i<n).
  \]
  Because $\mathrm{TC}$ is reflexive
  (i.e.\ $(\mathrm{TC}_{u,v}\phi)(a,a)$ always holds,
  capturing the empty, length-$0$ path),
  the existence of the chain
  \(
    u_{0},\dots,u_{n}
  \)
  witnesses
  \(
    \bigl(\mathrm{TC}_{\bar u,\bar v}\,
         \phi_{\pat}(\barx,\bar u,\bar v)\bigr)(\bar s,\bar t).
  \)
  Let
  \(
    \tup' \df \tup{\restriction_{\barx}}
        \cup\{\bar{x}_{\src}\mapsto \bar s,\,
              \bar{x}_{\tgt}\mapsto \bar t\}.
  \)
  Then
  \(
    \tup'\models\phi_{\pat^{*}}(\barx,\bar{x}_{\src},\bar{x}_{\tgt}),
  \)
  establishing soundness.

\item \emph{Completeness.}\;
  Conversely, suppose an assignment
  \(
    \tup'\models\phi_{\pat^{*}}(\barx,\bar{x}_{\src},\bar{x}_{\tgt})
  \)
  in~$\db$.
  Write
  \(
    \bar s\df\tup'(\bar{x}_{\src}),\;
    \bar t\df\tup'(\bar{x}_{\tgt}).
  \)
  By the semantics of
  \(
    \bigl(\mathrm{TC}_{\bar u,\bar v}\phi_{\pat}\bigr)(\bar s,\bar t)
  \)
  there exists $n\!\ge\!0$ and a sequence
$\bar u_{0}=\bar s,\dots,\bar u_{n}=\bar t$
  such that
  \(
    \phi_{\pat}(\barx,\bar u_{i},\bar u_{i+1})
  \)
  holds under~$\tup'$
  for all $0\le i<n$.
  Apply the inductive hypothesis to each of these $n$
  instantiations, obtaining
  triples
  \(
(\bar u_{i},\bar u_{i+1},\tup_{i+1})\in\sem{\pat}_{\gdb}
  \)
  with
  \(
    \tup_{i+1}\sim_{\barx}\tup'.
  \)
  Since all $\tup_{i+1}$ agree on $\barx$ they are pairwise
  compatible, let
  \(
    \tup\df\tup_{1}\cup\dots\cup\tup_{n}
  \).
  Then
  \(
    (\bar s, \bar t,\tup)\in\sem{\pat^{*}}_{\gdb},
  \)
  and
  \(
    \tup\sim_{\barx}\tup',
  \)
  proving completeness.
\end{enumerate}
\end{description}

All pattern constructors have now been considered- therefore
$\phi_{\pat}$ satisfies both clauses, completing the proof.
\end{proof}

\begin{reptheorem}{\ref{thm:erwpgq-in-fotc}}
    \thmerwpgqinfotc
\end{reptheorem}
\begin{proof}
We define, by structural recursion on the query syntax of
$\erwpgq$ (Figure~\ref{fig:grammar-core-pgq}), a translation
\[
    \tau\colon \erwpgq \longrightarrow \mathrm{FO[TC]},
    \qquad
    \tau(Q)=\varphi_{Q}(x_1,\dots,x_n)
\]
such that for every relational database\/
$\db$ over~$\schema$,
\begin{equation}\label{eq:tau-correct}
    \semd{Q}\;=\;\semd{\varphi_{Q}(\bar{x})},
\end{equation}
where $\bar{x}$ is the tuple of all free variables of~$Q$, explicitly denoted as $(x_1,\dots,x_n)$
(by construction\/ $|\bar{x}|=\arity{Q}$).
Equation~\eqref{eq:tau-correct} is proved by induction on~$Q$,
and the required formula of the theorem is then
$\varphi_{Q}:=\tau(Q)(x_{1},\dots,x_{n})$.

  \textbf{Base cases.}
  \begin{enumerate}
      \item\emph{Relation atom.}
            For $Q=R$ with $\arity{R}=k$ put
            \[
              \tau(R)
                \;:=\;
              R(x_{1},\dots,x_{k}).
            \]
            Because $\semd{R}=R^{\db}$,~\eqref{eq:tau-correct} is immediate.

      \item\emph{Individual constant.}
            For $Q=c$ (a $1$-arity query that returns the domain element~$c$)
            let $\tau(c)(x)\;:=\;(x{=}c)$.
            Then $\semd{c}=\{(c)\}$ and
            $\semd{(x=c)}=\{(c)\}$, so~\eqref{eq:tau-correct} holds.
  \end{enumerate}

\textbf{Induction steps.}

  \textbf{Relational operators.}
Let $Q_{1},Q_{2}$ be queries with arities
$k_{1}$ and $k_{2}$, and by induction hypothesis put
\[
    \tau(Q_{i})=\varphi_{i}\!\bigl(\bar{x}_{i}\bigr)
    \quad\text{where}\quad
    \bar{x}_{i}=(x^{i}_{1},\dots ,x^{i}_{k_{i}}).
\]

(The tuples $\bar{x}_{1}$ and $\bar{x}_{2}$ are taken pairwise
distinct.)  
Define~$\tau$ on compound queries as follows:
\[
\renewcommand{\arraystretch}{1.3}
\begin{array}{rcllll}
Q_{1}\cup Q_{2} &:&
  \tau(Q)(\bar{x}_{1}) &:=&
  \varphi_{1}(\bar{x}_{1})\;\vee\;
  \varphi_{2}(\bar{x}_{1})
  & (k_{1}=k_{2},\;\bar{x}_{1}=\bar{x}_{2})\\[2pt]    
Q_{1}\setminus Q_{2} &:&
    \tau(Q)(\bar{x}_{1}) &:=&
    \varphi_{1}(\bar{x}_{1})\;\wedge\;
    \neg\varphi_{2}(\bar{x}_{1})
    & (k_{1}=k_{2})\\[2pt]
Q_{1}\times Q_{2} &:&
    \tau(Q)(\bar{x}_{1},\bar{x}_{2}) &:=&
    \varphi_{1}(\bar{x}_{1})\;\wedge\;
    \varphi_{2}(\bar{x}_{2})\\[2pt]
\pi_{i_{1},\dots ,i_{m}}(Q_{1}) &:&
    \tau(Q)(x_{1},\dots ,x_{m}) &:=&
    \exists \bar{z}\;
      \bigl(
        \varphi_{1}(\bar{z})\;\wedge\;
        \textstyle\bigwedge_{j=1}^{m} x_{j}=z_{i_{j}}
      \bigr)\\[2pt]
\sigma_{\theta}(Q_{1}) &:&
    \tau(Q)(\bar{x}_{1}) &:=&
    \varphi_{1}(\bar{x}_{1})\;\wedge\;\theta(\bar{x}_{1}).
\end{array}
\]
For selections we fix a first order (quantifier free)
formula $\theta$ whose only free variables occur in the tuple
$\bar{x}$, we denote this dependence by $\theta(\bar{x})$.

  In each row the semantic equivalence~\eqref{eq:tau-correct}
  follows directly from the semantics of relational algebra.

  \textbf{Inductive \erwpgq-specific case.}
  Let
  $Q=\patExtended{\return}(\bar{Q})$
  with
  $\bar{Q}=(Q_{1},\dots,Q_{6})$.
  By the outer induction hypothesis we have already constructed
  translations
  $\tau(Q_{i})=\varphi_{i}$ for every component.
  Denote
  \(
      G\;=\;
      \pgViewExtendedUnion \!\bigl(
        \semd{Q_{1}},\dots,\semd{Q_{6}}
      \bigr).
  \)
  Applying Lemma~\ref{lem:pattern-translation} to the pattern
  $\pat$ and to the $6$-tuple
  $(\varphi_{1},\dots,\varphi_{6})$
  yields an $\mathrm{FO[TC]}$ formula
  \(
      \varphi_{\pat}
      (\,\bar{y},\bar{x}_{\src},\bar{x}_{\tgt}\,)
  \)
  whose free variable tuple is
  $\{\bar{x}_{\src}\}\cup\bar{y}\cup\{\bar{x}_{\tgt}\}$
  and that satisfies the soundness and completeness
  clauses of the lemma with respect to~$G$.

  Let\/ $\bar{z}=\sch{\patExtended{\return}}$
  be the variables mentioned in $\return$.
  Define
  \[
     \tau\!\bigl(\,\patExtended{\return}(\bar{Q})\,\bigr)
     (\bar{z})
     \;:=\;
     \exists \bar{x}_{\src}\,\bar{x}_{\tgt}
     \; \varphi_{\pat}\bigl(\bar{z},\bar{x}_{\src},\bar{x}_{\tgt}\bigr).
  \]
  For any assignment~$\beta$ to~$\bar{z}$ we have
\begin{align*}
  \db\models\tau(Q)[\beta]
      &\;\Longleftrightarrow\;
        \text{there exist }a,b
        \text{ with }
        G\models\varphi_{\pat}[\beta\cup\{\bar{x}_{\src}\!\mapsto\!a,
                                        \bar{x}_{\tgt}\!\mapsto\!b\}] \\
      &\;\Longleftrightarrow\;
        \beta(\bar{z})\in
        \sem{\patExtended{\return}(\bar{Q})}_{\db},
\end{align*}
establishing~\eqref{eq:tau-correct} for the pattern case.

  The translation~$\tau$ is defined for every syntactic construct,
  and each clause preserves semantics
  by either direct calculation (relational operators)
  or by the lemma (for the pattern case).
  Hence property~\eqref{eq:tau-correct} is established for
  all\/ $Q\in\erwpgq$.
  Set $\varphi_{Q}:=\tau(Q)$.
  Then~\eqref{eq:tau-correct} gives
  $\semd{Q}=\semd{\varphi_{Q}(\bar{x})}$,
  proving the inclusion
  $\erwpgq\subseteq\mathrm{FO[TC]}$ as claimed.
\end{proof}

\begin{lemma}\label{lem:tc-translation}
    \lemtctranslation
\end{lemma}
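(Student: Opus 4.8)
The plan is to realise $\varphi^{\mathrm{TC}}$ with a single use of the extended pattern-matching construct, run on \emph{one} property-graph view whose identifiers have arity $2k+\ell$ and pack into a single tuple an endpoint $k$-tuple of a $\varphi$-step together with the parameter $\ell$-tuple $\bar p$. I will assume, as supplied by the outer induction in the proof of Theorem~\ref{thm:fotc-contained-in-erwpgq} (well founded because $\varphi$ is a strict subformula of $\varphi^{\mathrm{TC}}$), an $\erwpgq$ query $Q_{\varphi}$ of arity $2k+\ell$ with $\semd{Q_{\varphi}}=\{(\bar a,\bar b,\bar c)\mid \db\models\varphi[\bar a,\bar b,\bar c/\bar u,\bar v,\bar p]\}$; I will also use the genericity of $\erwpgq$ queries, which forces every tuple of $\semd{Q_{\varphi}}$ to have components in $\adom(\db)$. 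Finally I write $A$ for a relational-algebra query returning exactly the active domain (the union of the unary column-projections of all base relations).

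\textbf{The construction.}
From $Q_{\varphi}$ and $A$ I would assemble six relational-algebra (hence $\erwpgq$) queries $Q_{1},\ldots,Q_{6}$ computing, respectively: node identifiers $\{(\bar a,\bar a,\bar c)\mid \bar a\in A^{k},\ \bar c\in A^{\ell}\}$, the ``diagonal'' tuples, with one copy $(\bar a,\bar a,\bar c)$ of $\bar a$ for each parameter block $\bar c$; edge identifiers $\{(\bar a,\bar b,\bar c)\mid \db\models\varphi[\bar a,\bar b,\bar c],\ \bar a\neq\bar b\}$, obtained from $Q_{\varphi}$ by a selection deleting the diagonal; source $(\bar a,\bar b,\bar c)\mapsto(\bar a,\bar a,\bar c)$ and target $(\bar a,\bar b,\bar c)\mapsto(\bar b,\bar b,\bar c)$, obtained by projections that reorder and duplicate indices; and two empty relations of the appropriate arities. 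Writing $R_{i}:=\semd{Q_{i}}$, one has $R_{1}\cap R_{2}=\varnothing$ (diagonal versus non-diagonal tuples), and the remaining conditions of Definition~\ref{def:pgView-n} hold by construction, so $G:=\pgViewExtendedUnion(R_{1},\ldots,R_{6})$ is well defined. I then set
\[
  Q_{\varphi^{\mathrm{TC}}}\ :=\ \pi_{\mathrm{out}}\!\bigl(\patExtended{\return}(Q_{1},\ldots,Q_{6})\bigr),
  \qquad
  \pat:=(\bar z_{1})\xrightarrow{}^{\!\!*}(\bar z_{2}),\quad \return:=(\bar z_{1},\bar z_{2}),
\]
where $\bar z_{1},\bar z_{2}$ are pattern variables bound, in the extended semantics, to $(2k+\ell)$-tuple identifiers (the query $\patExtended{\return}(Q_{1},\ldots,Q_{6})$ is the reachability pattern $\pat_{\mathit{reach}}$ of the proof of Theorem~\ref{thm:fotc-contained-in-erwpgq}, now over $G$), and $\pi_{\mathrm{out}}$ is the relational-algebra projection sending each matched pair of node identifiers $\bigl((\bar a,\bar a,\bar c),(\bar b,\bar b,\bar c)\bigr)$ to $(\bar a,\bar b,\bar c)$. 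By construction $\sch{Q_{\varphi^{\mathrm{TC}}}}=\bar x\cup\bar y\cup\bar p$, as required.

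\textbf{Correctness.}
I would then prove $\semd{Q_{\varphi^{\mathrm{TC}}}}=\semd{\varphi^{\mathrm{TC}}(\bar x,\bar y,\bar p)}$ by unwinding both sides, using three points. (i) No edge of $G$ links node copies with different parameter blocks, so every path of $G$ stays inside a single copy; by the correctness of $Q_{\varphi}$ each edge step is exactly one $\varphi$-step and conversely (the converse using the genericity of $Q_{\varphi}$ to place endpoints in $R_{1}$), whence a length-$\ge 1$ path from $(\bar a,\bar a,\bar c)$ to $(\bar b,\bar b,\bar c)$ exists iff there is a sequence $\bar a=\bar a_{0},\ldots,\bar a_{n}=\bar b$ with $n\ge1$, all $\bar a_{i}\neq\bar a_{i+1}$ and all $\varphi[\bar a_{i},\bar a_{i+1},\bar c]$. (ii) Self-loop steps $\varphi[\bar a,\bar a,\bar c]$ were dropped from $R_{2}$, at no cost: any $\mathrm{TC}$-witnessing sequence with a repeated consecutive tuple contracts to one without it, and an all-self-loop sequence collapses to a single tuple, which is the degenerate case of (iii). (iii) Since $\mathrm{TC}$ is reflexive in this paper's convention (the length-$0$ sequence is admitted), $\semd{\varphi^{\mathrm{TC}}}$ contains $(\bar a,\bar a,\bar c)$ for \emph{every} $\bar a\in\adom(\db)^{k}$ and \emph{every} $\bar c\in\adom(\db)^{\ell}$, including parameters $\bar c$ occurring in no satisfied instance of $\varphi$; these are produced precisely by the length-$0$ matches of $\pat_{\mathit{reach}}$, which is exactly why $R_{1}$ must range over \emph{all} diagonal tuples over the active domain rather than only over $\pi_{\bar p}(\semd{Q_{\varphi}})$. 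Combining (i)--(iii) gives both inclusions.

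\textbf{Expected main obstacle.}
The delicate point is reconciling two features of the formalism: in an $n$-ary property-graph view the node and edge identifiers must have the \emph{same} arity yet be \emph{disjoint} (Definition~\ref{def:pgView-n}), and $\erwpgq$ offers no primitive for iterating a graph construction over a set of parameter tuples. The diagonal/non-diagonal encoding settles disjointness with no auxiliary ``tag'' constant, so the construction is insensitive to $|\adom(\db)|$ (in particular the one- and zero-element cases need no separate treatment), and folding $\bar p$ into the identifiers lets one graph carry all parameter copies at once, so a single reachability query plus a projection realises the ``$\bigcup_{\bar c}\pat_{\mathit{reach}}(G_{\bar c})\times\{\bar c\}$'' of the proof sketch of Theorem~\ref{thm:fotc-contained-in-erwpgq} without any actual iteration. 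The remaining subtlety is the reflexivity of $\mathrm{TC}$, handled by point (iii); everything else is routine relational-algebra bookkeeping.
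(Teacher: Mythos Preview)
Your proposal is correct and takes a genuinely different route from the paper's proof.

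The paper builds, for each parameter tuple $\bar c$, a separate graph $G_{\bar c}$ with identifiers of arity $2k$ (nodes are the duplicated tuples $\pi_{\bar u,\bar u}(E_{\bar c})\cup\pi_{\bar v,\bar v}(E_{\bar c})$, edges are $E_{\bar c}=\pi_{\bar u,\bar v}(\sigma_{\bar p=\bar c}(Q_{\varphi}))$), runs the reachability pattern on each, and then assembles the results as $\bigcup_{\bar c\in C}\bigl(\pat_{\mathit{reach}}(G_{\bar c})\times\{\bar c\}\bigr)$, declaring that this indexed union is realised by a join with $\sigma_{\bar p=\bar c}(C)$. You instead fold the parameter block into the identifiers themselves, obtaining a \emph{single} graph with identifiers of arity $2k+\ell$, so that one application of $\pat_{\mathit{reach}}$ followed by a projection does the whole job. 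This neatly sidesteps the paper's somewhat informal ``union over~$\bar c$'' step. Your diagonal/non-diagonal device for node--edge disjointness is also more explicit than the paper's: the paper uses the same duplication idea for nodes but does not strip self-loops from $E_{\bar c}$, so disjointness is left implicit. Finally, you treat the reflexive case of $\mathrm{TC}$ head-on by letting $R_{1}$ range over \emph{all} active-domain diagonals, whereas the paper's node set contains only endpoints of actual $\varphi$-edges and its parameter set is restricted to $C=\pi_{\bar p}(Q_{\varphi})$, so tuples $(\bar a,\bar a,\bar c)$ with $\bar c\notin C$ are not produced there.

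The trade-off is arity: your identifiers have arity $2k+\ell$ while the paper's have arity~$2k$. For the present lemma (which only asks for an $\erwpgq$ query) this is immaterial, but it is worth keeping in mind for the downstream arity-tracking in Theorem~\ref{thm:fotcn-in-pgqextn}.
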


\begin{proof}
We prove by structural induction on the $\mathrm{FO[TC]}$ formula
$\psi$.  The induction hypothesis gives, for every proper sub-formula,
an \erwpgq~query with equivalent semantics.  The only non-first-order
construction is {transitive closure, so it suffices to treat the
case
\[
  \psi(\bar x,\bar y,\bar p)
  \;=\;
  \mathrm{TC}_{(\bar u,\bar v)}
    \bigl[\varphi(\bar u,\bar v,\bar p)\bigr]
    (\bar x,\bar y),
\]
By the induction
hypothesis there exists an \erwpgq~query
\(Q_{\varphi}\) such that
\(
  \sem{Q_{\varphi}}_{\db}=\sem{\varphi}_{\db}
\)
for every database~$\db$.

Let
\[
  \mathit{C}
  \;:=\;
  \pi_{\bar p}\bigl(Q_{\varphi}\bigr)
  \qquad(\subseteq\ADOM^{\lvert\bar p\rvert}).
\]
Intuitively, \(\mathit{C}\) enumerates every parameter tuple
\(\bar c\) for which \(\varphi\) is potentially satisfiable.

For each concrete parameter tuple
\(\bar c\in \mathit{C}\) define the \erwpgq~sub-query
\[
  E_{\bar c}
  \;:=\;
  \pi_{\bar u,\bar v}\!
    \bigl(
      \sigma_{\bar p=\bar c}(Q_{\varphi})
    \bigr).
\]
By construction,
\(
  \sem{E_{\bar c}}_{\db}
  =\{
     (\bar a,\bar b)
     \mid
     \db\models
     \varphi(\bar a,\bar b,\bar c)
   \}
\).
Write
{
\(
  N_{\bar c}
  \;:=\;
  \pi_{\bar u,\bar u}(E_{\bar c})
  \;\cup\;
  \pi_{\bar v,\bar v}(E_{\bar c})
\) for its node set.
The duplication of each $k$-tuple ensures that
$N_{\bar c}$ has arity $2k$, exactly matching the arity of
$E_{\bar c}$.  We impose this merely because
\(\pgView^{\mathrm{ext}}\) (Definition~\ref{def:pgView-n}) requires these relations to share the same arity; as pointed out in
Remark~\ref{rmk:arity-remark}, the rest of the argument would go
through even without this syntactic uniformity.}

Using the extended graph constructor from Section~\ref{sec:extensions},
\[
  G_{\bar c}
  \;:=\;
  \pgViewExtendedUnion\!
     \bigl(
       N_{\bar c},\,
       E_{\bar c},\,
       \pi_{\bar v}(E_{\bar c}),\,
       \pi_{\bar u}(E_{\bar c}),\,
       \varnothing,\,\varnothing
     \bigr).
\]
Thus \(G_{\bar c}\) has
\(\lvert\bar u\rvert=\lvert\bar v\rvert=k\)-tuples as
nodes.

Denote by
\[
  \pat_{\mathit{reach}}
  \;:=\;
  (\bar x)\;\xrightarrow{}^{\!\!*}\;(\bar y)
\]
the $k$-tuple variant of the standard
two-node Kleene-star pattern.
When evaluated on \(G_{\bar c}\) it returns the relation
\(
  \{
    (\bar a,\bar b)
    \mid
    \text{a path from }\bar a\text{ to }\bar b\text{ exists in }G_{\bar c}
  \}
\).

\erwpgq{} allows constants to be injected via selections, hence the
Cartesian product in
\[
  Q_{\varphi^{\mathrm{TC}}}
  \;:=\;
  \bigcup_{\bar c\in\mathit{C}}
    \Bigl(
      \pat_{\mathit{reach}}\!\bigl(G_{\bar c}\bigr)
      \;\times\;
      \{\bar c\}
    \Bigr)
\]
is realized by the ordinary join
\(
  \pat_{\mathit{reach}}(G_{\bar c})
  \Join
  \sigma_{\bar p=\bar c}(\mathit{C}).
\)
Because \erwpgq~supports union,
selection, projection and joins, the query
\(Q_{\varphi^{\mathrm{TC}}}\) is again an \erwpgq~expression whose
schema is exactly \(\bar x\cup\bar y\cup\bar p\).

Now, we prove the Correcntess: Fix a database~$\db$ and a valuation
\(\mu\colon\bar x\cup\bar y\cup\bar p\to\ADOM\).
Write \(\bar a:=\mu(\bar x)\), \(\bar b:=\mu(\bar y)\),
and \(\bar c:=\mu(\bar p)\).

\begin{itemize}
  \item[($\Rightarrow$)]
        Assume
        \(
          \mu\in
          \sem{\varphi^{\mathrm{TC}}}_{\db}
        \).
        Then by the semantics of transitive closure, there exists a
        path
        \(
          \bar a=\bar t_{0},
          \bar t_{1},\dots,
          \bar t_{m}=\bar b
        \)
        such that
        \(
          \db\models
          \varphi(\bar t_{i},\bar t_{i+1},\bar c)
        \)
        for every $0\le i<m$.
        Hence $(\bar t_{i},\bar t_{i+1})\in\sem{E_{\bar c}}_{\db}$
        and the same sequence is a path in
        $G_{\bar c}$, showing
        \(
          (\bar a,\bar b)\in
          \sem{\pat_{\mathit{reach}}(G_{\bar c})}_{\db}
        \).
        Because $\bar c\in\sem{\mathit{C}}_{\db}$,
        the join contributes
        $(\bar a,\bar b,\bar c)$ to
        $\sem{Q_{\varphi^{\mathrm{TC}}}}_{\db}$,
        i.e.\ \(\mu\in\sem{Q_{\varphi^{\mathrm{TC}}}}_{\db}\).

  \item[($\Leftarrow$)]
        Conversely, suppose
        \(
          \mu\in
          \sem{Q_{\varphi^{\mathrm{TC}}}}_{\db}
        \).
        Then there is a parameter tuple
        \(\bar c\in\sem{\mathit{C}}_{\db}\)
        such that
        \(
          (\bar a,\bar b)\in
          \sem{\pat_{\mathit{reach}}(G_{\bar c})}_{\db}
        \).
        Therefore a path
        \(
          \bar a=\bar t_{0},
          \bar t_{1},\dots,
          \bar t_{m}=\bar b
        \)
        exists in $G_{\bar c}$, meaning
        \(
          (\bar t_{i},\bar t_{i+1})
          \in
          \sem{E_{\bar c}}_{\db}
        \)
        and thus
        \(
          \db\models
          \varphi(\bar t_{i},\bar t_{i+1},\bar c)
        \)
        for every edge of the path.
        By the definition of $\mathrm{TC}$,
        \(
          \db\models
          \varphi^{\mathrm{TC}}(\bar a,\bar b,\bar c)
        \),
        i.e.\ \(\mu\in\sem{\varphi^{\mathrm{TC}}}_{\db}\).
\end{itemize}

\noindent
Since the two inclusions hold for every $\db$ and every valuation,
\(
  \sem{Q_{\varphi^{\mathrm{TC}}}}_{\db}
  =\sem{\varphi^{\mathrm{TC}}}_{\db}
\)
as required.}
\end{proof}

\begin{reptheorem}{\ref{thm:fotc-contained-in-erwpgq}}
    \thmfotcinrerwpgq
\end{reptheorem}

\begin{proof}
  We give a translation
  \(
    T : \mathrm{FO[TC]}\longrightarrow\erwpgq
  \)
  and prove by structural induction on every formula~$\varphi$ that
  \(
    \semd{T(\varphi)}=\semd{\varphi}
  \).
  Throughout the proof, let
  \(
    Q_{A}:=\bigcup_{R\in\schema}\;
             \bigcup_{1\le i\le\arity{R}}\pi_{i}(R)
  \)
  be the \emph{active domain} query, it is an
  \erwpgq~query because each $R$ is and due to finiteness of $\schema$.  We also abbreviate
  \(
    A^{(k)}:=\underbrace{Q_{A}\times\cdots\times Q_{A}}_{k\ \text{factors}}
  \).
  
  \textbf{Base cases.}

  \textbf{Atomic predicates.}
  \begin{enumerate}
    \item[] {\em Relation atom $R(\bar x)$.}  
           Put $T(R(\bar x)):=R$.
           Because the database already stores~$R$, this query returns
           exactly $\semd{R(\bar x)}$.

    \item[] {\em Equality $x=y$.}  
           Define
           \(
             T(x=y):=\sigma_{\$1=\$2}\!~(Q_{A}\times Q_{A}).
           \)
           Both projections coincide with the intended binary equality
           relation.
  \end{enumerate}

    \textbf{Induction steps.}  
    
    \textbf{Boolean connectives.}
  Assume, using induction hypothesis, $T(\varphi_1)$ and $T(\varphi_2)$ are already defined
  and have identical schemas $\bar x$ of length~$k$. 
  \[
    \begin{aligned}
      T(\varphi_1\;\wedge\;\varphi_2) &:= T(\varphi_1)\cap T(\varphi_2),\\
      T(\varphi_1\;\vee\;\varphi_2)   &:= T(\varphi_1)\cup T(\varphi_2),\\
      T(\neg\varphi_1)               &:= A^{(k)}\setminus T(\varphi_1).
    \end{aligned}
  \]
  \textbf{Quantifiers.}
  For a formula $\varphi(\bar x,y)$ with translation
  $T(\varphi)$ from the induction hypothesis, we set
  \[
    T(\exists y\,\varphi) := \pi_{1,\ldots,|\bar x|}\!\bigl(T(\varphi)\bigr),
    \qquad
    T(\forall y\,\varphi):=
      A^{(|\bar x|)}\setminus
      \pi_{1,\ldots,|\bar x|}\!\bigl(
        A^{(|\bar x|+1)}\setminus T(\varphi)
      \bigr).
  \]

  \textbf{Transitive closure.}
  Suppose
  \(
    \varphi(\bar u,\bar v,\bar p)
  \)
  is an {
  $\mathrm{FO[TC]}$} formula whose transitive closure
  \[
    \psi=\mathrm{TC}_{\bar u,\bar v}[\varphi](\bar x,\bar y)
  \]
  occurs inside a larger expression.
  Apply Lemma~\ref{lem:tc-translation} to obtain an
  \erwpgq~query
  \(
    Q_{\psi}
  \)
  with
  \(
    \semd{Q_{\psi}}=\semd{\psi}
  \).  Define
  \(
    T(\psi):=Q_{\psi}.
  \)
  All free variables of~$\psi$ appear unchanged in
  $Q_{\psi}$, so subsequent Boolean operations and
  quantifiers are well-typed.

  By construction, each clause of $T$ produces an
  \erwpgq~query whose schema coincides with the free-variable tuple
  of the input formula.  The base and inductive cases above show
  equality of semantics for every syntactic constructor, therefore
  the claim
  \(
    \semd{T(\varphi)}=\semd{\varphi}
  \)
  holds for all $\mathrm{FO[TC]}$ formulas~$\varphi$.

  For every $\mathrm{FO[TC]}$ formula $\varphi(\bar x)$ we have produced
  an \erwpgq~query $Q_\varphi:=T(\varphi)$ with the same free-variable
  tuple~$\bar x$ and identical evaluation on every relational
  database.  Hence
  \(
    \mathrm{FO[TC]}\subseteq\erwpgq,
  \)
  completing the proof.
\end{proof}

\begin{reptheorem}{\ref{thm:pgqextn-in-fotcn}}
    \thmpgqextninfotcn
\end{reptheorem}
\begin{proof}
Fix \(n\ge 1\).
Define \(\tau_{n}\colon\pgqextn\to\fotcn\) exactly as in the proof of
Theorem~\ref{thm:erwpgq-in-fotc} except for the pattern-translation
clause.

The translations for relation atoms, constants, and \(\{\cup,\setminus,
\times,\pi,\sigma\}\) are exactly the same as Theorem~\ref{thm:erwpgq-in-fotc}, their correctness arguments
carry over unchanged because those are set operation on tuples that
does not reference node or edge identifiers.
Hence \(\semd{Q}\) is the same whether identifiers are
unary or \(n\)-ary, and the original correctness proof
\(\semd{Q}=\semd{\tau(Q)}\)
carries over verbatim,
giving
\(
  \semd{Q}=\semd{\tau_{n}(Q)}.
\)
Only the pattern clause depends on identifier arity and is
adjusted in the proof below.

Let
\[
   Q=\pat^{n}_{\return}(Q_{1},\dots ,Q_{6})
\]
By induction we already have
\(
  \tau_{n}(Q_{i})=\varphi_{i} { 
  \in \fotcn}
\)
satisfying
\(
  \semd{Q_{i}}=\semd{\varphi_{i}}.
\)

Set
\(
  G_{n}:=\pgView^{n}\!\bigl(\semd{Q_{1}},\dots ,\semd{Q_{6}}\bigr)
\)
(Definition~\ref{subsec:inside-erwpgq}), where identifiers are \(n\)-tuples.

Introduce fresh \(n\)-tuples
\(
  \bar{x}_{\src},\bar{x}_{\tgt}
\).

{

Applying Lemma~\ref{lem:pattern-translation} to the pattern \(\pat\) involves explicitly translating constructs such as the Kleene star pattern \(\pat^{*}\) into a transitive closure operator as follows:
\[
  \tau(\pat^{*})(\bar{x},\bar{x}_{\src},\bar{x}_{\tgt})\;\df\;
   \bigl(
     \mathrm{TC}_{\bar{u},\bar{v}}\;
       \tau(\pat)(\bar{x},\bar{u},\bar{v})
   \bigr)\!(\bar{x}_{\src},\bar{x}_{\tgt})
\]

Here, the key reason the transitive closure operator is \(n\)-ary is precisely because the node identifiers (like \(\bar{u},\bar{v},\bar{x}_{\src},\bar{x}_{\tgt}\)) are themselves \(n\)-tuples. Thus, the transitive closure operator, which encodes reachability on these node tuples, inherently must operate over \(n\)-dimensional tuple variables. As a result, the Lemma explicitly yields a formula:
\[
   \varphi^{(n)}_{\pat}(\bar{y}, \bar{x}_{\src},\bar{x}_{\tgt})\in\fotcn,
\]
where each reachability condition is expressed through an \(n\)-ary transitive closure operator \(\mathrm{TC}^{(n)}\).}
Let \(\bar{z}=\sch{\pat^{n}_{\Omega}}\) be the variables occurring in
\(\Omega\) and define
\[
  \tau_{n}\!\bigl(\pat^{n}_{\Omega}(\bar{Q})\bigr)(\bar{z})
    :=\;
  \exists\bar{x}_{\src}\,\bar{x}_{\tgt}\;
     \varphi^{(n)}_{\pat}
       (\bar{z},\bar{x}_{\src},\bar{x}_{\tgt}).
\tag{1}
\]
By Lemma~\ref{lem:pattern-translation} with \(G_{n}\),
(1) is satisfied exactly by the tuples produced by
\(\pat^{n}_{\Omega}(\bar{Q})\), hence the induction claim holds.

Thus \(\tau_{n}(Q)(x_{1},\dots ,x_{k})\) is the desired
\(\fotcn\) formula, completing the proof.
\end{proof}

\begin{reptheorem}{\ref{thm:fotcn-in-pgqextn}}
    \thmfotcninpgqextn
\end{reptheorem}

\begin{proof}
Fix \(n\!\ge\!1\). 
We give a translation
\[
    T_{n}\colon \fotcn \longrightarrow \pgqextn,
    \qquad
    T_{n}(\varphi)=:Q_{\varphi},
\]
and prove by structural induction on every formula \(\varphi\)
that
\begin{equation}\label{eq:sem-eq}
    \semd{T_{n}(\varphi)}=\semd{\varphi}.
\end{equation}
Except for the TC-clause (treated separately below) the
definition of \(T_{n}\) coincides exactly with
the translation \(T\) used in the proof of
Theorem~\ref{thm:erwpgq-in-fotc}- we only recast its target
fragment from \(\erwpgq\) to \(\pgqextn\).

\textbf{Base cases and Boolean / quantifier clauses.}
\begin{itemize}
  \item
        \emph{Relation atom \(R(\bar{x})\) and equality \(x{=}y\).}
        Define \(T_{n}\) exactly as in the proof of
        Theorem~\ref{thm:erwpgq-in-fotc}.  The resulting queries lie in
        \(\pgqextn\) because every relation symbol is allowed as a
        primitive query.
  \item
        \emph{Boolean connectives and first-order quantifiers.}
        We use the same set algebra and projection constructions as in
        Theorem~\ref{thm:erwpgq-in-fotc}.  These operators belong to the
        core of \(\pgqextn\), equation~\eqref{eq:sem-eq} thus follows
        from the induction hypothesis.
\end{itemize}

\textbf{\(n\)-ary transitive closure clause.}
Let
\[
    \psi(\bar x,\bar y)
      \;:=\;
    \mathrm{TC}^{\,n}_{\bar u,\bar v}[\;\varphi(\bar u,\bar v,\bar p)\;]
            (\bar x,\bar y)
\]
be a sub-formula of the input.
By the outer induction hypothesis we already have
\(T_{n}(\varphi)\).
Apply Lemma~\ref{lem:tc-translation} to obtain
a \(\pgqextn\) query
\(
   Q_{\psi}
\)
whose free-variable tuple is
\(\bar x\bar y\bar p\) and whose semantics equals
\(\semd{\psi}\).
Define
\(
   T_{n}(\psi) := Q_{\psi}.
\)
Because \(\pgqextn\) admits composite identifiers and nested pattern
calls, all variables remain free and well-typed inside later Boolean
combinations or quantifier applications.

All clauses other than the TC-clause reuse verbatim the correctness
arguments of Theorem~\ref{thm:erwpgq-in-fotc}.
For the TC-clause equation~\eqref{eq:sem-eq} is guaranteed by
Lemma~\ref{lem:tc-translation}.  Hence~\eqref{eq:sem-eq} holds for every syntactic
construct, proving it for all \(\fotcn\) formulas.

For a given \(\fotcn\) formula \(\varphi(\bar{x})\) set
\(Q_{\varphi}:=T_{n}(\varphi)\).
Then \(Q_{\varphi}\in\pgqextn\), has
\(\arity{Q_{\varphi}}=|\bar{x}|\), and satisfies
\(
   \semd{\varphi(\bar{x})}=\semd{Q_{\varphi}}
\)
on every database, establishing
\(
   \fotcn \subseteq \pgqextn.
\)
\end{proof}

\end{document}